\documentclass[12pt]{article}

\usepackage{fullpage}
\usepackage{times}
\usepackage{amsmath,amsfonts,amssymb,mathabx}
\usepackage{amsthm}
\usepackage{pdfpages}
\usepackage{float}
\usepackage{eso-pic}
\usepackage{ifthen}
\usepackage{pstricks}

\topmargin=-0.4in
\topskip=0pt
\headsep=15pt
\oddsidemargin=-0.3in
\evensidemargin=0pt
\textheight=9in
\textwidth=7in
\voffset=0in

\newtheorem{theorem}{Theorem}
\newtheorem{corollary}[theorem]{Corollary}
\newtheorem{lemma}[theorem]{Lemma}
\newtheorem{claim}[theorem]{Claim}

\newtheorem{proposition}[theorem]{Proposition}

{\theoremstyle{remark} }
{\theoremstyle{definition} \newtheorem{definition}[theorem]{Definition}}

\newenvironment{proofof}[1]{ {\sc Proof of #1.}\/}{\hfill\qedsymbol}

\renewcommand{\qedsymbol}{\ensuremath{\blacksquare}}

\newcommand{\remove}[1]{}
\newcommand{\suppress}[1]{}

\newcommand{\RR}{{\mathbb{R}}}

\DeclareMathOperator{\E}{E}

\newcommand{\eps}{\epsilon}
\newcommand{\veps}{\varepsilon}

\begin{document}

\title{Convergence of T\^{a}tonnement in Fisher Markets}

\author{
Noa Avigdor-Elgrabli\thanks{Yahoo! Labs Haifa,
MATAM, Haifa 31095, Israel.
Email: {\tt noaa@yahoo-inc.com}} \and
Yuval Rabani\thanks{The Rachel and Selim Benin School of Computer
Science and Engineering, The Hebrew University of Jerusalem,
Jerusalem 91904, Israel. Email: {\tt yrabani@cs.huji.ac.il}.
Research supported by Israel Science Foundation grant number
856-11 and by the Israeli Center of Excellence on Algorithms. Part
of this work was done while visiting Microsoft Research.}
\and Gala Yadgar\thanks{Computer Science Department,
Technion---Israel Institute of Technology, Haifa 32000, Israel.
Email: {\tt gala@cs.technion.ac.il}. This work was supported in
part by the Israel Ministry of Industry, Trade, and Labor under
the Net-HD Consortium, and by Microsoft.} }

\date{\today}

\setcounter{footnote}{3}
\maketitle

\begin{abstract}
Analyzing simple and natural price-adjustment processes that
converge to a market equilibrium is a fundamental question in
economics. Such an analysis may have implications in economic
theory, computational economics, and distributed systems.
T\^{a}tonnement, proposed by Walras in 1874, is a process
by which prices go up in response to excess demand, and
down in response to excess supply. This paper analyzes
the convergence of a time-discrete t\^{a}tonnement process, a
problem that recently attracted considerable attention of computer
scientists. We prove that the simple t\^{a}tonnement process that
we consider converges (efficiently) to equilibrium prices and
allocation in markets with nested CES-Leontief utilities, generalizing 
some of the previous convergence proofs for more restricted
types of utility functions.
\end{abstract}

\thispagestyle{empty}
\newpage
\setcounter{page}{1}

%%%%%%%%%%%%%%%%%%%%%%%%%%%%%%%%%%%%%%%%%%
%%%%%%%%%    Introduction
%%%%%%%%%%%%%%%%%%%%%%%%%%%%%%%%%%%%%%%%%%
\section{Introduction}

\paragraph{Motivation.}
General equilibrium theory, a cornerstone of microeconomics,
deals with {\em markets} that consist of {\em agents} that are
endowed with {\em goods} (or money). Each agent wants to
maximize its {\em utility}. This may involve exchanging the initial
endowment of goods for goods that other agents hold. The exchange
of goods against other goods (or money) is governed by prices
that are set to guarantee a {\em market equilibrium} where
supply equals demand. In this context, an issue that was
raised already by the founders of this theory is that justifying
market equilibrium as a likely outcome of exchange requires
suggesting a plausible market dynamic that leads to equilibrium.
In other words, proving that some simple, natural,
decentralized price-adjustment process converges quickly to
a market equilibrium is a fundamental challenge of general
equilibrium theory. Indeed, Walras~\cite{Wal74} proposed
such a process which he named {\em t\^{a}tonnement}. In
this process, prices are adjusted upwards in response to
excess demand and downwards in response to excess supply.
(Notice that while buyers are expected to
react to price changes by maximizing their utility subject to
their budget constraints, the price changes themselves cannot
be justified based on the incentives of the sellers.)
In modern economic thought, t\^{a}tonnement is regarded as
a model for a centrally planned economy rather than a
decentralized market economy. In other words, it is considered
a method of computation rather than a spontaneous market process.
A scenario where this view of t\^{a}tonnement might
be useful arises in the context of distributed large-scale computer
systems, such as clouds. Artificial markets can be set to facilitiate
cooperative sharing of computing, storage, and communication
resources among selfish agents
(see, for example,~\cite{BAGS02,ZCY03,YFS13}).
Often, a service provider either controls the dispersed resources,
or at least regulates the peer-to-peer interaction among the
users, thus enforcing
adherence to the price adjustment protocol
(see, for instance,~\cite{BBST13,ZACLHDMWP13}).
On the other
hand, the buying agents are still free to pursue their goals
without having to formulate and disclose their utility functions
beyond their reactions to the changing prices. (One of the reasons
that this is important is that the conventional economic interpretation
of utility functions is that they are a convenient expression of
a preference order on the continuum of consumption baskets,
rather than a meaningful and conscious numerical evaluation of
each basket.)

\paragraph{Our results.}
We consider in this paper the following simple discrete
t\^{a}tonnement rule. For some fixed $\eps > 0$, for
every good, its price $p(t)$ at time $t=1,2,3,\dots$ is
given by
\begin{equation}\label{eq: linear tatonnement}
p(t) = (1 - \eps)\cdot p(t-1) + \eps\cdot f(t)\cdot p(t-1),
\end{equation}
where $f(t)$ is the total demand for this good assuming
the prices at time $t-1$ (the supply is always scaled to $1$).
We prove that this process converges to equilibrium
prices and allocation in various Eisenberg-Gale markets,
most notably markets with nested CES-Leontief utility
functions. We also note that the rate of convergence gives 
a polynomial time
algorithm for computing an approximate equilibrium---the
precise bounds are stated in Sections~\ref{sec: preliminary},
\ref{sec: nested}, and~\ref{sec: additional}.
Most previous results on the convergence of discrete versions of
t\^{a}tonnement apply to special cases of our results. Our
results apply to utility functions
for which previous work did not establish convergence of
t\^{a}tonnement. We note that resource
sharing in computer systems may require rather complicated
utility functions that could not be handled by previous work
(for instance, users may desire a combination of alternative
bundles of resources).

\paragraph{Markets.}
A Fisher market (see~\cite{Sca73}) is composed of a set of
agents and a set
of perfectly divisible goods, each of limited quantity. Each
agent is endowed with a budget and a
utility function on the possible baskets of goods. The
goal of an agent is to spend the budget to buy an optimal
basket of goods which maximizes the agent's utility subject
to the budget constraint. An equilibrium consists of an
assignment of prices to goods and an optimal purchase
of goods by each agent in which the market clears, i.e.,
the demand for each good equals to its supply.
Eisenberg-Gale markets~\cite{JV10} are a rather general
special case, where the utility functions are such that
equilibrium prices and allocation can be formulated as
the solutions to dual convex programs, originally due to
Eisenberg and Gale~\cite{EG59} and extended
in~\cite{Eis61,KV02,CV04,JVY05}.
Fisher markets are a special case of the more general
Walrasian model~\cite{Wal74} of an exchange economy.
Arrow and Debreu~\cite{AD54} proved, using the Kakutani
fixed-point theorem, that in a Walrasian market, if the utility
functions are continuous, strictly monotone, and quasi-concave,
then an equilibrium always exists. However, their proof gives
no indication of the market dynamics that might lead to an
equilibrium.

Markets are often classified according to the type of utility
functions used. Utility functions are usually assumed to be
concave and monotonically non-decreasing. (Economically,
the former assumption is the law of diminishing marginal
utility, and the latter is implied by assuming free disposal.)
A utility function of the form
$$
u(x) = \left(\sum_{j=1}^m a_{j}^\rho x_{j}^\rho\right)^{1/\rho},
$$
where $\rho\in (-\infty,0)\cup (0,1]$,
is called a utility with {\em constant elasticity of substitution}
or a CES utility. CES utilities with $\rho\in (0,1]$ are a special
case of utilities that satisfy the {\em weak gross substitutes}
(WGS) property---increasing the price of a good does not
decrease the demand for any other good. If $\rho = 1$, this
is a {\em linear} utility. In CES utilities with $\rho < 0$,
the goods are complementary. The limit of $u(x)$ as
$\rho\rightarrow -\infty$ is called a Leontief utility, and
the limit as $\rho\rightarrow 0$ is called a Cobb-Douglas
utility. A utility function of the form
$$
u(x) = \left(\sum_{j_1=1} ^k c_{j_1}^{\rho_1} \left(\sum_{j_2=1}^m a_{j_1,j_2}^{\rho_2} x_{j_1,j_2}^{\rho_2}\right)^{\rho_1/\rho_2}\right)^{1/\rho_1}
$$
is a (two-level) {\em nested} CES utility (see~\cite{Kel76}).
A {\em resource allocation} utility (see~\cite{KV02})
is a nested linear-Leontief utility ($\rho_1 = 1$ and $\rho_2 = -\infty$).
An {\em exponential} utility (a.k.a. constant absolute risk
aversion utility) is a simple example of a utility that does
not exhibit constant elasticity of substitution. It has the
form
$$
u(x) = \sum_{j=1}^m a_j \left(1 - e^{-\theta x_j}\right).
$$

\paragraph{T\^{a}tonnement.}
The idea of t\^{a}tonnement is due to Walras~\cite{Wal74}.
Arrow et al.~\cite{ABH59} showed that a time-continuous
version of t\^{a}tonnement converges to equilibrium for WGS
utilities. Scarf~\cite{Sca60} gave examples of Arrow-Debreu
markets with non-WGS utilities (in particular Leontief utilities)
where specific implementations of t\^{a}tonnement cycle and
never converge to an equilibrium.

The more recent computer science literature considers
discrete versions of t\^{a}tonnement in Fisher markets.
(Discrete t\^{a}tonnement was also discussed in the
economics literature, see~\cite{Uza60}.) Codenotti et
al.~\cite{CMV05} show that a t\^{a}tonnement-like process
involving some price coordination converges in
polynomial time for WGS utilities. Another t\^{a}tonnement-like
process that involves coordination is analyzed in Fleischer et
al.~\cite{FGKKS08}. They show, for many interesting markets,
a weak form of convergence of the process. They consider
the average of the prices and allocations over the steps of the
process and show that they converge to a weak notion of
approximate equilibrium, which satisfies only the sum of budget
constraints of the agents. On the other hand, their results
apply to a wide range of markets. In particular, they apply to
all the markets where we demonstrate explicit bounds on
convergence to our much stronger notion of approximate
equilibrium.

The first result to demonstrate the convergence to equilibrium
of a true discrete t\^{a}tonnement process is due to Cole and
Fleischer~\cite{CF08}, who showed that the prices converge
to equilibrium prices for non-linear CES utilities that satisfy
WGS (i.e., $\rho\in (0,1)$) and for Cobb-Douglas utilities.
They analyze the same price-adjustment process given in
Equation~\eqref{eq: linear tatonnement}.
Their analysis relies on a strong property that they prove: in
the cases they analyze the price of each good moves towards
the equilibrium value at each iteration. This is generally false
for non-WGS utilities. Nevertheless, Cheung et al.~\cite{CCR12}
modified the analysis to apply to some non-WGS utilities,
including complementary CES utilities with $\rho\in (-1,0)$,
two-level nested
CES utilities with $\rho_1,\rho_2 > -1$, and even multi-level
nested CES utilities with some restrictions on the elasticity
(which do not breach the lower bound of $-1$). It should be
noted that the analysis of~\cite{CF08,CCR12} applies also
to a model of asynchronous price adjustments. Asynchrony
is clearly a desirable property of a plausible dynamic for a
market economy (though, as mentioned above, t\^{a}tonnement
is perhaps not the right process to consider in that case) or
a true peer-to-peer system. Our results, while applicable to
a wider range of utility functions, assume synchronous price
adjustments.

Recently, Cheung et al.~\cite{CCD13} show that a discrete
t\^{a}tonnement process that uses an exponential function
update rule (Equation~\eqref{eq: linear tatonnement} can
be thought of as a linearization of their rule) converges to
the optimal value of the Eisenberg-Gale convex program,
for complementary CES utilities and for Leontief utilities, i.e.
for all $\rho\in [-\infty,0)$. We note that our
Lemma~\ref{lm: price convergence} shows that in the cases
that they analyze, the convergence of the convex program
objective function also implies convergence to equilibrium
(in the same sense that we use in our results).
The analysis of~\cite{CCD13} relates their process
to generalized gradient descent using a judicious choice of a
Bregman divergence. In comparison, our main result uses the
multiplicative weights
update paradigm (see~\cite{AHK12}). It is well-known that
multiplicative weights update analysis and gradient descent
are dual arguments. We note that
there is no Bregman divergence that yields the linear descent
step of Equation~\eqref{eq: linear tatonnement} (which is not
to say that the analysis of~\cite{CCD13} cannot be adapted
to apply to the linear update that we use here). We further
note that like our analysis, the results in~\cite{CCD13} only
apply to synchronous price adjustments.

\paragraph{Discussion and comparison.}
In Section~\ref{sec: preliminary} we give a proof of convergence
via gradient descent that applies in particular to the CES utilities 
analyzed in~\cite{CF08,CCR12,CCD13}, is somewhat
simpler than the proof in~\cite{CCD13}, and gives similar bounds
on the rate of convergence. Our analysis of nested 
CES-Leontief utilities in Section~\ref{sec: nested} is new. 
Our results for nested CES-Leontief
utilities give an upper bound on the convergence rate which is
proportional to $1/\delta^3$. For Leontief utilities, the analysis 
in~\cite{CCD13} combined 
with our Lemma~\ref{lm: price convergence} gives a better bound 
proportional to $1/\delta^2$.

The pervasive application of the multiplicative weights update method
uses $\eps$ that depends on the desired accuracy of the outcome,
and proves the convergence of a solution that averages over the
iterations. Nevertheless, we show convergence of the sequence
of prices (and allocations), and not just convergence of average prices
(clearly implied by the former), and we show this for a {\em fixed}
$\eps$ in Equation~\eqref{eq: linear tatonnement} that is independent
of the desired closeness to equilibrium.

The techniques in~\cite{FGKKS08} are somewhat similar to ours.
However, they only prove the
convergence of the average, and they use a much weaker notion
of approximate equilibrium. On the other hand, that allows them to
apply their results directly to utility functions for which our stronger
notion of convergence is unlikely to hold, such as linear or resource
allocation utilities. By tweaking the t\^atonnement process,
we can handle these utilities as well.
This is briefly discussed and analyzed in Section~\ref{sec: examples}.
In this context, we note that whether one is interested in
convergence of the sequence or in convergence of the average
depends on the phenomenon that is being modeled. One can think
of t\^{a}tonnement as a negotiation process---the price setters
propose prices and collect purchase offers to study the
market, while no exchange actually occurs during the process.
In this case, convergence of the sequence seems more suitable.
An alternative view of t\^{a}tonnement is that of an ``exploration"
process---the price setters actually sell a fraction of the supply
at each step of the process at the prices of that step. In this case,
one would focus on convergence of the average. (A more rigorous
treatment of this interpretation is given by the {\em ongoing markets}
model of~\cite{CF08,CCR12}.) Our proofs imply that taking the
average over the steps of the prices and the allocations yields a
somewhat relaxed notion of an approximate equilibrium.

Finally, we discuss the robustness of our results. For simplicity,
we assume that throughout the process the prices satisfy the
condition that their sum equals the total budget available to the
agents. This condition is easy to satisfy initially, without explicit
coordination of the prices. For example,
set each price to be arbitrary (but strictly positive), and at the
first round reset the price of each good to be the total money
spent on the good, given the initial arbitrary prices. Once the
assumption is
satisfied, it is maintained if the process is carried out precisely.
However, we note that even if the initial prices deviate from
satisfying this assumption, and the price adjustment or even
the responses of the agents are perturbed, the convergence theorems
would still hold under trivial modifications. (However, note that
the initial prices must be strictly positive---t\^{a}tonnement cannot
recover from zero prices). In particular, the proof of
Proposition~\ref{pr: tatonnement p} also implies that the prices
converge very quickly towards satisfying the above assumption.
Once the sum of prices is close to the sum of budgets, the rest
of the proof can be modified to the slight deviation from equality
and also to slight deviations from optimality of the responses
of the agents.

%%%%%%%%%%%%%%%%%%%%%%%%%%%%%%%%%%%%%%%%%%
%%%%%%%%%    Fisher Markets
%%%%%%%%%%%%%%%%%%%%%%%%%%%%%%%%%%%%%%%%%%
\section{Fisher Markets}\label{sec: Fisher}

Here we present the Fisher market model and some
basic definitions. In a Fisher market
there are $m$ perfectly divisible goods, each with
quantity scaled to $1$, without loss of generality.
There are $n$ agents. Each agent $i$ is endowed
with a budget of $b_i$, and aims at maximizing a
concave utility function $u_i:\RR_+^m\rightarrow\RR_+$.
We'll assume monotonicity, so each agent spends
all its budget. Given monotonicity, we may assume
without loss of generality that for every agent $i$,
$u_i(\vec{0}) = 0$. Also without loss of generality,
we will
assume throughout the paper that the budgets are
scaled so that $\sum_{i=1}^n b_i = 1$.

A market equilibrium is a pair $(p,x)$ where
$p:\{1,2,\dots,m\}\rightarrow\RR_+$
is an assignment of non-negative prices to the goods,
and $x:\{1,2,\dots,n\}\times\{1,2,\dots,m\}\rightarrow\RR_+$
is an allocation of goods to agents, satisfying the following
conditions: ($i$) The total spend $\sum_{j=1}^m p_j x_{ij}$
of agent $i$ is at most $b_i$. ($ii$) The basket of goods $x_i$
that agent $i$ gets maximizes the utility $u_i$ for any basket
whose cost is at most $b_i$. ($iii$) The total demand $\sum_{i=1}^n x_{ij}$
for good $j$ is at most $1$. ($iv$) If the total demand for good
$j$ is less than $1$, then $p_j = 0$.

Let $x_i(p)$ denote the optimal basket of goods maximizing
the utility $u_i$ of agent $i$, under the budget constraint $b_i$
and the market prices $p$. Notice that $x_i(p)$ is given
by a solution to the following convex program:
$$
x_i(p) = \arg\max_{x_i\in\RR_+^m} \left\{u_i(x_i):\
\sum_{j=1}^m p_j x_{ij} \le b_i\right\}.
$$

We assume that computing an optimal basket is a
tractable problem. Further denote by
$$
z_j(p)\corresponds \sum_{i=1}^n x_{ij}(p) - 1
$$
the excess demand for good $j$ under the prices $p$.
Notice that an equilibrium price vector $p^*$ satisfies $p^*\in B$, where
$$
B = \left\{p\in\RR_+^m:\ \sum_{j=1}^m p_j = \sum_{i=1}^n b_i\right\}.
$$
We now define the notion of approximate equilibrium. We give two
alternative definitions. In the first definition, each agent buys an
optimal basket of goods subject to the prices, but the demand for
each good may exceed the supply by a little. In the second definition,
each agent buys a near-optimal basket of goods, and the supply
constraints are satisfied. Definion~\ref{def: approx-equilib 1}
implies Definition~\ref{def: approx-equilib 2} (possibly with
a change of the approximation guarantee $\delta$) in many
interesting markets. This is the case when a small change in
allocation results in a small change in utility. Also,
Definition~\ref{def: approx-equilib 2}
implies Definition~\ref{def: approx-equilib 1} in many interesting
markets. Specifically, if
given specific prices, the optimal allocation for each player is
unique and the utility functions are strongly concave, then this
is the case. Definition~\ref{def: approx-equilib 2} is more natural
when the allocations
are very sensitive to small changes in the prices (one such
example is the case of linear utilities).

The definition that we use in most of the paper is:
\begin{definition}\label{def: approx-equilib 1}
A price-demand pair $(p,x)$ is a {\em $\delta$-approximate equilibrium} iff
\begin{itemize}
\item[P1.] For every agent $i=1,2,\dots,n$, the demand of $i$ optimizes
         the utility of $i$ given the prices $p$: $x_i = x_i(p)$.
\item[P2.] For every good $j=1,2,\dots,m$, the demand for $j$
         does not exceed the supply by much: $z_j(p)\le\delta$.
\item[P3.] The goods that are not purchased almost entirely do not
         cost much: For every $j=1,2,\dots,m$, if $z_j(p) < -\delta$
         then $p_j\le\delta$.
\end{itemize}
\end{definition}

The alternative definition that is often equivalent to the first one is:
\begin{definition}\label{def: approx-equilib 2}
A price-demand pair $(p,x)$ is a {\em $\delta$-approximate equilibrium} iff
\begin{itemize}
\item[P1.] For every agent $i=1,2,\dots,n$, the utility of $i$ is near-optimal
         given the prices $p$: $u_i(x_i)\ge (1 - \delta)\cdot u_i(x_i(p))$.
\item[P2.] For every good $j=1,2,\dots,m$, the demand for $j$
         does not exceed the supply: $z_j(p)\le 0$.
\item[P3.] The goods that are not purchased almost entirely do not
         cost much: For every $j=1,2,\dots,m$, if $z_j(p) < -\delta$
         then $p_j\le\delta$.
\end{itemize}
\end{definition}

For clarity, we use Dirac's notation to do linear algebra in
a vector space over $\RR$ endowed with the standard
Euclidean norm. In particular $\langle u \mid v\rangle$
denotes the inner product between two vectors $u$ and
$v$, and $\langle u \mid A \mid v\rangle$ denotes the
bilinear map of a pair of vectors $u,v$ using a linear
operator $A$. We denote by $\RR_+$ the set of non-negative
real numbers, and by $\RR_{++}$ the set of strictly
positive real numbers.

%%%%%%%%%%%%%%%%%%%%%%%%%%%%%%%%%%%%%%%%%%
%%%%%%%%%    The Tatonnement Process
%%%%%%%%%%%%%%%%%%%%%%%%%%%%%%%%%%%%%%%%%%
\section{The T\^{a}tonnement Process}\label{sec: tatonnement}

In this section, we discuss arbitrary Eisenberg-Gale
markets~\cite{JV10}. In an Eisenberg-Gale market, an equilibrium
allocation of goods to agents can be computed using a specific
convex program whose objective function is strictly concave in the
utilities of the agents. The Lagrange variables corresponding to
the supply constraints give the equilibrium prices. Thus, an
equilibrium price vector can be computed by solving the dual
program.

Let $x_i$ denote a feasible allocation for agent $i$,
so $x_{ij}$ is the quantity of good $j$ that agent
$i$ gets in the allocation $x_i$. An equilibrium
allocation $x^*$ is given by the solution to the
following Eisenberg-Gale convex program.
$$
\begin{array}{lll}
\max & \sum_{i=1}^n b_i \ln u_i(x_i) & \\
\hbox{s.t.} & \sum_{i=1}^n x_{ij} \le 1 & \forall j=1,2,\dots,m \\
 & x_{ij} \ge 0 & \forall i=1,2,\dots,n;\ \forall j=1,2,\dots,m.
\end{array}
$$
An equilibrium price vector $p^*$ is given by the solution
to the dual program:
$$
\begin{array}{lll}
\min & \sum_{j=1}^m p_j + \sum_{i=1}^n g_i^*(\mu_i) & \\
\hbox{s.t.} & p_j\ge -\mu_{ij} & i=1,2,\dots,n;\ \forall j=1,2,\dots,m \\
 & p_j \ge 0 & \forall j=1,2,\dots,m,
\end{array}
$$
where $g_i^*$ is the convex conjugate of the convex function
$g_i(x_i) = -b_i \ln u_i(x_i)$. More specifically, $g_i^*$ satisfies
$g_i^*(\mu_i) = \sup_{x\in\RR_+^m}\left\{\langle\mu_i \mid x\rangle - g_i(x)\right\}$.
Notice that $g_i^*$ is monotonically non-decreasing in each
coordinate of $\mu_i$. Therefore, we can replace $\mu_{ij}$
by $-p_j$ for every $i,j$.
For future use, we denote the primal objective function by
$\psi(x) = \sum_{i=1}^n b_i \ln u_i(x_i)$, and the dual objective
function by
$\phi(p) = \sum_{j=1}^m p_j + \sum_{i=1}^n g_i^*(-p)$.

The following lemma gives an explicit expression for the gradient
of $\phi$. It appears in~\cite[Lemma 3.3]{CCD13}.
In the appendix, we give the (short) proof for completeness.
\begin{lemma}[\cite{CCD13}]\label{lm: gradient}
In every Eisenberg-Gale market,
$$
\nabla\phi(p) = -z(p).
$$
\end{lemma}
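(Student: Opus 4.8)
The plan is to compute $\nabla\phi(p)$ directly from its definition $\phi(p) = \sum_{j=1}^m p_j + \sum_{i=1}^n g_i^*(-p)$, term by term. The first term $\sum_j p_j$ contributes $\vec{1}$ (the all-ones vector) to the gradient, so the bulk of the work is computing $\nabla_p\, g_i^*(-p)$ for a fixed agent $i$ and showing that, summed over $i$, it yields $-\sum_i x_i(p) = -(z(p)+\vec{1})$, so that the $\vec{1}$'s cancel and we are left with $-z(p)$.

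First I would recall the structure of the convex conjugate. By definition $g_i^*(\mu_i) = \sup_{x\in\RR_+^m}\{\langle \mu_i \mid x\rangle - g_i(x)\}$ with $g_i(x) = -b_i\ln u_i(x)$. The key fact from convex analysis (a consequence of Danskin's theorem / the envelope theorem) is that if the supremum in the conjugate is attained at a unique point $x^*(\mu_i)$, then $g_i^*$ is differentiable at $\mu_i$ with $\nabla g_i^*(\mu_i) = x^*(\mu_i)$. So the step is to identify the maximizer of $\langle -p \mid x\rangle + b_i\ln u_i(x)$ over $x\in\RR_+^m$ and show it equals the optimal demand basket $x_i(p)$. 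This should follow from the KKT conditions: the unconstrained (over $\RR_+^m$) maximizer of $b_i\ln u_i(x) - \langle p\mid x\rangle$ satisfies $b_i\,\nabla u_i(x)/u_i(x) = p$ (coordinatewise, with complementary slackness at the boundary), and by scaling one checks this $x$ has cost exactly $\langle p\mid x\rangle = b_i$ (using that $u_i$ is concave with $u_i(\vec 0)=0$, so $\langle\nabla u_i(x)\mid x\rangle \le u_i(x)$, and homogeneity-type arguments give equality at the optimum), hence it is precisely the budget-constrained utility maximizer $x_i(p)$. Then via the chain rule, $\nabla_p\, g_i^*(-p) = -\nabla g_i^*(\mu_i)\big|_{\mu_i=-p} = -x_i(p)$.

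Putting the pieces together: $\nabla\phi(p) = \vec{1} - \sum_{i=1}^n x_i(p) = \vec{1} - (z(p) + \vec{1}) = -z(p)$, using the definition $z_j(p) = \sum_i x_{ij}(p) - 1$.

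The main obstacle will be handling differentiability cleanly — justifying that the supremum defining $g_i^*$ is attained at a unique point so that $g_i^*$ is genuinely differentiable (rather than merely subdifferentiable), and dealing with goods that appear on the boundary (where $x_{ij}(p)=0$) via complementary slackness. For the utility classes of interest (CES, nested CES-Leontief) strict concavity or the explicit form of the optimal basket makes this routine, and since the excerpt says a short proof is deferred to the appendix, I would not belabor these regularity points beyond citing the envelope theorem and the first-order optimality conditions.
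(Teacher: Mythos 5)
Your proposal shares the paper's envelope-theorem core: both reduce the lemma to identifying the maximizer in the conjugate/Lagrangian with the demand $x(p)$, after which the $\vec{1}$ coming from $\sum_j p_j$ cancels against $\sum_i x_i(p) = z(p)+\vec{1}$ (you apply Danskin per agent to each $g_i^*$, the paper applies it once to $\phi$ written as a max of functions linear in $p$ --- an immaterial difference). The real divergence, and the one place your argument falls short of the statement as written, is the identification step. You argue that the maximizer $\hat{x}_i$ of $b_i\ln u_i(x)-\langle p \mid x\rangle$ over $\RR_+^m$ coincides with the budget-constrained demand $x_i(p)$ because ``by scaling'' it spends exactly $b_i$. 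But concavity with $u_i(\vec{0})=0$ only gives $\langle \nabla u_i(x)\mid x\rangle\le u_i(x)$, hence $\langle p\mid \hat{x}_i\rangle\le b_i$; equality is Euler's identity and genuinely requires (local) degree-$1$ homogeneity of $u_i$. For a non-homogeneous concave utility (e.g.\ the exponential utilities mentioned in the introduction) the conjugate maximizer does not exhaust the budget and $\hat{x}_i\ne x_i(p)$, so your ``homogeneity-type arguments'' are doing essential work and the proof as written covers $1$-homogeneous utilities (which is every market actually analyzed in the paper) rather than ``every Eisenberg-Gale market.'' The paper closes this step without first-order conditions or homogeneity: it notes that in the modified market whose supply of good $j$ is $\sum_i x_{ij}(p)$, the pair $(p,x(p))$ is an equilibrium, so $x(p)$ is the Eisenberg-Gale primal optimum with $p$ dual-optimal, and by Lagrange duality $x(p)$ maximizes $\sum_i b_i\ln u_i(x_i)+\sum_j p_j\bigl(1-\sum_i x_{ij}\bigr)$, which is precisely the maximization defining $\phi(p)$; this uses only the defining property of Eisenberg-Gale markets and also sidesteps your uniqueness/differentiability worries at the per-agent level. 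Either add the homogeneity hypothesis explicitly or replace your KKT-plus-Euler step with this duality argument, and your proof is complete.
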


We also need the following property of the primal objective function.
%The proof appears in the appendix.
\begin{claim}\label{cl: primal approx}
For every $\alpha > 0$,
$\psi\left(\frac{1}{1+\alpha}\cdot x\right)\ge \psi(x) - \alpha$.
\end{claim}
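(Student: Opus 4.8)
The plan is to reduce the claim to the elementary fact that a concave function vanishing at the origin is superhomogeneous for scalars in $[0,1]$, and then to $\ln(1+\alpha)\le\alpha$. First I would fix $\alpha>0$ and set $\lambda=\frac{1}{1+\alpha}\in(0,1)$. For each agent $i$, since $u_i$ is concave and $u_i(\vec 0)=0$, write $\lambda x_i=\lambda x_i+(1-\lambda)\vec 0$ and apply concavity to get
\[
u_i(\lambda x_i)\ge \lambda\, u_i(x_i)+(1-\lambda)\,u_i(\vec 0)=\lambda\, u_i(x_i).
\]

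Next I would take logarithms. If $u_i(x_i)=0$ then $\psi(x)=-\infty$ and the inequality holds trivially, so assume $u_i(x_i)>0$ for all $i$ (which, by monotonicity, forces $u_i(\lambda x_i)>0$ as well whenever it matters—more precisely the displayed bound already shows $u_i(\lambda x_i)>0$). Then $\ln u_i(\lambda x_i)\ge \ln u_i(x_i)+\ln\lambda=\ln u_i(x_i)-\ln(1+\alpha)$. Multiplying by $b_i\ge 0$ and summing over $i$, and using $\sum_{i=1}^n b_i=1$, gives
\[
\psi\!\left(\tfrac{1}{1+\alpha}\cdot x\right)=\sum_{i=1}^n b_i\ln u_i(\lambda x_i)\ \ge\ \sum_{i=1}^n b_i\ln u_i(x_i)-\ln(1+\alpha)=\psi(x)-\ln(1+\alpha).
\]

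Finally I would invoke the standard inequality $\ln(1+\alpha)\le\alpha$ (valid for all $\alpha>-1$, in particular $\alpha>0$) to conclude $\psi\!\left(\tfrac{1}{1+\alpha}\cdot x\right)\ge\psi(x)-\alpha$. There is no real obstacle here; the only points that need a word of care are the degenerate case $u_i(x_i)=0$ (handled by noting both sides are $-\infty$ or the statement is vacuous) and the justification of the scaling inequality, which is just concavity together with the normalization $u_i(\vec 0)=0$ established in Section~\ref{sec: Fisher}.
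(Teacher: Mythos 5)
Your proof is correct and follows essentially the same route as the paper's: concavity of $u_i$ together with $u_i(\vec 0)=0$ gives $u_i\bigl(\tfrac{1}{1+\alpha}x_i\bigr)\ge\tfrac{1}{1+\alpha}u_i(x_i)$, and then taking logarithms, using $\sum_i b_i=1$ and $\ln(1+\alpha)\le\alpha$ finishes the argument. The only difference is your explicit handling of the degenerate case $u_i(x_i)=0$, which the paper leaves implicit.
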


\begin{proof}
For every agent $i$,
$$
u_i\left(\frac{1}{1+\alpha}\cdot x\right) =
u_i\left(\frac{1}{1+\alpha}\cdot x + \frac{\alpha}{1+\alpha}\cdot \vec{0}\right)\ge
\frac{1}{1+\alpha}\cdot u_i(x) + \frac{\alpha}{1+\alpha}\cdot u_i(\vec{0}) =
\frac{1}{1+\alpha}\cdot u_i(x),
$$
where the inequality follows from the concavity of $u_i$. Thus,
$$
\psi\left(\frac{1}{1+\alpha}\cdot x\right) =
\sum_{i=1}^n b_i\ln\left(u_i\left(\frac{1}{1+\alpha}\cdot x\right)\right)\ge
\sum_{i=1}^n b_i\ln\left(\frac{1}{1+\alpha}\cdot u_i(x)\right)\ge
\psi(x) - \alpha,
$$
where the second inequality follows from $\ln(1+\alpha)\le \alpha$
and the scaling $\sum_{i=1}^n b_i = 1$.
\end{proof}

The t\^{a}tonnement process proceeds as follows. Start
with an arbitrary assignment of strictly positive prices
$p^0\in\RR_{++}^m\cap B$. Given the time $t$ prices $p^t$,
each agent independently responds with its time $t+1$
demand $x_i^{t+1}\corresponds x_i(p^t)$. Given the time
$t+1$ demands $x^{t+1}$, the prices of goods are updated
according to the excess demand --- prices increase
for goods whose demand exceeds supply
and prices decrease for goods whose supply exceeds demand.
More specifically,
$$
p^{t+1}_j = p^t_j\cdot\left(1 + \veps z_j(p^t)\right),
$$
where $\veps\in (0,\frac 1 2)$ is a constant to be specified later.

We now give an alternative characterization of the optimal
response of the agents to a price vector. 
%The proof appears in the appendix.
\begin{lemma}\label{lm: mwum response}
Let $p\in B$ be a price vector. Then,
$$
x(p) =
\arg\max_{x\in\RR_+^{n\times m}}\left\{\sum_{i=1}^n b_i\ln u_i(x_i):\
\frac{1}{\|p\|_1}\cdot\sum_{j=1}^m p_j\sum_{i=1}^n x_{ij}\le 1\right\}.
$$
\end{lemma}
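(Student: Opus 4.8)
The plan is to prove the identity by its two standard halves: show that $x(p)$ is feasible for the program on the right-hand side, and that it attains the maximum value, so that it is a maximizer; then a look at the equality cases pins the maximizer down to $x(p)$. First I would record the normalization: since $p\in B$ we have $\|p\|_1=\sum_j p_j=\sum_i b_i=1$, so the constraint of the right-hand program is just $\sum_j p_j\sum_i x_{ij}\le 1$. Feasibility of $x(p)$ is then immediate: each $x_i(p)$ satisfies $\sum_j p_j x_{ij}(p)\le b_i$ (with equality, by monotonicity), and summing over $i$ gives $\sum_j p_j\sum_i x_{ij}(p)\le\sum_i b_i=1$.

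For optimality I would take an arbitrary feasible $x$, set $s_i=\sum_j p_j x_{ij}$ (agent $i$'s spend under $x$, so $\sum_i s_i\le 1$), and discard the trivial case in which some $u_i(x_i)=0$, where the objective is $-\infty$ and $x(p)$ does better. Because the rescaled basket $\frac{b_i}{s_i}x_i$ costs exactly $b_i$, it is feasible for agent $i$'s own optimization, so by definition of $x_i(p)$ and homogeneity of degree one of $u_i$ (as holds for CES, Leontief and nested CES-Leontief utilities, and more generally in Eisenberg--Gale markets),
$$
u_i(x_i(p))\ \ge\ u_i\!\left(\tfrac{b_i}{s_i}x_i\right)\ =\ \tfrac{b_i}{s_i}\,u_i(x_i),
\qquad\text{hence}\qquad
u_i(x_i)\ \le\ \tfrac{s_i}{b_i}\,u_i(x_i(p)).
$$
Taking logarithms, scaling by $b_i$, summing, and then applying Jensen's inequality to the concave function $\ln$ with the weights $b_i$ (which sum to $1$),
$$
\sum_{i=1}^n b_i\ln u_i(x_i)\ \le\ \sum_{i=1}^n b_i\ln u_i(x_i(p))+\sum_{i=1}^n b_i\ln\tfrac{s_i}{b_i}\ \le\ \psi(x(p))+\ln\!\left(\sum_i s_i\right)\ \le\ \psi(x(p)).
$$
Thus $x(p)$ is a maximizer. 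For the equality ``$=$'' rather than ``$\in$'', I would note that equality throughout forces, via the Jensen step, $\sum_i s_i=1$ with $s_i/b_i$ constant, hence $s_i=b_i$ for every $i$, and then $u_i(x_i)=u_i(x_i(p))$; so any maximizer spends exactly $b_i$ per agent on a basket optimal for that budget, which is $x(p)$ under the same convention (uniqueness of the optimal basket) that makes $x_i(p)$ well defined.

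The one delicate step---and the only place where the Eisenberg--Gale structure is genuinely used---is the equality $u_i(\frac{b_i}{s_i}x_i)=\frac{b_i}{s_i}u_i(x_i)$: concavity together with $u_i(\vec{0})=0$ only yields this as an inequality in the direction that helps when $s_i\ge b_i$, so homogeneity of degree one is really needed in order to also handle agents that underspend under $x$ (without it the statement is false for general concave utilities). A minor point is that $p$ should be treated as strictly positive---which it is along the t\^{a}tonnement trajectory---so that $s_i=0$ forces $x_i=\vec{0}$; this is used only to dispose of the degenerate $u_i(x_i)=0$ case.
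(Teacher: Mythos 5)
Your proof is correct, but it takes a genuinely different route from the paper's. The paper argues via Lagrange duality: it guesses the multiplier $\lambda=\|p\|_1$, observes that $(p,x(p))$ is an equilibrium of the auxiliary market whose supply of good $j$ is $\sum_i x_{ij}(p)$, and invokes the Eisenberg--Gale correspondence for that market (with $p$ as a dual optimum) to verify the KKT conditions of the aggregate-budget program. You instead give a direct primal argument: any feasible $x$ with per-agent spends $s_i$ is dominated, after rescaling each $x_i$ by $b_i/s_i$, by the individually optimal baskets, and the residual term $\sum_i b_i\ln(s_i/b_i)\le\ln\sum_i s_i\le 0$ is killed by Jensen's inequality; this is more elementary (no duality machinery) and makes the equality/uniqueness analysis transparent. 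The price of your route is exactly the hypothesis you flag: the step $u_i\bigl(\tfrac{b_i}{s_i}x_i\bigr)=\tfrac{b_i}{s_i}u_i(x_i)$ requires positive homogeneity of degree one. The paper states the lemma for arbitrary Eisenberg--Gale markets and its proof never invokes homogeneity, deriving what it needs from the defining property of that class (equilibria of the induced market are optima of its Eisenberg--Gale program); your parenthetical claim that $1$-homogeneity holds ``more generally in Eisenberg--Gale markets'' is the one assertion you would need to justify to match that stated generality. Since every utility family actually used in the paper (CES, Cobb--Douglas, Leontief, nested CES-Leontief, resource allocation) is $1$-homogeneous, nothing downstream is affected. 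Both arguments rely on the same monotonicity fact $\sum_j p_j x_{ij}(p)=b_i$, and both, like the paper's statement itself, implicitly treat the argmax as unique and the prices as strictly positive so that $x(p)$ is well defined.
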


\begin{proof}
The proof uses essentially the same argument as in
the proof of Lemma~\ref{lm: gradient}.

Given a price vector $p\in B$. Consider the following program,
$$\max \left\{\sum_{i=1}^n b_i\ln u_i(x_i):\
\frac{1}{\|p\|_1}\cdot\sum_{j=1}^m p_j\sum_{i=1}^n x_{ij}\le 1\right\}.
$$
By Lagrange duality we get that the maximum is achieved at
$$
\lambda_{\max},x_{\max}= \arg\max_{x\in\RR_+^{n\times m},\lambda\ge 0}\left\{\sum_{i=1}^n b_i\ln u_i(x_i) -\lambda \left(1-\sum_{j=1}^m\frac{p_j}{\|p\|_1}\sum_{i=1}^n x_{ij}\right)\right\}.
$$
We will show that $\lambda_{\max} =\|p\|_1$ and $x_{\max}=x(p)$ by
showing that they satisfy the KKT conditions. The first condition
is that if $\lambda_{\max} >0$  then
$1=\sum_{j=1}^m\frac{p_j}{\|p\|_1}\sum_{i=1}^n (x_{\max})_{ij}$.
By the definition of $x(p)$, $\sum_{j=1}^m p_j x(p)_{ij} = b_i$
(as $u_i(x)$ is monotonically non-decreasing). Thus, $x(p)$
satisfies $\sum_{i=1}^n\sum_{j=1}^m p_j x(p)_{ij} = \sum_{i=1}^n
b_i =\|p\|_1$. For the second KKT condition notice the following.
In a market in which every good $j$ has quantity $\sum_{i=1}^n
x(p)_{ij}$, the price-demand pair $(p,x(p))$ is an equilibrium
(each agent optimizes its demand and the market clears). Thus,
$x(p)$ is a primal optimal solution to the Eisenberg-Gale convex
program for this market. Therefore,
\begin{eqnarray*}
x(p)&=&\arg\max_{x\in\RR_+^{n\times m}}\left\{\sum_{i=1}^n b_i\ln u_i(x_i):
  \sum_{i=1}^n x_{ij} =\sum_{i=1}^n x(p)_{ij} \forall j\right\}\\
%&=&\arg\max_{x\in\RR_+^{n\times m}}\left\{\sum_{i=1}^n b_i\ln u_i(x_i)-
%\sum_{j=1}^m \lambda_j \sum_{i=1}^n (x(p)_{ij}-x_{ij})\right\}\\
&=&\arg\max_{x\in\RR_+^{n\times m}}\left\{\sum_{i=1}^n b_i\ln u_i(x_i)-
\sum_{j=1}^m p_j \sum_{i=1}^n (x(p)_{ij}-x_{ij})\right\}\\
&=&\arg\max_{x\in\RR_+^{n\times m}}\left\{\sum_{i=1}^n b_i\ln u_i(x_i)-
\sum_{j=1}^m p_j \sum_{i=1}^n (1-x_{ij})\right\}.
\end{eqnarray*}
The first equation follows from the definition of the
Eisenberg-Gale market with quantities $\sum_{i=1}^n x(p)_{ij}$.
The second equation follows from Lagrange duality and the fact
that the prices $p$ are a dual optimal solution. The third
equation follows as $\sum_{j=1}^m p_j \sum_{i=1}^n x(p)_{ij}$ is a
constant independent of $x$.

Going back to the second KKT condition, notice that putting
$\lambda_{\max} = \|p\|_1$ gives exactly the last equation.
Therefore, $x_{\max}=x(p)$, $\lambda_{\max} = \|p\|_1$ satisfy the
second KKT condition as well.
\end{proof}

%The proof of the following proposition appears in the appendix.
\begin{proposition}\label{pr: tatonnement p}
If $p^0\in B$ then $p^t\in B$ for all $t$.
\end{proposition}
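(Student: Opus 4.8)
The plan is to prove the statement by induction on $t$, carrying along the slightly stronger invariant that $p^t\in\RR_{++}^m\cap B$. Strict positivity is worth tracking because it is exactly what guarantees that each agent's optimal basket $x_i(p^t)$ is finite and that the budget constraint is active; once we have that, $B$-membership ($\sum_j p^t_j = \sum_i b_i = 1$) is what we actually want. The base case $t=0$ is precisely the starting condition of the process, $p^0\in\RR_{++}^m\cap B$.

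For the inductive step, assume $p^t\in\RR_{++}^m\cap B$. First I would check strict positivity of $p^{t+1}$: since $x_{ij}(p^t)\ge 0$ for all $i,j$, we have $z_j(p^t)=\sum_{i=1}^n x_{ij}(p^t)-1\ge -1$, so $1+\veps z_j(p^t)\ge 1-\veps>\tfrac12>0$ because $\veps\in(0,\tfrac12)$; together with $p^t_j>0$ this gives $p^{t+1}_j = p^t_j\bigl(1+\veps z_j(p^t)\bigr)>0$. Then I would verify the sum condition. Summing the update rule over $j$,
$$
\sum_{j=1}^m p^{t+1}_j = \sum_{j=1}^m p^t_j + \veps\sum_{j=1}^m p^t_j\, z_j(p^t),
$$
so it suffices to show the Walras-law identity $\sum_{j=1}^m p^t_j z_j(p^t)=0$. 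Expanding,
$$
\sum_{j=1}^m p^t_j z_j(p^t) = \sum_{j=1}^m p^t_j\left(\sum_{i=1}^n x_{ij}(p^t)-1\right) = \sum_{i=1}^n\sum_{j=1}^m p^t_j x_{ij}(p^t) - \sum_{j=1}^m p^t_j.
$$
By monotonicity of $u_i$ each agent spends its whole budget, i.e. $\sum_{j=1}^m p^t_j x_{ij}(p^t)=b_i$ (the same fact invoked in the proof of Lemma~\ref{lm: mwum response}), so $\sum_i\sum_j p^t_j x_{ij}(p^t)=\sum_i b_i = 1$, while $\sum_j p^t_j = 1$ by the inductive hypothesis. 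Hence $\sum_j p^t_j z_j(p^t)=1-1=0$, giving $\sum_j p^{t+1}_j = \sum_j p^t_j = \sum_i b_i$, and therefore $p^{t+1}\in\RR_{++}^m\cap B\subseteq B$, closing the induction.

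There is no genuinely hard step here; the one point that needs care is the justification of $\sum_j p_j x_{ij}(p)=b_i$, which relies on the monotonicity assumption on the utilities (making the budget constraint tight) and on the prices being strictly positive (so the optimal basket is finite and the budget is actually spent). Carrying strict positivity through the induction handles this cleanly, and it works precisely because the choice $\veps<\tfrac12$ keeps the multiplier $1+\veps z_j(p^t)$ bounded away from zero.
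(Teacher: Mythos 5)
Your proof is correct and follows essentially the same route as the paper's: induction on $t$, with the key identity $\sum_j p^t_j x_{ij}(p^t)=b_i$ (full budget expenditure, from monotonicity) giving $\sum_j p^t_j z_j(p^t)=0$ and hence preservation of the price sum. The only addition is your explicit tracking of strict positivity via $1+\veps z_j(p^t)\ge 1-\veps>0$, which the paper leaves implicit; this is a harmless (and sensible) strengthening, not a different argument.
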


\begin{proof}
The proof is by induction on $t$. For $t=0$, the claim
is satisfied by assumption.
For the inductive step, notice that
\begin{eqnarray*}
\sum_{j=1}^m p^{t+1}_j  & = & (1 - \eps) \sum_{j=1}^m p^t_j +
    \eps \sum_{j=1}^m p^t_j\sum_{i=1}^n x_{ij}^{t+1} \\
    &=& (1 - \eps) \sum_{j=1}^m p^t_j +
    \eps \sum_{i=1}^n \sum_{j=1}^m p^t_j x_{ij}(p^t)
%&=& (1 - \eps) \sum_{j=1}^n b_i+
%    \eps \sum_{i=1}^n b_i
=\sum_{j=1}^n b_i
\end{eqnarray*}
As we assume that $u_i(x_i)$ is monotonically non-decreasing,
by the definition of $x_i(p)$ for every $i$,
$\sum_{j=1}^m p^t_j x_{ij}(p^t) = b_i$ (every agent spends all its
budget at every round). Thus the last equality follows
(together with the induction hypothesis on $p^t$).
\end{proof}

%%%%%%%%%%%%%%%%%%%%%%%%%%%%%%%%%%%%%%%%%%
%%%%%%%%%    Preliminary Results
%%%%%%%%%%%%%%%%%%%%%%%%%%%%%%%%%%%%%%%%%%
\section{Preliminary Results}\label{sec: preliminary}

The following theorem generalizes and simplifies slightly 
the results of~\cite[Theorem 6.4]{CCD13} on the convergence 
of t\^{a}tonnement
for CES utility functions (albeit for the linearized version
of the process that they consider). 
%The proof appears in the appendix.
\begin{theorem}\label{thm: strongly convex}
Suppose that there is a choice of $\veps > 0$ and a closed
convex set $P\subset\RR_+^m$
which includes an equilibrium price vector $p^*$ and the
sequence $p^0, p^1, p^2, \dots$ with the following properties:
($i$) There are bounds $p_{\min}\le p_{\max}$ such that
for all $p\in P$ and for all goods $j$, $p_j\in [p_{\min},p_{\max}]$.
($ii$) In $P$, the dual objective function $\phi$ is twice continuously
differentiable and strongly convex, and its gradient $\nabla\phi$ is
$L$-Lipschitz.
($iii$) $\veps\le \frac{p_{\min}}{p_{\max}^2 L}$.
Then, there exists a constant $\delta > 0$ such
that for every $T\ge 0$,
$$
\phi(p^T) - \phi(p^*)\le (1 - \delta)^{T}\cdot
\left(\phi(p^0) - \phi(p^*)\right).
$$
\end{theorem}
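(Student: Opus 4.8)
The plan is to recognize the t\^{a}tonnement iteration as a diagonally preconditioned gradient descent step on the dual objective $\phi$, and then to run the standard ``smoothness plus strong convexity'' estimate. By Lemma~\ref{lm: gradient} we have $z(p^t) = -\nabla\phi(p^t)$, so letting $D^t$ denote the diagonal matrix whose $j$-th diagonal entry is $p^t_j$, the update $p^{t+1}_j = p^t_j(1 + \veps z_j(p^t))$ reads
$$
p^{t+1} = p^t - \veps\, D^t\,\nabla\phi(p^t).
$$
Since $P$ is convex and, by hypothesis, contains every $p^t$, it contains the whole segment $[p^t, p^{t+1}]$, on which $\nabla\phi$ is $L$-Lipschitz; the descent lemma therefore gives
$$
\phi(p^{t+1}) \le \phi(p^t) - \veps\,\langle\nabla\phi(p^t)\mid D^t\mid\nabla\phi(p^t)\rangle + \tfrac{L\veps^2}{2}\,\|D^t\nabla\phi(p^t)\|^2 .
$$

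Next I would feed in property $(i)$. Because every diagonal entry of $D^t$ lies in $[p_{\min},p_{\max}]$, we have $\langle\nabla\phi(p^t)\mid D^t\mid\nabla\phi(p^t)\rangle \ge p_{\min}\|\nabla\phi(p^t)\|^2$ and $\|D^t\nabla\phi(p^t)\|^2 \le p_{\max}^2\|\nabla\phi(p^t)\|^2$, so
$$
\phi(p^{t+1}) \le \phi(p^t) - \veps\Bigl(p_{\min} - \tfrac{L\veps\, p_{\max}^2}{2}\Bigr)\|\nabla\phi(p^t)\|^2 .
$$
The step-size bound $(iii)$, namely $\veps \le p_{\min}/(p_{\max}^2 L)$, makes the parenthesized coefficient at least $p_{\min}/2$, hence $\phi(p^{t+1}) \le \phi(p^t) - \tfrac{\veps p_{\min}}{2}\|\nabla\phi(p^t)\|^2$. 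This is the place where the constant $\veps$ and the price bounds must be balanced correctly against the Lipschitz constant; once $(iii)$ is in hand the inequality is immediate.

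Finally I would invoke property $(ii)$. If $\phi$ is $\sigma$-strongly convex on $P$ for some $\sigma > 0$, then for any $x\in P$,
$$
\phi(p^*) \ge \phi(x) + \langle\nabla\phi(x)\mid p^* - x\rangle + \tfrac{\sigma}{2}\|p^* - x\|^2 \ge \phi(x) - \tfrac{1}{2\sigma}\|\nabla\phi(x)\|^2 ,
$$
the last step by minimizing the quadratic lower bound over the displacement $p^* - x$; equivalently $\|\nabla\phi(x)\|^2 \ge 2\sigma\,(\phi(x) - \phi(p^*))$ (the Polyak--{\L}ojasiewicz inequality). Applying this at $x = p^t$ in the previous display gives
$$
\phi(p^{t+1}) - \phi(p^*) \le \bigl(1 - \veps\, p_{\min}\,\sigma\bigr)\bigl(\phi(p^t) - \phi(p^*)\bigr),
$$
and iterating this contraction $T$ times starting from $p^0$ yields the theorem with $\delta = \veps\, p_{\min}\,\sigma$; one notes $\delta\in(0,1]$ since $\sigma \le L$ and $\veps \le p_{\min}/(p_{\max}^2 L) \le 1/(p_{\min} L)$. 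I do not expect any serious obstacle here: the only point to watch is that smoothness and strong convexity are assumed only on $P$, so one must use convexity of $P$ together with the hypothesis $p^0,p^1,p^2,\dots$ and $p^*$ all lying in $P$ to legitimize each application of the descent lemma and of strong convexity --- but that is precisely what the hypotheses $(i)$--$(iii)$ are designed to provide, and the remainder is the textbook gradient-descent estimate adapted to the diagonal preconditioner $D^t$.
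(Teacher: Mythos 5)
Your proof is correct and follows essentially the same route as the paper's: both treat the update as diagonally preconditioned gradient descent, derive a sufficient-decrease inequality using the price bounds and the step-size condition, and combine it with the Polyak--{\L}ojasiewicz-type bound $\|\nabla\phi(p^t)\|^2\ge 2\sigma(\phi(p^t)-\phi(p^*))$ coming from strong convexity. The only cosmetic difference is that the paper works through the second-order Taylor expansion with Hessian eigenvalue bounds $\lambda_{\min},\lambda_{\max}$ (and fixes $\veps$ at the upper end), whereas you invoke the descent lemma with the Lipschitz constant $L$ and the strong-convexity modulus $\sigma$ for any $\veps$ satisfying $(iii)$, which yields the same contraction factor.
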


\begin{proof}
The proof closely mimics the proof of convergence of gradient
descent in the case of a strongly convex objective function.

Condition ($ii$) implies that there exist bounds
$0 < \lambda_{\min}\le\lambda_{\max}$ such that for
every $p\in P$ the eigenvalues of the Hessian matrix
$\nabla^2 \phi(p)$ lie in the interval $[\lambda_{\min},\lambda_{\max}]$.

Set $\veps = \frac{p_{\min}}{p_{\max}^2 \lambda_{\max}}$.
Consider a time step $t$. Consider a point $p\in P$.
The second order Taylor expansion of $\phi(p)$ with
respect to $\phi(p^{t-1})$ gives
\begin{eqnarray*}
\phi(p) & = & \phi(p^{t-1}) + \langle \nabla\phi(p^{t-1}), p - p^{t-1}\rangle + 
        \frac 1 2 \cdot \langle p - p^{t-1} \mid \nabla\phi(q) \mid p - p^{t-1}\rangle,
\end{eqnarray*}
for some interpolation point $q$ between $p$ and $p^{t-1}$.
As $q\in P$ (because $P$ is convex), we have that the last
term satisfies
$$
\lambda_{\min}\cdot \|p - p^{t-1}\|_2^2\le
\langle p - p^{t-1} \mid \nabla\phi(q) \mid p - p^{t-1}\rangle\le
\lambda_{\max}\cdot \|p - p^{t-1}\|_2^2
$$

Thus, on the one hand,
\begin{eqnarray*}
\phi(p^t) & \le & \phi(p^{t-1}) + \langle \nabla\phi(p^{t-1}), p^t - p^{t-1}\rangle +
     \frac{\lambda_{\max}}{2}\cdot \|p^t - p^{t-1}\|_2^2 \\
& = & \phi(p^{t-1}) - \eps\cdot\sum_j p^{t-1}_j \left(\nabla\phi(p^{t-1})\right)_j^2 +
       \frac{\lambda_{\max}}{2}\cdot\eps^2\cdot\sum_j \left(p^{t-1}_j \nabla\phi(p^{t-1})_j\right)^2 \\
& \le & \phi(p^{t-1}) - \eps\cdot p_{\min}\cdot \|\nabla\phi(p^{t-1})\|_2^2 +
       \frac{\lambda_{\max}}{2}\cdot\eps^2\cdot p_{\max}^2\cdot \|\nabla\phi(p^{t-1})\|_2^2 \\
& \le & \phi(p^{t-1}) - \left(\frac{p_{\min}}{p_{\max}}\right)^2\cdot \frac{1}{2\lambda_{\max}}\cdot 
                \|\nabla\phi(p^{t-1})\|_2^2.
\end{eqnarray*}

On the other hand, for every $p\in P$,
\begin{eqnarray*}
\phi(p) & \ge & \phi(p^{t-1}) + \langle \nabla\phi(p^{t-1}), p - p^{t-1}\rangle + 
       \frac{\lambda_{\min}}{2}\cdot \|p - p^{t-1}\|_2^2.
\end{eqnarray*}
The right-hand side is minimized at 
$p = p^{t-1} - \frac{1}{\lambda_{\min}}\cdot \nabla\phi(p^{t-1})$, so we 
get that for every $p$,
\begin{eqnarray*}
\phi(p) & \ge & \phi(p^{t-1}) + \langle \nabla\phi(p^{t-1}), 
      -\frac{1}{\lambda_{\min}}\cdot \nabla\phi(p^{t-1})\rangle + 
\frac{\lambda_{\min}}{2}\cdot \left\|-\frac{1}{\lambda_{\min}}\cdot \nabla\phi(p^{t-1})\right\|_2^2 \\
& = & \phi(p^{t-1}) - \frac{1}{2\lambda_{\min}}\cdot \|\nabla\phi(p^{t-1})\|_2^2.
\end{eqnarray*}

By setting $p = p^*$ on the left-hand side, we get
\begin{eqnarray*}
\|\nabla\phi(p^{t-1})\|_2^2 & \ge & 2\lambda_{\min}\cdot\left(\phi(p^{t-1}) - \phi(p^*)\right).
\end{eqnarray*}

Subtracting $\phi(p^*)$ on both sides in the first bound and combining the two
bounds we get that
\begin{eqnarray*}
\phi(p^t) - \phi(p^*) & \le & \left(1 - \left(\frac{p_{\min}}{p_{\max}}\right)^2\cdot \frac{\lambda_{\min}}{\lambda_{\max}}\cdot\right)\cdot\left(\phi(p^{t-1}) - \phi(p^*)\right).
\end{eqnarray*}
This recurrence relation implies the theorem.
\end{proof}

The following lemma relates the proximity to the dual objective
optimum to the proximity to equilibrium. 
%Its proof is in the appendix.
\begin{lemma}\label{lm: price convergence}
Under the same assumptions and notation of 
Theorem~\ref{thm: strongly convex},
for every $\delta > 0$, for every
$p\in P$, and for the dual optimal solution $p^*$,
if $\phi(p)\le \phi(p^*) + 
\min\left\{1,\frac{1}{\lambda_{\max}^2}\right\}\cdot
\frac 1 2\lambda_{\min}\delta^2$,
then $\|p - p^*\|_2\le\frac{\delta}{\lambda_{\max}}$ 
and $|z(p) - z(p^*)|_\infty\le \delta$.
\end{lemma}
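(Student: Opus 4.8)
The plan is to combine the strong-convexity estimates already extracted in the proof of Theorem~\ref{thm: strongly convex} with the identity $\nabla\phi=-z$ of Lemma~\ref{lm: gradient}, after first observing that $\nabla\phi(p^*)=0$. Indeed, $p^*$ minimizes the convex dual objective $\phi$ over $\RR_+^m$, and since $p^*\in P$, condition~($i$) forces $p^*_j\ge p_{\min}>0$ for every good $j$ (strict positivity of $p_{\min}$ is forced by condition~($iii$), in keeping with the remark that t\^atonnement cannot recover from zero prices); hence $p^*$ is an interior minimizer and $\nabla\phi(p^*)=0$. (Equivalently: all of $p^*$'s prices being positive, the equilibrium/complementary-slackness conditions of Section~\ref{sec: Fisher} give $z(p^*)=0$, which is $\nabla\phi(p^*)=0$ by Lemma~\ref{lm: gradient}.)

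Next I would record the quadratic lower bound. The same second-order Taylor argument used in the proof of Theorem~\ref{thm: strongly convex}, now expanding $\phi$ around $p^*$ with interpolation point $q$ on the segment from $p^*$ to $p$ (which lies in $P$ by convexity, so that $\lambda_{\min}I\preceq\nabla^2\phi(q)$), gives for every $p\in P$
$$
\phi(p)\ \ge\ \phi(p^*) + \langle\nabla\phi(p^*)\mid p-p^*\rangle + \frac{\lambda_{\min}}{2}\,\|p-p^*\|_2^2\ =\ \phi(p^*) + \frac{\lambda_{\min}}{2}\,\|p-p^*\|_2^2,
$$
where the equality uses $\nabla\phi(p^*)=0$. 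Feeding in the hypothesis $\phi(p)-\phi(p^*)\le\min\{1,\lambda_{\max}^{-2}\}\cdot\frac12\lambda_{\min}\delta^2$ and cancelling $\frac12\lambda_{\min}$ yields $\|p-p^*\|_2^2\le\min\{1,\lambda_{\max}^{-2}\}\,\delta^2$, hence $\|p-p^*\|_2\le\min\{1,\lambda_{\max}^{-1}\}\,\delta\le\delta/\lambda_{\max}$, which is the first conclusion.

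For the excess-demand bound I would use that $\nabla^2\phi\preceq\lambda_{\max}I$ on the convex set $P$ makes $\nabla\phi$ $\lambda_{\max}$-Lipschitz there (integrate the Hessian along the segment from $p^*$ to $p$, which stays in $P$). Then, by Lemma~\ref{lm: gradient}, $\nabla\phi(p^*)=0$, and the bound just proved,
$$
|z(p)-z(p^*)|_\infty\ =\ |\nabla\phi(p)-\nabla\phi(p^*)|_\infty\ \le\ \|\nabla\phi(p)-\nabla\phi(p^*)\|_2\ \le\ \lambda_{\max}\,\|p-p^*\|_2\ \le\ \lambda_{\max}\cdot\frac{\delta}{\lambda_{\max}}\ =\ \delta,
$$
which is the second conclusion.

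The only point requiring care is the claim $\nabla\phi(p^*)=0$: one must rule out $p^*$ sitting on the boundary $\{p_j=0\}$ of the dual feasible region, where $\phi$ need only satisfy the one-sided stationarity $\nabla\phi(p^*)_j\ge 0$. This is exactly where the strict positivity of $p_{\min}$ in condition~($i$) is essential; once that is granted, the remainder is a routine specialization of the strong-convexity and Lipschitz bounds already available from Theorem~\ref{thm: strongly convex}, and I foresee no further difficulty. (Note that the role of the $\min\{1,\cdot\}$ in the hypothesis is twofold: the factor $\lambda_{\max}^{-2}$ is what delivers $\|p-p^*\|_2\le\delta/\lambda_{\max}$ and thence $|z(p)-z(p^*)|_\infty\le\delta$, while the extra ``$1$'' simultaneously gives the cruder bound $\|p-p^*\|_2\le\delta$, convenient for downstream use.)
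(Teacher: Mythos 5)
Your proposal is correct and follows essentially the same route as the paper: strong convexity (the $\lambda_{\min}$ lower bound on the Hessian) plus $\nabla\phi(p^*)=\vec{0}$ gives $\|p-p^*\|_2\le\delta/\lambda_{\max}$, and the $\lambda_{\max}$ upper bound on the Hessian along the segment controls $|z(p)-z(p^*)|_\infty$; your use of the Lipschitz property of $\nabla\phi$ is just a compressed version of the paper's line-integral-plus-Cauchy--Schwartz argument. Your explicit justification that $p^*$ is interior to $\RR_+^m$ (so $\nabla\phi(p^*)=\vec{0}$, equivalently $z(p^*)=\vec{0}$ by market clearing at strictly positive prices) fills in a step the paper asserts without comment.
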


\begin{proof}
Consider the second order Taylor expansion of $\phi(p)$
with respect to $\phi(p^*)$. For an interpolation point
$p' = \gamma p + (1-\gamma) p^*$,
$$
\phi(p) = \phi(p^*) + \langle p - p^* \mid \nabla\phi(p^*) \rangle +
  \frac 1 2 \langle p - p^* \mid \nabla^2\phi(p') \mid p - p^* \rangle.
$$
Notice that by the strong convexity assumption, $\lambda_{\min} > 0$.
Since $\nabla\phi(p^*) = \vec{0}$, we get that
$$
\min\{1,1/\lambda_{\max}^2\}\cdot\frac 1 2\lambda_{\min} \delta^2 \ge \phi(p) - \phi(p^*) =
\frac 1 2 \langle p - p^* \mid \nabla^2\phi(p') \mid p - p^* \rangle \ge
\frac 1 2\cdot\lambda_{\min}\cdot \|p - p^*\|_2^2.
$$
In particular, we get that $\|p - p^*\|_2^2\le\frac{ \delta^2}{\lambda_{\max}^2}$.
On the other hand, $|z_j(p) - z_j(p^*)| = |f_j(p) - f_j(p^*)|\le \|f(p) - f(p^*)\|_2$.
Consider $q(\gamma) = \gamma p + (1-\gamma) p^*$. By the fundamental
theorem of line integrals (a.k.a. the gradient theorem),
$f_j(p) - f_j(p^*) = \int_0^1 \langle \nabla(f_j(q(\gamma)))\mid p - p^*\rangle d\gamma$.
Therefore,
$$
\|f(p) - f(p^*)\|_2^2 = \sum_j \left(
    \int_0^1 \langle \nabla(f_j(q(\gamma)))\mid p - p^*\rangle d\gamma\right)^2
\le \sum_j \int_0^1\left(\langle \nabla(f_j(q(\gamma)))\mid p - p^*\rangle\right)^2 d\gamma.
$$
(The inequality is a Cauchy-Schwartz argument---consider the random variable
$\langle \nabla(f_j(q(\gamma)))\mid p - p^*\rangle$ on $\gamma\in [0,1]$ endowed
with the uniform probability measure.)
Therefore, there exists $\gamma\in [0,1]$ such that for $q = q(\gamma)$ we
have
$$
\|f(p) - f(p^*)\|_2^2\le \sum_j \left(\langle \nabla(f_j(q))\mid p - p^*\rangle\right)^2
= \|\nabla^2\phi(q) \mid p - p^*\rangle\|_2^2
\le \lambda_{\max}^2 \|p - p^*\|_2^2\le\delta^2.
$$
This completes the proof.
\end{proof}

\begin{corollary}\label{cor: convergence to equilibrium}
For every sufficiently small $\delta > 0$, 
if $\phi(p)\le \phi(p^*) + 
\min\left\{1,\frac{1}{\lambda_{\max}^2}\right\}\cdot
\frac 1 2\lambda_{\min}\delta^2$,
then $(p, x(p))$ is a $\delta$-approximate equilibrium
in the sense of Definition~\ref{def: approx-equilib 1}.
\end{corollary}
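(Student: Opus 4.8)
The plan is to read off the three conditions of Definition~\ref{def: approx-equilib 1} from the two conclusions of Lemma~\ref{lm: price convergence} together with two elementary facts about the dual optimum $p^*$. Condition P1 is immediate, since the pair under consideration is literally $(p, x(p))$, so $x_i = x_i(p)$ for every agent $i$ by construction. For P2 and P3 I would first apply Lemma~\ref{lm: price convergence} --- its hypothesis is verbatim the hypothesis of the corollary --- to get $\|p - p^*\|_2 \le \frac{\delta}{\lambda_{\max}}$ and $|z(p) - z(p^*)|_\infty \le \delta$. I would also reuse the opening chain of inequalities in the proof of that lemma, namely $\frac{1}{2}\lambda_{\min}\|p - p^*\|_2^2 \le \phi(p) - \phi(p^*) \le \min\{1, 1/\lambda_{\max}^2\}\cdot\frac{1}{2}\lambda_{\min}\delta^2$, to extract the slightly cleaner bound $\|p - p^*\|_2 \le \delta$ (using $\min\{1, 1/\lambda_{\max}^2\} \le 1$), which is the form P3 needs when $\lambda_{\max} < 1$.

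The two facts about $p^*$ I would record are: (a) at any market equilibrium the supply constraint gives $z_j(p^*) \le 0$ for every good $j$; and (b) if $z_j(p^*) < 0$ then $p_j^* = 0$ --- this is condition~($iv$) of the equilibrium definition, equivalently the complementary-slackness/KKT condition for the dual program $\min_{p \ge 0}\phi(p)$, since by Lemma~\ref{lm: gradient} the $j$-th partial of $\phi$ equals $-z_j$. In fact, assumption~($i$) of Theorem~\ref{thm: strongly convex} forces $p_{\min} > 0$, otherwise condition~($iii$) would require $\veps \le 0$; hence $p^* \in P$ has every coordinate at least $p_{\min}$, lies in the interior of $\RR_+^m$, and satisfies $\nabla\phi(p^*) = \vec{0}$, i.e.\ $z(p^*) = \vec{0}$. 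I would nonetheless keep the argument phrased through (a)--(b) so that it also survives a boundary optimum.

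Now P2 is immediate: $z_j(p) = \bigl(z_j(p) - z_j(p^*)\bigr) + z_j(p^*) \le |z(p) - z(p^*)|_\infty + 0 \le \delta$. For P3, suppose $z_j(p) < -\delta$; then $z_j(p^*) \le z_j(p) + |z(p) - z(p^*)|_\infty < -\delta + \delta = 0$, so by (b) we get $p_j^* = 0$, and hence $p_j = |p_j - p_j^*| \le \|p - p^*\|_2 \le \delta$, which is exactly P3. (With $p^*$ interior as above this case cannot occur at all, since $z(p^*) = \vec{0}$ forces $z_j(p) \ge -\delta$ for every $j$; the corollary holds regardless.) The phrase ``sufficiently small $\delta$'' is used only to guarantee that the price vector in question --- e.g.\ an iterate $p^t$, or $p^T$ as produced by Theorem~\ref{thm: strongly convex} --- lies in $P$, where Lemma~\ref{lm: price convergence} is applicable, and that the resulting guarantee is a bona fide approximate equilibrium rather than a vacuous one.

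I do not expect a real obstacle here: this is a routine corollary of Lemma~\ref{lm: price convergence}. The only points that need a little care are matching the corollary's hypothesis to that of Lemma~\ref{lm: price convergence} exactly, and upgrading the displacement bound from $\frac{\delta}{\lambda_{\max}}$ to $\delta$ by reusing the first two inequalities of that lemma's proof rather than only its stated conclusion.
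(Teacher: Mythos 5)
Your proof is correct and follows exactly the route the paper intends (the corollary is stated without proof as an immediate consequence of Lemma~\ref{lm: price convergence}): P1 by construction, P2 from $|z(p)-z(p^*)|_\infty\le\delta$ together with $z_j(p^*)\le 0$, and P3 from complementary slackness at $p^*$ plus the displacement bound. You also correctly identified the one detail worth spelling out, namely that the factor $\min\{1,1/\lambda_{\max}^2\}$ in the hypothesis is what upgrades the stated conclusion $\|p-p^*\|_2\le\delta/\lambda_{\max}$ to the bound $\|p-p^*\|_2\le\delta$ needed for P3.
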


A market with (non-linear) CES utilities is defined by a constant
$\rho\in (-\infty,0)\cup (0,1)$,
and constants $c_{ij}\ge 0$, for $i=1,2,\dots,n$ and $j=1,2,\dots,m$.
The utility function of agent $i$ is
$u_i(x) = 
\left(\sum_{j=1}^m \left(c_{ij} x_{ij}\right)^{\rho}\right)^{1/\rho}$.\footnote{We note
that the definition and the following claim can be generalized easily to 
the case where for each player $i$ there is a different constant $\rho_i$ 
instead of one uniform constant $\rho$.}
\begin{corollary}[\cite{CCD13}]\label{cor: CES}
Consider a market with non-linear CES utilities.
There are constants $\kappa_1 = \kappa_1(b,c,\rho)$ and 
$\kappa_2 = \kappa_2(b,c,\rho)$ such that the following
holds. If $\veps\le \frac{1}{\kappa_1}$, then for every
$\delta > 0$ and for every
$T\ge \frac{-\kappa_2\ln p^0_{\min}}{\veps^2}\cdot \ln\delta$,
the price-demand pair $(p^T,x^{T+1})$ is
a $\delta$-approximate equilibrium in the sense
of Definition~\ref{def: approx-equilib 1}.
\end{corollary}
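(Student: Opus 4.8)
The plan is to deduce the corollary from Theorem~\ref{thm: strongly convex} and Corollary~\ref{cor: convergence to equilibrium}, after supplying two ingredients: a price region on which hypotheses $(i)$–$(iii)$ of the theorem hold with constants depending only on $b,c,\rho$, and a proof that the t\^atonnement trajectory enters, and then never leaves, such a region. For CES utilities the utility-maximizing basket is explicit, $x_{ij}(p) = b_i c_{ij}^{\sigma-1} p_j^{-\sigma}\big/\sum_k c_{ik}^{\sigma-1} p_k^{1-\sigma}$ with $\sigma = 1/(1-\rho)$, so $z(p)$ (with $j$-th coordinate $\sum_i x_{ij}(p) - 1$) is real-analytic on $\RR_{++}^m$; by Lemma~\ref{lm: gradient} the dual objective $\phi$ is then $C^\infty$ there, with $\nabla\phi = -z$ and Hessian $-Dz$. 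Differentiating the closed form one bounds, on each box $P_a := \{p\in\RR_+^m:\ a\le p_j\le 1\ \forall j\}$ with $a>0$, the entries of $Dz$ and $D^2z$ by finite constants depending on $a,b,c,\rho$; this gives an upper Hessian eigenvalue $\lambda_{\max}(a)$ and Lipschitz constant $L(a) = \lambda_{\max}(a)$ for $\nabla\phi$. The one delicate point is to show a strictly positive lower eigenvalue $\lambda_{\min}(a) > 0$ of $-Dz$ on the hyperplane $\{q\in\RR^m:\ \sum_j q_j = 0\}$ — which is all that matters, since $p^*$ and every $p^t$ lie in $B$ — and this holds because the Fisher normalization $\sum_j p_j = 1$ removes the only flat direction of the convex function $\phi$.

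Next I establish a priori bounds on the trajectory. By Proposition~\ref{pr: tatonnement p}, $p^t\in B$, so $p^t_j\le 1$ for all $t,j$. For the lower bound, the closed form yields a structural constant $\beta = \beta(b,c,\rho)\in(0,1)$ such that any $p\in B$ with some $p_j$ small enough has $z_j(p) > 0$, together with a quantitative estimate of the form $z_j(p) \ge \gamma p_j^{-\sigma} - 1$ in that regime; an over-demanded good has a rising price, a price can only dip a bounded multiplicative amount in one step (as $\veps < \tfrac12$ and $z_j \ge -1$), and a cheap good recovers at least geometrically. The standard CES a priori price bounds of~\cite{CF08,CCR12,CCD13} package these facts into: (a) an equilibrium price vector $p^*$ lies in $P := P_\beta$; (b) there is a burn-in time $T_1 = O\!\big(\ln(1/p^0_{\min})/\veps\big)$, with constant depending only on $b,c,\rho$, after which $p^t\in P$ for all $t \ge T_1$.

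Now choose $\kappa_1 = \kappa_1(b,c,\rho)$ large enough that $\veps\le 1/\kappa_1$ forces $\veps<\tfrac12$ and $\veps\le\beta/\lambda_{\max}(\beta)$, so that the hypotheses of Theorem~\ref{thm: strongly convex} are met with $P$ as above, $p_{\min}=\beta$, $p_{\max}=1$, $L = \lambda_{\max}(\beta)$ — all structural — and with the tail $p^{T_1},p^{T_1+1},\dots$ playing the role of the sequence $p^0,p^1,\dots$ (note $p^*\in P$). Rerunning the computation in the proof of Theorem~\ref{thm: strongly convex} with this $\veps$ (which may be smaller than the canonical value used there) gives, for $s\ge 0$, $\phi(p^{T_1+s}) - \phi(p^*) \le (1-\delta')^{s}\,\big(\phi(p^{T_1}) - \phi(p^*)\big)$ with $\delta' = \veps\,\beta\,\lambda_{\min}(\beta) = \Theta(\veps)$, and the estimate $\|p^{T_1+s} - p^*\|_2^2 \le 2\big(\phi(p^{T_1+s}) - \phi(p^*)\big)/\lambda_{\min}$ keeps the tail in $P$ inductively. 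Crucially $p^{T_1}$ lies in the fixed compact set $P$, so $\phi(p^{T_1}) - \phi(p^*) \le M$ for a structural $M$ with no dependence on $p^0$; hence $\phi(p^{T_1+s}) - \phi(p^*)$ falls below the threshold $\min\{1,\lambda_{\max}^{-2}\}\cdot\tfrac12\lambda_{\min}\delta^2$ of Corollary~\ref{cor: convergence to equilibrium} once $s \ge \frac{\ln M + 2\ln(1/\delta) + O(1)}{\delta'} = O\!\big(\ln(1/\delta)/\veps\big)$, and then $(p^{T_1+s},x^{T_1+s+1}) = (p^{T_1+s},x(p^{T_1+s}))$ is a $\delta$-approximate equilibrium. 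Adding the two phases, $T := T_1 + s = O\!\big((\ln(1/p^0_{\min}) + \ln(1/\delta))/\veps\big) \le \kappa_2\cdot\frac{(-\ln p^0_{\min})\,\ln(1/\delta)}{\veps^2}$ for a structural $\kappa_2$ (using $\veps < \tfrac12$ and that $p^0_{\min},\delta$ lie below a fixed constant, so both logarithms exceed $1$), which is the bound asserted in the corollary.

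I expect the main obstacle to be the second step — the a priori price bounds: proving that the box $P$ is genuinely invariant (the multiplicative update must not overshoot its walls) and that the burn-in is only $O(\ln(1/p^0_{\min})/\veps)$, both of which have to cope with the excess demand $z(p^t)$ blowing up near the boundary of the price simplex, the key being that a too-small price carries a large positive $z_j$ that pushes it back up before it — or any other price — can stray too far. The strictly positive lower Hessian eigenvalue $\lambda_{\min}(a) > 0$ of the first step is a smaller, but still necessary, technical point.
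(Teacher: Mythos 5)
Your route is essentially the paper's own: the corollary is stated there without a separate proof, as an instantiation of Theorem~\ref{thm: strongly convex} plus Lemma~\ref{lm: price convergence}/Corollary~\ref{cor: convergence to equilibrium}, with the CES-specific ingredients (a priori price bounds confining $p^*$ and the trajectory to a box with structural $p_{\min},p_{\max}$, and two-sided Hessian bounds there) imported from~\cite{CF08,CCR12,CCD13}, exactly as you do; your burn-in-plus-linear-convergence accounting recovers the stated $T$ (reading the paper's ``$\ln\delta$'' as $\ln(1/\delta)$, as clearly intended, and with the constants absorbing the bookkeeping). One inaccuracy to fix: for non-linear CES the dual objective has \emph{no} flat direction on a box in $\RR_{++}^m$ (after discarding goods no agent desires). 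Writing $s=\rho/(1-\rho)\in(-1,0)\cup(0,\infty)$, each term $\frac{b_i}{s}\ln\bigl(\sum_j (b_ic_{ij})^s p_j^{-s}\bigr)$ of $\phi$ has quadratic form at $x$ bounded below, via Cauchy--Schwarz, by $\min\{1,s+1\}\cdot G_i^{-1}\sum_j (b_ic_{ij})^s p_j^{-s-2}x_j^2$ with $G_i=\sum_j (b_ic_{ij})^s p_j^{-s}$, so $\phi$ is strongly convex on the box in all directions and hypothesis ($ii$) can be verified directly. This matters because if, as you propose, you only establish $\lambda_{\min}>0$ on the tangent space $\{q:\sum_j q_j=0\}$ of $B$, Theorem~\ref{thm: strongly convex} cannot be invoked (or rerun) verbatim: its proof lower-bounds $\|\nabla\phi(p^{t-1})\|_2^2$ by evaluating the quadratic minorant at $p=p^{t-1}-\frac{1}{\lambda_{\min}}\nabla\phi(p^{t-1})$, which leaves that hyperplane. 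The repair is easy --- minimize over the affine slice instead, note that the update direction $\mathrm{diag}(p^t)\,z(p^t)$ is tangent to $B$ since $\sum_j p^t_j z_j(p^t)=0$, and use that $\|\nabla\phi\|_2$ dominates the norm of its projection --- but as written this step is unjustified; with either repair (or simply the full-space strong convexity above) the argument is sound, and your citation-level treatment of the price bounds and burn-in is at the same level of rigor as the paper's.
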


The limit of a CES utility
$u_i(x) = \left(\sum_{j=1}^m \left(c_{ij} x_{ij}\right)^{\rho}\right)^{1/\rho}$
as $\rho_i\rightarrow 0$ is called a Cobb-Douglas
utility. Explicitly, this puts
$u_i(x) = \prod_{j=1}^m x_{ij}^{c_{ij}}$.
Markets with Cobb-Douglas utilities converge to equilibrium
in one step if we set $\veps = 1$. For a fixed $\veps < 1$,
convergence to a $\delta$-approximate equilibrium in
$O(\log(1/\delta))$ steps is easy to establish using the
quantitative estimates of Banach's fixed-point theorem,
as the mapping $p^t\mapsto p^{t+1}$ is a contraction
(using any choice of norm). Alternatively, 
Theorem~\ref{thm: strongly convex} applied to Cobb-Douglas
utilities as well.
\begin{corollary}[\cite{CF08}]\label{cor: CobbDouglas}
Corollary~\ref{cor: CES} applies also to markets
with Cobb-Douglas utilities.
\end{corollary}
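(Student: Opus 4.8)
The plan is to flesh out the two routes indicated just above the statement. First I would record the explicit form of Cobb--Douglas demand: normalizing so that $\sum_{j} c_{ij} = 1$ for every $i$ (this only rescales $u_i$ and does not change optimal baskets), the optimal basket of agent $i$ under prices $p$ is $x_{ij}(p) = c_{ij}b_i/p_j$, the standard Cobb--Douglas demand. Hence the total demand for good $j$ is $f_j(p) = \beta_j/p_j$ with $\beta_j = \sum_i c_{ij}b_i$, and $\sum_j\beta_j = \sum_i b_i = 1$, so $p^* = (\beta_1,\dots,\beta_m)\in B$ satisfies $z(p^*)=0$ and is an equilibrium price vector. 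Substituting $z_j(p^t) = \beta_j/p^t_j - 1$ into the t\^{a}tonnement rule collapses it to the affine map
\[
p^{t+1}_j \;=\; p^t_j\bigl(1+\veps\, z_j(p^t)\bigr) \;=\; (1-\veps)\,p^t_j + \veps\,\beta_j .
\]

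\textbf{Route 1 (contraction).} From this display $p^{t+1}_j - \beta_j = (1-\veps)(p^t_j - \beta_j)$, so $\|p^t - p^*\|_\infty = (1-\veps)^t\|p^0 - p^*\|_\infty \le (1-\veps)^t$, the last step because $p^0,p^*\in B$ forces every coordinate into $[0,1]$. Thus after $O(\veps^{-1}\log(1/\eta))$ steps one has $\|p^t - p^*\|_\infty\le\eta$, for any target $\eta>0$ and any fixed $\veps\in(0,1)$ (and $p^1 = p^*$ if $\veps=1$). It then remains to turn closeness in $p$ into a $\delta$-approximate equilibrium in the sense of Definition~\ref{def: approx-equilib 1}: P1 holds by construction since $x^{t+1}=x(p^t)$; for P2, a good with $\beta_j>0$ has $z_j(p^t)=\beta_j/p^t_j-1\le \eta/(\beta_j-\eta)\le\delta$ once $\eta\le\tfrac12\delta\beta_j$, while a good with $\beta_j=0$ has $z_j\equiv-1\le\delta$; for P3, for a good with $\beta_j>0$ the hypothesis $z_j(p^t)<-\delta$ already fails once $p^t_j<\beta_j+\eta<\beta_j/(1-\delta)$ (the same range of $\eta$), while for a good with $\beta_j=0$ one has $p^t_j=(1-\veps)^t p^0_j\le(1-\veps)^t\le\delta$ after the same number of steps. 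Choosing $\eta$ to be a constant (depending on $b,c$) times $\delta$ yields a $\delta$-approximate equilibrium in $O(\veps^{-1}\log(1/\delta))$ steps, with no dependence on $p^0_{\min}$; this is in fact stronger than the bound claimed in Corollary~\ref{cor: CES}, which therefore holds a fortiori.

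\textbf{Route 2 (via Theorem~\ref{thm: strongly convex}).} Alternatively, computing the conjugates $g_i^*$ exactly as in the proof of Lemma~\ref{lm: gradient} gives the separable dual objective $\phi(p)=\sum_{j=1}^m\bigl(p_j-\beta_j\ln p_j\bigr)+\mathrm{const}$, whose Hessian is diagonal with entries $\beta_j/p_j^2$ and which is minimized at $p=p^*$ (consistent with Lemma~\ref{lm: gradient}). Take $P=[p_{\min},p_{\max}]^m$ with $p_{\max}\le 1$ and $p_{\min}>0$ bounded below in terms of $b,c$ only: since each update is a convex combination of $p^t_j$ and $\beta_j$, after $O(\veps^{-1})$ steps every coordinate with $\beta_j>0$ satisfies $p^t_j\ge\tfrac12\beta_j$, so restarting the analysis from that point we may take $p_{\min}=\tfrac12\min\{\beta_j:\beta_j>0\}$ on those coordinates (the coordinates with $\beta_j=0$, for which the price decays geometrically to $0$, are handled separately as in Route~1). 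This $P$ is closed, convex, and contains $p^*$ and the (restarted) sequence, giving~(i); on $P$ the Hessian eigenvalues lie in $[\lambda_{\min},\lambda_{\max}]$ with $\lambda_{\min}=(\min_j\beta_j)/p_{\max}^2>0$ and $\lambda_{\max}=(\max_j\beta_j)/p_{\min}^2=L$, giving~(ii); and~(iii) reads $\veps\le p_{\min}/(p_{\max}^2L)=1/\kappa_1$ for a suitable $\kappa_1=\kappa_1(b,c)$. Then Theorem~\ref{thm: strongly convex} gives geometric decay of $\phi(p^T)-\phi(p^*)$, and Corollary~\ref{cor: convergence to equilibrium} (through Lemma~\ref{lm: price convergence}) converts it into a $\delta$-approximate equilibrium, reproducing the form of Corollary~\ref{cor: CES}.

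No step is genuinely hard; the only subtleties are bookkeeping. The one to watch is goods whose equilibrium price is $0$ (equivalently $\beta_j=0$): these are invisible to the strong-convexity argument, since the corresponding term of $\phi$ is linear and $p_j$ cannot be kept away from $0$, so P3 must be checked for them directly --- which the exact formula $p^t_j=(1-\veps)^t p^0_j$ makes trivial. A secondary point is making the constants $\kappa_1,\kappa_2$ independent of $p^0$: Route~1 achieves this automatically because $\|p^0-p^*\|_\infty\le1$, and Route~2 achieves it by restarting the potential-function analysis once the iterates have entered a $p^0$-independent box.
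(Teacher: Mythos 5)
Your proposal is correct and follows essentially the same two routes the paper itself indicates for this corollary: the paper's argument is precisely that for Cobb--Douglas demand the update map is a contraction (one-step convergence at $\veps=1$, Banach fixed-point estimates giving $O(\log(1/\delta))$ steps for fixed $\veps<1$), or alternatively an application of Theorem~\ref{thm: strongly convex}, and you have simply fleshed out both with explicit constants and the (correct) separate treatment of goods with zero equilibrium price.
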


Some conventional utility functions, for example Leontief
utilities, lead to markets that do not satisfy the conditions
of Theorem~\ref{thm: strongly convex}. Leontief utilities
specifically are handled in~\cite{CCD13} using the (weaker)
bounds on the convergence of gradient descent for more
general convex functions. We offer here an alternative
approach using two ideas. One is a projection of the space
of prices that enables us to replace the bounds implied by
strong convexity by similar bounds that do not require
strong convexity. The other is a convergence analysis
of the multiplicative weights update method variety (which
is known to be equivalent to gradient descent---in fact it
is a dual argument). More specifically, we use the
following lemma, which is a variant of the multiplicative
weights update method.
Its proof appears in the appendix.
\begin{lemma}\label{lm: mwum}
Let $w = \max_{j,t} |z_j(p^t)|$, and let $\veps \le \frac{1}{2w}$.
Also let $v = \max_j \frac{1}{T}\cdot\sum_{t=0}^{T-1} (z_j(p^t))^2$.
Then,
$$
\frac{1}{T}\cdot\sum_{t=0}^{T-1} z_j(p^t)\le\veps v + \frac{\ln(1/p^0_j)}{\veps T}.
$$
\end{lemma}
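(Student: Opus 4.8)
The plan is to establish the bound in Lemma~\ref{lm: mwum} via a standard potential-function (multiplicative weights) argument, tracking the evolution of the price of a single good $j$ and using the fact that $p_j^t$ essentially plays the role of the weight on ``expert'' $j$ in the MWU framework. The key observation is that the update rule $p_j^{t+1} = p_j^t(1+\veps z_j(p^t))$ is exactly a multiplicative weights update with gain $z_j(p^t)$, whose magnitude is bounded by $w$ by definition; the constraint $\veps \le \frac{1}{2w}$ guarantees that $1 + \veps z_j(p^t) \ge 1/2 > 0$, so prices stay strictly positive and we may take logarithms freely.

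First I would write $\ln p_j^T - \ln p_j^0 = \sum_{t=0}^{T-1} \ln(1 + \veps z_j(p^t))$. The workhorse inequality is the elementary estimate $\ln(1+y) \ge y - y^2$ valid for $|y| \le 1/2$ (or a similar constant-factor version), applied with $y = \veps z_j(p^t)$, which is legitimate since $|\veps z_j(p^t)| \le \veps w \le 1/2$. This yields
$$
\ln p_j^T - \ln p_j^0 \;\ge\; \sum_{t=0}^{T-1}\left(\veps z_j(p^t) - \veps^2 z_j(p^t)^2\right)
\;=\; \veps\sum_{t=0}^{T-1} z_j(p^t) \;-\; \veps^2 \sum_{t=0}^{T-1} z_j(p^t)^2.
$$
Next I would use the global constraint from Proposition~\ref{pr: tatonnement p}: since $p^T \in B$, we have $\sum_j p_j^T = 1$, hence $p_j^T \le 1$ and therefore $\ln p_j^T \le 0$. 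Substituting this gives
$$
-\ln p_j^0 \;\ge\; \ln p_j^T - \ln p_j^0 \;\ge\; \veps\sum_{t=0}^{T-1} z_j(p^t) \;-\; \veps^2 \sum_{t=0}^{T-1} z_j(p^t)^2.
$$
Rearranging and dividing by $\veps T$, and then bounding $\frac{1}{T}\sum_{t=0}^{T-1} z_j(p^t)^2 \le v$ by definition of $v$, produces exactly the claimed inequality $\frac{1}{T}\sum_{t=0}^{T-1} z_j(p^t) \le \veps v + \frac{\ln(1/p_j^0)}{\veps T}$.

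I do not expect a serious obstacle here; the only point requiring a little care is pinning down the exact constant in the inequality $\ln(1+y) \ge y - cy^2$ on the interval $|y| \le 1/2$ so that it matches the coefficient in the statement (one wants $c = 1$, which does hold on $[-1/2, 1/2]$, though verifying it near $y = -1/2$ is the tightest case and should be checked). A secondary subtlety is making sure the bound $1 + \veps z_j(p^t) > 0$ is genuinely guaranteed for \emph{all} $t$ — this follows from $\veps \le \frac{1}{2w}$ with $w$ a uniform bound over all $t$, which is precisely how $w$ is defined, so there is no circularity. The use of $p_j^T \le 1$ (rather than trying to lower-bound $\ln p_j^T$) is what makes the argument clean and avoids needing any lower bound on prices.
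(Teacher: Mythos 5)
Your proof is correct and follows essentially the same multiplicative-weights potential argument as the paper: telescoping $\ln p_j^t$, applying $\ln(1+y)\ge y-y^2$ on $[-\tfrac12,\tfrac12]$ (valid since $|\veps z_j(p^t)|\le\veps w\le\tfrac12$), and rearranging. The only cosmetic difference is that you obtain the upper bound $\ln p_j^T\le 0$ directly from Proposition~\ref{pr: tatonnement p} (which is proved independently, so there is no circularity), whereas the paper rederives the same fact inside the potential argument by bounding $\sum_j p_j^{t+1}$ with the exponential inequality and observing that $\sum_j p_j^t z_j(p^t)=0$ because every agent spends its full budget.
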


Using Lemma~\ref{lm: mwum} directly to achieve convergence to
equilibrium within an error parameter $\delta$ requires a choice of
$\veps$ that depends on $\delta$. However, with some additional
assumptions, one can use a fixed $\veps$, independent of $\delta$.
This is what we discuss next. We will need the following definition.
\begin{definition}\label{def: uniform}
Let $\beta$ be a function mapping $\RR_+$ to itself.
An Eisenberg-Gale market is {\em $\beta$-uniform}
in a convex region $P$ of price vectors (which contains
an equilibrium vector $p^*$)
iff for every $\alpha > 0$ and for every $p\in P$ the
following holds.
If $\phi(p)\le \phi(p^*)+\beta(\alpha)$, then
$|z_j(p) - z_j(p^*)|\le \alpha$ for every $j=1,2,\dots,m$.
\end{definition}
\noindent Notice that Lemma~\ref{lm: price convergence} shows
that if $\phi$ is strongly convex, then the market is
$\beta$-uniform for $\beta(\alpha) = \Theta(\alpha^2)$.
Put $W = \max\{1,\max_{p\in P}\max_j z_j(p)\}$. 
%The proof of the following lemma appears in the appendix.
\begin{lemma}\label{lm: better mwum}
Consider a $\beta$-uniform Eisenberg-Gale
market. Suppose that there exists a constant
$\kappa > 0$ such that for every $\alpha$,
$\beta(\alpha)\ge \kappa\cdot\alpha^2$. Further
suppose that for sufficiently small
$\veps$, the sequence $\phi(p^0),\phi(p^1),\dots$
is monotonically non-increasing. Then, for sufficiently
small $\veps$ the following holds.
For every $\delta > 0$, if
$T \ge\frac{W\ln(1/p^0_{\min})}{8\veps^2 \delta^3}$,
then $\frac{1}{T}\cdot\sum_{t=0}^{T-1} z_j(p^t)\le \beta(\delta)$.
\end{lemma}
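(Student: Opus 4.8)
The plan is to run the multiplicative‑weights telescoping of Lemma~\ref{lm: mwum} twice: once coordinate‑by‑coordinate, and once against the equilibrium vector $p^*$ to obtain a $T$‑independent bound on the dual gaps $s_t:=\phi(p^t)-\phi(p^*)$, the two being glued together by the $\beta$‑uniformity. Throughout, take $\veps\le\min\{1/(2W),\,\kappa/2\}$; by hypothesis this also makes $\phi(p^0)\ge\phi(p^1)\ge\cdots$.

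\emph{Step 1 (aggregate bound on the dual gaps).} Let $D(p^*\|q)=\sum_j p^*_j\ln(p^*_j/q_j)\ge 0$ be the relative entropy, well defined since $p^*,p^t\in B$ (Proposition~\ref{pr: tatonnement p}). Using $\ln(1+\veps z_j(p^t))\ge \veps z_j(p^t)-\veps^2 z_j(p^t)^2$ — valid because $\veps|z_j(p^t)|\le\veps W\le\tfrac12$ — the computation behind Lemma~\ref{lm: mwum} gives, before the maximum over $j$ is taken, $\ln p^T_j\ge\ln p^0_j+\veps\sum_{t<T}z_j(p^t)-\veps^2\sum_{t<T}z_j(p^t)^2$ for every $j$. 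Multiplying by $p^*_j$, summing over $j$, and using $\sum_j p^*_j\ln p^T_j=\sum_j p^*_j\ln p^*_j-D(p^*\|p^T)$ together with $\langle p^*\mid z(p^t)\rangle=\langle p^*\mid-\nabla\phi(p^t)\rangle\ge\phi(p^t)-\phi(p^*)=s_t$ (convexity of $\phi$, Lemma~\ref{lm: gradient}, and $\langle p^t\mid z(p^t)\rangle=0$), one gets $\veps\sum_{t<T}s_t\le D(p^*\|p^0)+\veps^2\sum_{t<T}\sum_j p^*_j z_j(p^t)^2$. The key point is that every good $j$ with $p^*_j>0$ is \emph{tight}, i.e.\ $z_j(p^*)=0$, so $\beta$‑uniformity applied with $\alpha=\sqrt{s_t/\kappa}$ — legitimate because $\beta(\alpha)\ge\kappa\alpha^2=s_t$ — yields $z_j(p^t)^2=(z_j(p^t)-z_j(p^*))^2\le s_t/\kappa$, hence $\sum_j p^*_j z_j(p^t)^2\le s_t/\kappa$. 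Substituting this and absorbing the $(\veps^2/\kappa)\sum_{t<T}s_t$ term on the left (here $\veps\le\kappa/2$ is used) gives $\sum_{t<T}s_t\le 2D(p^*\|p^0)/\veps$, independently of $T$.

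\emph{Step 2 (per good).} For a tight good $j$ (in particular any good of positive equilibrium price), combine the coordinate inequality $\tfrac1T\sum_{t<T}z_j(p^t)\le\tfrac{\ln(1/p^0_j)}{\veps T}+\veps\cdot\tfrac1T\sum_{t<T}z_j(p^t)^2$ from the proof of Lemma~\ref{lm: mwum} with $z_j(p^t)^2\le s_t/\kappa$ and Step~1 to obtain $\tfrac1T\sum_{t<T}z_j(p^t)\le\tfrac1T\big(\tfrac{\ln(1/p^0_{\min})}{\veps}+\tfrac{2D(p^*\|p^0)}{\kappa}\big)$, which is $\le\beta(\delta)$ once $T$ exceeds the stated threshold (bounding $\beta(\delta)\ge\kappa\delta^2$ and $D(p^*\|p^0)$ crudely in terms of $\ln(1/p^0_{\min})$ and $W$). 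For a \emph{slack} good $j$, i.e.\ $z_j(p^*)<0$ — which by equilibrium condition~(iv) forces $p^*_j=0$, so these goods escape the estimate above — monotonicity of $\phi(p^t)$ and Step~1 show that after $O\!\big(D(p^*\|p^0)/(\veps\kappa|z_j(p^*)|^2)\big)$ steps one has $s_t<\kappa|z_j(p^*)|^2$, and then $z_j(p^t)\le z_j(p^*)+\sqrt{s_t/\kappa}<0$ for all later $t$; so $\tfrac1T\sum_{t<T}z_j(p^t)$ is at most $W$ times that step count over $T$, again $\le\beta(\delta)$ for $T$ past the threshold. (When $|z_j(p^*)|$ is below $\delta$ one instead uses $z_j(p^t)\le\sqrt{s_t/\kappa}$ directly; the case split is what produces the extra power of $1/\delta$ in the threshold.)

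\emph{Main obstacle.} The delicate point is Step~1's control of $\veps^2\sum_t\sum_j p^*_j z_j(p^t)^2$: unlike the strongly convex, Lipschitz‑gradient regime of Theorem~\ref{thm: strongly convex}, here $\phi$ carries no uniform second‑order bound and the prices need not stay bounded away from $0$, so this term cannot be dominated by a geometrically decaying quantity. The $\beta$‑uniformity with $\beta(\alpha)\ge\kappa\alpha^2$ is exactly the surrogate that converts ``$\phi$ near its optimum'' into ``excess demands near their equilibrium values'', but it only governs the \emph{tight} goods, which is precisely why slack goods need the separate sign argument — and why the $\phi$‑monotonicity hypothesis is invoked, to guarantee that once the dual gap is small it stays small. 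The remaining effort is bookkeeping the constants into the stated $\dfrac{W\ln(1/p^0_{\min})}{8\veps^2\delta^3}$ form, where the estimates are intentionally loosened.
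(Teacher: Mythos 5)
Your Step 1 is correct and is a genuinely different mechanism from the paper's. The paper never controls the dual gaps $s_t=\phi(p^t)-\phi(p^*)$ directly; it reruns Lemma~\ref{lm: mwum} over epochs of geometrically growing length ($T_{i+1}=7(T_0+\cdots+T_i)$), and at the end of each epoch converts the averaged bound into primal feasibility of the scaled average allocation, then uses weak duality, Claim~\ref{cl: primal approx}, monotonicity of $\phi(p^t)$ and $\beta$-uniformity to halve the pointwise bound on $\max_j z_j(p^T)$, feeding the halved bound back into the variance term of Lemma~\ref{lm: mwum}; iterating the halving $\log_2(w/\delta)$ times is what produces the $1/\delta^3$. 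Your entropy-against-$p^*$ telescoping, the use of complementary slackness ($p^*_j>0\Rightarrow z_j(p^*)=0$) together with $\langle p^*\mid z(p^t)\rangle\ge s_t$ and $\beta$-uniformity to get $\sum_j p^*_j z_j(p^t)^2\le s_t/\kappa$, and the conclusion $\sum_t s_t\le 2D(p^*\,\|\,p^0)/\veps$ are all valid, and they dispose of the tight goods cleanly (indeed with only a $1/(\veps\kappa\delta^2)$ requirement on $T$).

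The gap is in Step 2's handling of the non-tight goods, and it is not just constant bookkeeping. For a slack good you charge the first $t_j=O\bigl(D(p^*\|p^0)/(\veps\kappa\, z_j(p^*)^2)\bigr)$ steps at cost $W$ each; when $|z_j(p^*)|$ is only of order $\delta$ and $T$ equals the stated threshold $W\ln(1/p^0_{\min})/(8\veps^2\delta^3)$, this contribution is of order $\veps\delta/\kappa$, and since the only information about $\beta$ is $\beta(\delta)\ge\kappa\delta^2$, certifying ``$\le\beta(\delta)$'' forces $\veps\lesssim\kappa^2\delta$. The same happens in your near-tight case: averaging $z_j(p^t)\le\sqrt{s_t/\kappa}$ with $s_t\le 2D(p^*\|p^0)/(\veps t)$ gives a bound of order $\kappa^{-1/2}\sqrt{\veps}\,\delta^{3/2}$ at the stated $T$, which is $\le\kappa\delta^2$ only if $\veps\lesssim\kappa^3\delta$. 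But the lemma quantifies $\veps$ \emph{before} $\delta$ (this fixed-$\veps$ property is exactly what the paper advertises over the standard multiplicative-weights analysis), so as written your argument proves a strictly weaker statement: either $\veps$ must shrink linearly with $\delta$, or, for fixed $\veps$, the threshold degrades to order $1/(\veps\kappa^3\delta^4)$ rather than the stated $1/(\veps^2\delta^3)$. The framework is repairable without importing the paper's epoch induction: for slack goods, sum the inequality $z_j(p^t)\le z_j(p^*)+\sqrt{s_t/\kappa}$ over the initial phase so that the negative drift $z_j(p^*)\,t_j$ absorbs the excess $\sum_{t<t_j}\sqrt{s_t/\kappa}=O\bigl(D(p^*\|p^0)/(\veps\kappa|z_j(p^*)|)\bigr)$, leaving a total of order $D(p^*\|p^0)/(\veps\kappa\delta)$ and hence a $1/(\veps\kappa^2\delta^3)$ requirement, compatible with the stated threshold for a fixed $\veps\le\kappa^2/16$; for near-tight goods, insert $z_j(p^t)^2\le 2z_j(p^*)^2+2s_t/\kappa\le 2\delta^2+2s_t/\kappa$ into the variance term of Lemma~\ref{lm: mwum} instead of using $z_j(p^t)\le\sqrt{s_t/\kappa}$ linearly. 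Those two adjustments are missing from your write-up, and without them the final ``again $\le\beta(\delta)$ for $T$ past the threshold'' is unjustified.
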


\begin{proof}
We assume that $\veps\le\frac{1}{2W}\le\frac{1}{2w}$, and
also that $\veps\le \frac{\kappa}{64}$.
Trivially, for every $T$,
$$
\frac{1}{T}\cdot\sum_{t=0}^{T-1} (z_j(p^t))^2\le w^2.
$$
By Lemma~\ref{lm: mwum} we conclude that
for every $T\ge T_0 = \frac{\ln(1/p^0_{\min})}{\veps^2 w^2}$, 
$$
\frac{1}{T}\cdot\sum_{t=0}^{T-1} z_j(p^t)\le 2\veps w^2\le\beta(w/4),
$$
where the second inequality follows from $\veps\le \frac{\kappa}{64}$,
with room to spare. This means that 
$\frac{1}{1+\beta(w/4)}\cdot \frac{1}{T}\cdot\sum_{t=1}^{T} x^t$ is 
a feasible primal solution of the Eisenberg-Gale convex program. 
Thus,
\begin{eqnarray*}
\phi(p^*) & \ge & \psi\left(\frac{1}{1+\beta(w/4)}\cdot\frac{1}{T}\cdot
                     \sum_{t=1}^{T} x^t\right) \\
& \ge & \psi\left(\frac{1}{T}\cdot \sum_{t=1}^{T} x^t\right) - \beta(w/4) \\
& \ge & \frac{1}{T}\cdot\sum_{t=1}^{T} \psi(x^t) - \beta(w/4) \\
& = & \frac{1}{T}\cdot\sum_{t=0}^{T-1} \phi(p^t) - \beta(w/4) \\
& \ge & \phi(\frac{1}{T}\cdot\sum_{t=0}^{T-1} \phi(p^t)) - \beta(w/4) \\
& \ge & \phi(p^{T-1}) - \beta(w/4),
\end{eqnarray*}
where the first inequality uses weak duality, the second inequality
uses Claim~\ref{cl: primal approx}, the third inequality uses the
concavity of $\psi$, the equation follows by construction
($\psi(x^{t+1}) = \phi(p^t)$; see the proof of Lemma~\ref{lm: gradient}),
the fourth inequality uses the convexity of $\phi$, and the fifth
inequality uses the monotonicity of the sequence $\phi(p^0),\phi(p^1),\dots$.
We conclude that for every $T\ge T_0$,
$\phi(p^T) - \phi(p^*)\le \beta(w/4)$, and therefore
the $\beta$-uniformity of the market implies that
$\max_j |z_j(p^T) - z_j(p^*)|\le \frac{w}{4}$, and
in particular $\max_j z_j(p^T)\le\frac{w}{4}$.

Next consider
$T\ge T_0 + T_1$, where $T_1 = 7T_0$.
Consider $j\in\{1,2,\dots,m\}$. If $z_j(p^*) < -\frac{w}{4}$, then for every
$t > T_0$ , $z_j(p^t) < 0$, so
$$
\frac{1}{T}\cdot\sum_{t=0}^{T-1} z_j(p^t) < \frac{1}{T}\cdot\sum_{t=0}^{T_0}z_j(p^t)\le
\frac 1 8\cdot 2\veps w^2\le\beta(w/8).
$$
Here again we've used in the last inequality $\veps\le \frac{\kappa}{64}$.
Otherwise, if $z_j(p^*) \ge  -\frac{w}{4}$, then for every $t > T_0$,
$z_j(p^t)\ge z_j(p^*) - \frac{w}{4}\ge -\frac{w}{2}$. 
Also, for all $j$, $z_j(p^t)\le \frac{w}{4}$.
Therefore,
$$
\frac{1}{T}\cdot\sum_{t=0}^{T-1} (z_j(p^t))^2\le \frac 1 8 \cdot w^2  +
\frac 7 8\cdot (w/2)^2 < 2(w/2)^2.
$$
Moreover, $T\ge 8T_0 = 8\ln(1/p^0_{\min}) / \veps^2 w^2$,
so $\ln(1/p^0_{\min}) / \veps T\le \veps (w/2)^2$.
Therefore, by Lemma~\ref{lm: mwum},
$$
\frac{1}{T}\cdot\sum_{t=0}^{T-1} z_j(p^t)\le 3\veps (w/2)^2\le
\beta(w/8),
$$
where the last inequality uses $\veps\le \frac{\kappa}{64}$. 
Using the same argument that we used for $T\ge T_0$, we 
can conclude that for $T\ge T_0 + T_1$,
$\max_j |z_j(p^T) - z_j(p^*)|\le \frac{w}{8}$, so also
$\max_j z_j(p^T)\le \frac{w}{8}$.

More generally, suppose that we've verified that for every
$T\ge T_0 + T_1 + \cdots + T_i$,
$$
\frac{1}{T}\cdot\sum_{t=0}^{T-1} z_j(p^t)\le 3\veps (w/2^i)^2.
$$
As $3\veps (w/2^i)^2\le \beta(w/2^{i+2})$, we get that
$\max_j |z_j(p^T) - z_j(p^*)|\le w / 2^{i+2}$ and $\max_j z_j(p^T)\le w/2^{i+2}$.
This immediately implies that for every $j$, one of the following two cases holds:

\noindent Case 1: $z_j(p^T) < 0$ for all $T\ge T_0 + T_1 + \cdots + T_i$.

\noindent Case 2: $-w/2^{i+1}\le z_j(p^T)\le w/2^{i+2}$ for all 
$T\ge T_0 + T_1 + \cdots + T_i$.

Setting $T_{i+1} = 7(T_0+T_1+\cdots+T_i)$, we satisfy the
inductive hypothesis as follows. Consider
$T\ge T_0 + T_1 + \cdots + T_{i+1}$ and $j\in\{1,2,\dots,m\}$.
If case 1 holds, then
$$
\frac{1}{T}\cdot\sum_{t=0}^{T-1} z_j(p^t)\le
\frac{1}{T}\cdot\sum_{t=0}^{T_0+\cdots+T_i} z_j(p^t)\le
\frac 3 8\cdot\veps (w/2^i)^2 < 3\veps (w/2^{i+1})^2.
$$
If case 2 holds, then
$$
\frac{1}{T}\cdot\sum_{t=0}^{T-1} (z_j(p^t))^2\le
\frac{T_0 w^2 + T_1 (w/2)^2 + \cdots + T_i (w/2^i)^2 + T_{i+1} (w/2^{i+1})^2}
{T_0 + T_1 + \cdots + T_{i+1}}\le
$$
$$
\le\frac 7 8\cdot (w/2^{i+1})^2 + \frac 7 8\cdot\frac 1 8\cdot (w/2^i)^2 +
\frac 7 8\cdot\frac 1 8\cdot\frac 1 8 (w/2^{i-1})^2 + \cdots
< 2(w/2^{i+1})^2.
$$
Also,
$T\ge 8(T_0+\cdots+T_i) \ge \ln(1/p^0_{\min}) / \veps^2 (w/2^{i+1})^2$,
so $\ln(1/p^0_{\min}) / \veps T\le \veps (w/2^{i+1})^2$. Thus,
by Lemma~\ref{lm: mwum},
$$
\frac{1}{T}\cdot\sum_{t=0}^{T-1} z_j(p^t)\le 3\veps (w/2^{i+1})^2.
$$
This asserts the inductive hypothesis.

Finally, notice that if we choose
$i_{\max}\ge\log_2(w/\delta)-2$ we get that
for every $T\ge T_0 + T_1 + \cdots + T_{i_{\max}}$,
$$
\frac{1}{T}\cdot\sum_{t=0}^{T-1} z_j(p^t)\le\beta(\delta).
$$
Finally, notice that
$$
T_0 + \cdots + T_{i_{\max}}  <
8^{i_{\max}+1}\cdot\frac{\ln(1/p^0_{\min})}{\veps^2 w^2} \le
\frac{w\ln(1/p^0_{\min})}{8\veps^2 \delta^3}.
$$
This concludes the proof of the lemma.
\end{proof}

The following lemma gives rather general conditions for
the convergence of t\^{a}tonnement in Fisher markets.
\begin{lemma}\label{lm: main}
Consider a $\beta$-uniform Eisenberg-Gale
market. Fix $\delta > 0$.
Suppose that $\veps$ is sufficiently small and $T$ is sufficiently
large so that the following conditions hold.
\begin{itemize}
\item[C1.] The sequence $\phi(p^0),\phi(p^1),\dots,\phi(p^T)$ is 
                monotonically non-increasing.
\item[C2.] For every $T'=T-\left\lceil 3\ln(1/\delta)/\delta\eps \right\rceil,\dots,T$
                and for every $j=1,2,\dots,m$,
                \begin{equation}\label{eq: sum of z-s}
                \frac{1}{T'+1}\cdot\sum_{t=0}^{T'} z_j(p^t)\le \beta(\delta/3).
                \end{equation}
\end{itemize}
Then, the price-demand pair $(p^T,x^{T+1})$ is a $\delta$-approximate
equilibrium in the sense of Definition~\ref{def: approx-equilib 1}.
\end{lemma}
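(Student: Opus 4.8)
The plan is to verify, in order, the three conditions P1--P3 of Definition~\ref{def: approx-equilib 1} for the pair $(p^T, x^{T+1})$. Condition P1 is immediate, since the t\^atonnement process is defined by $x^{t+1} = x(p^t)$. The heart of the argument is to show that C1 and C2 force the excess demand to stay uniformly close to its equilibrium value throughout the window $\mathcal{W} = \{T-\tau,\dots,T\}$, where $\tau = \left\lceil 3\ln(1/\delta)/(\delta\veps)\right\rceil$; precisely, that $|z_j(p^{T'}) - z_j(p^*)| \le \delta/3$ for every $T'\in\mathcal{W}$ and every good $j$. Granting this window estimate, P2 and P3 both follow, with P3 being the delicate one.

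For the window estimate I would fix $T'\in\mathcal{W}$ and rerun the duality argument already used in the proof of Lemma~\ref{lm: better mwum}. Because $\sum_i x_{ij}^t = z_j(p^{t-1}) + 1$, condition~\eqref{eq: sum of z-s} says that $\bar{x} = \frac{1}{T'+1}\sum_{t=1}^{T'+1} x^t$ satisfies $\sum_i \bar{x}_{ij} \le 1 + \beta(\delta/3)$ for all $j$, so $\frac{1}{1+\beta(\delta/3)}\,\bar{x}$ is a feasible solution of the Eisenberg-Gale primal program. Then weak duality gives $\phi(p^*) \ge \psi\!\left(\frac{1}{1+\beta(\delta/3)}\,\bar{x}\right)$; Claim~\ref{cl: primal approx} (with $\alpha = \beta(\delta/3)$) gives $\psi\!\left(\frac{1}{1+\beta(\delta/3)}\,\bar{x}\right) \ge \psi(\bar{x}) - \beta(\delta/3)$; concavity of $\psi$ gives $\psi(\bar{x}) \ge \frac{1}{T'+1}\sum_{t=1}^{T'+1}\psi(x^t)$; the identity $\psi(x^{t+1}) = \phi(p^t)$ from the proof of Lemma~\ref{lm: gradient} rewrites the last sum as $\frac{1}{T'+1}\sum_{t=0}^{T'}\phi(p^t)$; and condition C1 makes $\phi(p^{T'})$ the minimum term of that average. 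Chaining these inequalities yields $\phi(p^{T'}) - \phi(p^*) \le \beta(\delta/3)$, whence $\beta$-uniformity (applicable since, as throughout, the price iterates lie in the region $P$) gives $|z_j(p^{T'}) - z_j(p^*)| \le \delta/3$, as claimed.

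It then remains to derive P2 and P3 from the window estimate. For P2, taking $T'=T$ gives $z_j(p^T) \le z_j(p^*) + \delta/3 \le \delta$, since $z_j(p^*)\le 0$ at any equilibrium. For P3, suppose $z_j(p^T) < -\delta$. Then $z_j(p^*) \le z_j(p^T) + \delta/3 < -2\delta/3$, so in fact $z_j(p^t) \le z_j(p^*) + \delta/3 < -\delta/3$ for \emph{every} $t\in\mathcal{W}$. Substituting this into the update rule $p_j^{t+1} = p_j^t(1 + \veps z_j(p^t))$ (whose factor is positive for $\veps$ small) gives $p_j^{t+1} \le (1 - \veps\delta/3)\,p_j^t$ for $t = T-\tau, \dots, T-1$, and hence
$$
p_j^T \;\le\; p_j^{T-\tau}\,(1-\veps\delta/3)^{\tau} \;\le\; (1-\veps\delta/3)^{\tau} \;\le\; e^{-\veps\delta\tau/3} \;\le\; \delta,
$$
where we used $p_j^{T-\tau} \le \|p^{T-\tau}\|_1 = 1$ (Proposition~\ref{pr: tatonnement p}), $1-y\le e^{-y}$, and $\veps\delta\tau/3 \ge \ln(1/\delta)$ by the choice of $\tau$. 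This establishes P3 and completes the proof.

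I expect P3 to be the main obstacle: unlike P2 it cannot be read off a single iterate, and it is exactly the reason C2 is imposed over an entire window rather than only at time $T$ --- one needs uniform control of the excess demand over $\tau$ consecutive rounds in order to drive the price of a genuinely over-supplied good down to size $\delta$ via the multiplicative dynamics. The remaining points to check are routine bookkeeping: that $\veps$ is small enough that the factors $1 + \veps z_j(p^t)$ stay positive (and that $\veps\delta/3\le 1$), and that the price iterates lie in the region $P$ on which the market is $\beta$-uniform; both are subsumed by the "sufficiently small $\veps$, sufficiently large $T$" hypothesis together with the standing assumptions on the process.
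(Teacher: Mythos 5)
Your proposal is correct and follows essentially the same route as the paper: the duality chain (weak duality, Claim~\ref{cl: primal approx}, concavity of $\psi$, the identity $\psi(x^{t+1})=\phi(p^t)$, and C1) applied at every index in the window to get $\phi(p^{T'})-\phi(p^*)\le\beta(\delta/3)$, then $\beta$-uniformity, and finally the multiplicative price decay over $\lceil 3\ln(1/\delta)/\delta\veps\rceil$ rounds for P3. The only cosmetic differences are that you bound the average of $\phi(p^t)$ below by its minimum term directly from C1 (where the paper inserts an extra Jensen step via convexity of $\phi$) and you phrase P3 contrapositively rather than splitting on the sign of $z_j(p^*)$; both are equivalent to the paper's argument.
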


\begin{proof}
By construction, $x^{T+1} = x(p^T)$.

Fix $k\in\{0,1,2,\dots,\left\lceil 3\ln(1/\delta)/\delta\veps \right\rceil\}$.
Let $\E x^t = \frac{1}{T-k+1}\cdot\sum_{t=1}^{T-k+1} x^t$ and
let $\E p^t = \frac{1}{T-k+1}\cdot\sum_{t=0}^{T-k} p^t$.
By condition C2, $\frac{1}{1 + \beta(\delta/3)}\cdot\E x^t$
is a feasible primal solution. Therefore, we argue as in
the proof of Lemma~\ref{lm: better mwum} that
\begin{eqnarray*}
\phi(p^*) & \ge & \psi\left(\frac{1}{1+\beta(\delta/3)}\cdot\E x^t\right)
\ge \psi\left(\E x^t\right) - \beta(\delta/3) \ge
    \frac{1}{T-k+1}\cdot\sum_{t=1}^{T-k+1} \psi(x^t) - \beta(\delta/3) \\
& = &
\frac{1}{T-k+1}\cdot\sum_{t=0}^{T-k} \phi(p^t) - \beta(\delta/3) \ge
\phi(\E p^t) - \beta(\delta/3) \ge \phi(p^{T-k}) - \beta(\delta/3).
\end{eqnarray*}
The first inequality follows from weak duality,
the second inequality follows from Claim~\ref{cl: primal approx},
the third inequality follows from the concavity of $\psi$,
the equality follows by construction,
the fourth inequality follows from the convexity of $\phi$,
and the last inequality follows from condition C1 in the theorem.
Rearranging the terms, we get that $\phi(p^{T-k})\le\phi(p^*) + \beta(\delta/3)$.
Thus, because the market is $\beta$-uniform, we have
that $|z_j(p^{T-k}) - z_j(p^*)|\le \frac{\delta}{3}$ for every $j=1,2,\dots,m$.

Consider $j\in\{1,2,\dots,m\}$. Notice that $z_j(p^*)\le 0$.
If $z_j(p^*)\ge -\frac{2\delta}{3}$ then since for $k=0$ the
above argument gives
$|z_j(p^T) - z_j(p^*)|\le\frac{\delta}{3}$,
we get that $|z_j(p^T)|\le\delta$.
Otherwise, we have that
$z_j(p^{T-k}) < -\frac{\delta}{3}$, for
$k=0,1,2,\dots,\left\lceil 3\ln(1/\delta)/\delta\veps \right\rceil$.
But, $p_j^{T-\left\lceil 3\ln(1/\delta)/\delta\veps \right\rceil}\le 1$, and for every
$k\in\{0,1,2,\dots,\left\lceil 3\ln(1/\delta)/\delta\veps \right\rceil-1\}$,
$p_j^{T-k}\le (1 - \delta\veps/3) p_j^{T-k-1}$. Thus,
$p_j^T\le (1 - \delta\veps/3)^{\left\lceil 3\ln(1/\delta)/\delta\veps \right\rceil}\le\delta$.
\end{proof}

When we apply Lemma~\ref{lm: main} (using Lemma~\ref{lm: better mwum}
to establish Condition C2) to Leontief utilities, our approach gives somewhat
weaker bounds on the convergence rate than those in~\cite{CCD13}.
However, we can use this approach to handle the more general
case of nested CES-Leontief utilities, which is our main result.

%%%%%%%%%%%%%%%%%%%%%%%%%%%%%%%%%%%%%%%%%%
%%%%%%%%%    Nested
%%%%%%%%%%%%%%%%%%%%%%%%%%%%%%%%%%%%%%%%%%
\section{Nested CES-Leontief Utilities}\label{sec: nested}

In a market with nested CES-Leontief utilities
every agent
$i=1,2,\dots,n$ needs a set of
``objects" ${\cal J}_i$. Each object $J\in{\cal J}_i$
consists of a utility coefficient $c_i^J > 0$, a set of goods (which
we also denote by $J$), and utility coefficients $a_{ij}^J > 0$
for all $j\in J$. (To simplfy some of the expressions below we set
$a_{ij}^J = 0$ for all $j\not\in J$.) The utility functions are formally given by
$u_i(x) = \left(\sum_{J\in{\cal J}_i} \left(c_i^J \min_{j\in J}
\left\{\frac{x_{ij}^J}{a_{ij}^J}\right\}\right)^\rho\right)^{1/\rho}$,
for some $\rho\in (-\infty,0)\cup (0,1)$.
W.l.o.g. we scale $a$ and $c$ so that
$\|a\|_1 = 1$ and for every $i$, $\sum_{J\in{\cal J}_i}(c_i^J)^{\rho/(1-\rho)}=1$.
(Notice that the behavior of agent $i$ depends only on the relative values
of the coordinates of $a_i$ and $c_i$.) Also, we may assume that for
every good $j$ there is at least one buyer $i$ and at least one object
$J\in{\cal J}_i$ for which $j\in J$, otherwise the good has no demand
and is can be discarded.
The Eisenberg-Gale convex program is given explicitly as:
$$
\max \left\{\sum_i b_i \ln \left(\sum_{J\in{\cal J}_i}
\left(c_i^J u_i^J\right)^\rho\right)^{1/\rho} :
\forall j=1,\dots, m,\ \sum_{i=1}^{n}\sum_{J\in{\cal J}_i} a_{ij}^J u_i^J \le 1
\wedge \forall i=1,\dots,n,\forall J\in{\cal J}_i,\ u_i^J\ge 0\right\}.
$$
The dual objective is
$$
\phi(p) = \sum_{j=1}^m p_j - \sum_{i=1}^n b_i +
\sum_{i=1}^n b_i\ln\left(\sum_{J\in{\cal J}_i}
\left(\frac{b_i c^J_{i}}{\sum_{j=1}^m a^J_{ij} p_j}
\right)^{\rho/(1-\rho)}\right)^{(1-\rho)/\rho}.
$$
For a vector $v\in\RR^m$, $i\in\{1,2,\dots,n\}$, and
$J\in{\cal J}_i$, denote $\tilde{v}_i^J = \sum_{j=1}^m a_{ij}^J v_j$.
Also denote $a_{\min} = \min_{i,J\in {\cal J}_i,j\in J} a_{ij}^J$,
$A = \max_j \sum_{i,J\in {\cal J}_i} a_{ij}^J$,
$b_{\min} = \min_i b_i$, and
$c_{\min}=\min_{i,J\in {\cal J}_i} (c_i^J)^{\frac{\rho}{1-\rho}}$.
%The following claim is proved in the appendix.
\begin{claim}\label{cl: tilde norm}
For every $v\in\RR^m$, $a_{\min}^2\cdot \|v\|_2^2\le \|\tilde{v}\|_2^2\le A\cdot\|v\|_2^2$.
\end{claim}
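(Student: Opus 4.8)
The plan is to regard $v\mapsto\tilde v$ as multiplication by the nonnegative matrix $M$ whose rows are indexed by the pairs $(i,J)$ with $J\in{\cal J}_i$, with entry $a_{ij}^J$ in row $(i,J)$ and column $j$, and to read both bounds off the normalization $\|a\|_1=\sum_i\sum_{J\in{\cal J}_i}\sum_j a_{ij}^J=1$ together with the definitions of $a_{\min}$ and $A$. Recall $a_{ij}^J\ge a_{\min}>0$ whenever $j\in J$ and $a_{ij}^J=0$ otherwise, that $A=\max_j\sum_{i,J\in{\cal J}_i}a_{ij}^J$ is the largest column sum of $M$, and that every good lies in at least one object.

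For the upper bound, fix a row $(i,J)$ and apply Cauchy--Schwarz with weights $a_{ij}^J$:
$$
\bigl(\tilde v_i^J\bigr)^2=\Bigl(\sum_j a_{ij}^J v_j\Bigr)^2\le\Bigl(\sum_j a_{ij}^J\Bigr)\Bigl(\sum_j a_{ij}^J v_j^2\Bigr)\le\sum_j a_{ij}^J v_j^2,
$$
the last step using $\sum_j a_{ij}^J\le\|a\|_1=1$. Summing over all rows and exchanging the order of summation, $\|\tilde v\|_2^2\le\sum_j v_j^2\sum_{i,J\in{\cal J}_i}a_{ij}^J\le A\,\|v\|_2^2$, since each inner sum is a column sum of $M$.

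For the lower bound, pick for each good $j$ a single witness pair $\sigma(j)=(i,J)$ with $j\in J$ (possible since every good lies in some object), which partitions the goods into the sets $\sigma^{-1}(i,J)$. For a row $(i,J)$ in the image of $\sigma$, keeping only the coordinates in $\sigma^{-1}(i,J)$ and using $a_{ij}^J\ge a_{\min}$ there,
$$
\tilde v_i^J=\sum_j a_{ij}^J v_j\ge\sum_{j\in\sigma^{-1}(i,J)}a_{ij}^J v_j\ge a_{\min}\sum_{j\in\sigma^{-1}(i,J)}v_j,
$$
where discarding the remaining terms uses that the entries of $v$ are nonnegative (the regime in which the claim is applied; equivalently one restricts to $v\in\RR_+^m$). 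Squaring and again dropping the nonnegative cross terms gives $\bigl(\tilde v_i^J\bigr)^2\ge a_{\min}^2\sum_{j\in\sigma^{-1}(i,J)}v_j^2$; summing over the rows in the image of $\sigma$ (a subcollection of all rows, so the left-hand side can only shrink) and using that the sets $\sigma^{-1}(i,J)$ cover $\{1,\dots,m\}$ gives $\|\tilde v\|_2^2\ge a_{\min}^2\sum_j v_j^2$.

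The two Cauchy--Schwarz steps are routine; the only delicate point is the lower bound, where one must fix exactly one witness object per good (to avoid overcounting the contribution of a good) and must use the sign of $v$ both to drop the off-support terms and to drop the cross terms after squaring. Without nonnegativity of $v$ the lower bound genuinely fails (a single object containing two goods with equal coefficients already kills it for $v$ in the kernel of $M$), so the plan reads the claim for $v\in\RR_+^m$, which is all that is needed later.
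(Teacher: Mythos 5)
Your upper bound is exactly the paper's argument: Cauchy--Schwarz with weights $a_{ij}^J$, the normalization $\sum_{j\in J} a_{ij}^J\le\|a\|_1=1$, then an exchange of summation and the column-sum bound $A$; nothing to add there, and note it holds for all $v\in\RR^m$, which is what the paper actually needs later (Corollary~\ref{cor: Hessian operator norm} only invokes the upper bound). For the lower bound you diverge from the paper in a substantive way: the paper simply asserts it ``trivially'' from the fact that every good lies in some object, for arbitrary $v\in\RR^m$, whereas you observe---correctly---that with signs the bound fails (one object containing two goods with equal coefficients and $v=(1,-1,0,\dots,0)$ gives $\tilde v=0$), and you prove the restricted statement for $v\in\RR_+^m$ via a choice of one witness object per good. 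So you have in effect caught an inaccuracy in the claim as literally stated; since the lower bound is never applied to signed vectors in the paper (indeed it appears not to be used at all), restricting to $\RR_+^m$ is harmless, and your version is the one that is actually true. Two small remarks: the witness selection is not needed---for $v\ge 0$ each object already satisfies $\bigl(\tilde v_i^J\bigr)^2\ge\sum_{j\in J}\bigl(a_{ij}^J v_j\bigr)^2\ge a_{\min}^2\sum_{j\in J} v_j^2$ (the cross terms are nonnegative), and summing this over \emph{all} pairs $(i,J)$ gives $\|\tilde v\|_2^2\ge a_{\min}^2\|v\|_2^2$ because every good is covered at least once and overcounting only helps---and if you wanted the bound for general $v\in\RR^m$ under an extra hypothesis, it would have to be something structural (e.g.\ each good having an object in which it is the sole member), not just the covering assumption.
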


\begin{proof}
By definition, $\|\tilde{v}\|_2^2 = \sum_i\sum_{J\in{\cal J}_i}
\left(\sum_{j\in J} a_{ij}^J v_j\right)^2$. As every good $j$ has
an object $J$ for which $j\in J$, then trivially
$\sum_i\sum_{J\in{\cal J}_i}
\left(\sum_{j\in J} a_{ij}^J v_j\right)^2\ge a_{\min}^2 \|v\|_2^2$.
On the other hand, using Cauchy-Schwartz,
$\sum_i\sum_{J\in{\cal J}_i}
\left(\sum_{j\in J} a_{ij}^J v_j\right)^2\le\sum_i\sum_{J\in{\cal J}_i}
\sum_{j\in J} a_{ij}^J v_j^2 =
\sum_j v_j^2\cdot\sum_i\sum_{J\in {\cal J}_i} a_{ij}^J\le
A\cdot\|v\|_2^2$.
\end{proof}

We also need the following claim.
%, proved in the appendix.
\begin{claim}\label{cl: price bounds}
If for every agent $i$ and for every $J\in {\cal J}_i$,
$$
\tilde{p}_i^J\ge \left\{\begin{array}{ll}
                                 2^{(\rho-2)/(1-\rho)}\cdot b_i\cdot c_{\min}^2 & \hbox{if }\rho > 0, \\
                                 2^{\rho-2} b_i^{1-\rho} & \hbox{if }\rho < 0,
                                 \end{array}
                         \right.
$$
where $p = p^0$, then the same bounds hold for $p = p^t$, for all $t$.
\end{claim}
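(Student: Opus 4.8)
The plan is to prove this by induction on $t$, where the key observation is that the claimed lower bounds on $\tilde p_i^J$ are exactly the thresholds below which the excess demand $z_j(p)$ becomes nonnegative for every good $j$ lying in $J$, so that the multiplicative update $p_j^{t+1} = p_j^t(1 + \veps z_j(p^t))$ cannot push $p_j$ down. More precisely, the inductive step will show: if $\tilde p_i^J$ currently satisfies the bound for every $i$ and every $J \in {\cal J}_i$, then either (a) $\tilde p_i^J$ is large enough that the bound is preserved ``with room to spare'' after one step no matter what the update does, or (b) $\tilde p_i^J$ is close to the threshold, in which case the demand for each good $j \in J$ is at least as large as supply, so $z_j(p^t) \ge 0$ and hence $p_j^{t+1} \ge p_j^t$, giving $\tilde p_i^{J,\,t+1} \ge \tilde p_i^{J,\,t}$ and preserving the bound.

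The first main step is to compute the demand of agent $i$ for good $j$ under prices $p$ explicitly. Since the utility is a nested CES over Leontief ``objects,'' each object $J$ is purchased in amount $u_i^J$ where, by the standard CES demand formula applied to the prices $\tilde p_i^J$ of the objects, $u_i^J = \dfrac{b_i (c_i^J)^{\rho/(1-\rho)} (\tilde p_i^J)^{-1/(1-\rho)}}{\sum_{K \in {\cal J}_i} (c_i^K)^{\rho/(1-\rho)} (\tilde p_i^K)^{-\rho/(1-\rho)}}$, and then $x_{ij}^J = a_{ij}^J u_i^J$. Summing, $z_j(p) + 1 = \sum_{i}\sum_{J \in {\cal J}_i : j \in J} a_{ij}^J u_i^J$. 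The second step is to bound this from below when $\tilde p_i^J$ is at (or just below) the threshold: using the normalizations $\sum_{K}(c_i^K)^{\rho/(1-\rho)} = 1$ and the definition of $c_{\min}$ (and, in the $\rho < 0$ case, tracking signs carefully since $\rho/(1-\rho) < 0$), one shows the contribution of object $J$ alone to good $j$'s demand is $\ge a_{ij}^J \cdot (\text{something} \ge 1)$ — the exponent $2^{(\rho-2)/(1-\rho)}$ (resp.\ $2^{\rho-2}$) in the threshold is precisely calibrated so that at the threshold the single object $J$ already supplies demand $\ge a_{ij}^J$, and since $\sum_i \sum_{J \ni j} a_{ij}^J \ge a_{\min} > 0$ with room, this forces $z_j(p) \ge 0$. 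The third step assembles the case analysis into the induction: fix a threshold value $\tau$ (the right-hand side of the displayed inequality), and split on whether $\tilde p_i^{J,t} \ge 2\tau$ or $\tilde p_i^{J,t} \in [\tau, 2\tau)$; in the first case the one-step multiplicative change of each $p_j$ is by a factor in $[1 - \veps w, 1 + \veps w]$ with $\veps w < \tfrac12$, so $\tilde p_i^{J,t+1} \ge \tfrac12 \tilde p_i^{J,t} \ge \tau$; in the second case step two gives $z_j(p^t)\ge 0$ for all $j \in J$, hence $p_j^{t+1} \ge p_j^t$ and $\tilde p_i^{J,t+1} \ge \tilde p_i^{J,t} \ge \tau$.

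The main obstacle will be the sign bookkeeping in the $\rho < 0$ regime: there $-1/(1-\rho) \in (-1,0)$ and $\rho/(1-\rho) < 0$, so that larger object-prices $\tilde p_i^K$ in the denominator \emph{decrease} the denominator, and one must argue that the inductive hypothesis (which bounds \emph{all} $\tilde p_i^K$ from below, not above) still yields a usable lower bound on $u_i^J$ — the right move is to use the lower bound on $\tilde p_i^J$ in the numerator together with the normalization $\sum_K (c_i^K)^{\rho/(1-\rho)} = 1$ and a crude upper bound on the denominator that does not require upper bounds on prices, which is where the factor of $2$ inside the threshold (hence the exponent shift by $-1$ or $-2$) is spent. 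A secondary technical point is that $w = \max_{j,t}|z_j(p^t)|$ must be known to be finite and $< \frac{1}{2\veps}$ for the ``room to spare'' argument in case~(a) — but this is guaranteed by the standing assumption $\veps \in (0,\tfrac12)$ together with the fact (used throughout Section~\ref{sec: preliminary}) that the relevant prices stay bounded, so $z$ is bounded on the trajectory; alternatively one notes that whenever $z_j(p^t) < 0$ the price only shrinks, so the problematic large-$z$ case is exactly case~(a) where the bound is slack anyway.
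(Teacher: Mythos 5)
Your overall plan is the same as the paper's: write down the CES demand for each Leontief ``object'', calibrate the threshold so that a bundle price near it forces the excess demand of every good in that bundle to be nonnegative (hence the bundle price cannot fall), observe that in one step a price drops by at most a factor of $2$, and induct with a two-case split (slack vs.\ tight). However, the crux of your case (b) is not established as written. The demand for good $j$ contributed by the pair $(i,J)$ is $a_{ij}^J u_i^J$, while the supply is $1$; your calibration gives $u_i^J\ge 1$ (in fact $\ge 2/c_{\min}$), hence a contribution of only $a_{ij}^J$ times a constant, and with the scaling $\|a\|_1=1$ the coefficient $a_{ij}^J$ is typically far below $1$, while the demands of the other agents and objects for good $j$ are completely uncontrolled in this case. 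So the inference ``demand $\ge a_{ij}^J$, and $\sum_i\sum_{J\ni j}a_{ij}^J\ge a_{\min}>0$, therefore $z_j(p)\ge 0$'' is a non sequitur: nothing forces total demand for $j$ to reach the unit supply, and the stated thresholds contain no dependence on $a$ that could close this. The paper's argument at this point bounds the per-good demand itself and uses exactly the statement ``if $x(p)_{ij}>1$ for all $j\in J_1$ then all those prices rise''; your version, having correctly inserted the $a_{ij}^J$ factor, never reaches that bar, so the step that drives the whole induction is missing.

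Two smaller points. First, the factor-$2$ drop needs no control of $w=\max_{j,t}|z_j(p^t)|$: since demands are nonnegative, $z_j\ge -1$ always, so $p_j^{t+1}\ge(1-\veps)p_j^t\ge p_j^t/2$ for $\veps\le\frac12$; your route via $\veps w<\frac12$ is not only unnecessary but circular, because an a priori bound on $w$ (i.e., on demands) requires precisely the price lower bounds that Claim~\ref{cl: price bounds} is proving. Second, in the $\rho<0$ regime there is no upper bound on the denominator ``that does not require upper bounds on prices'': the exponent $-\rho/(1-\rho)$ is positive there, so the denominator grows with the bundle prices, and the paper controls it with the trivial bound $\tilde p_i^J\le 1$, which follows from $p^t\in B$ (Proposition~\ref{pr: tatonnement p}) together with $\|a\|_1=1$; your sketch should invoke that bound explicitly rather than claim it can be avoided.
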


\begin{proof}
Notice that for every agent $i$, the demand that $i$ has for the object
$J\in {\cal J}_i$ given the prices $p$ is
$\frac{b_i}{\tilde{p}_i^J}\cdot
\frac{(c_i^J / \tilde{p}_i^J)^{\frac{\rho}{1-\rho}}}
{\sum_{J'\in {\cal J}_i} (c_i^{J'} / \tilde{p}_i^{J'})^{\frac{\rho}{1-\rho}}}$.
Consider $J_0\in {\cal J}_i$ that minimizes $\tilde{p}_i^{J_0}$, and let
$J_1\in {\cal J}$ be any object for which $\tilde{p}_i^{J_1}\le 2\tilde{p}_i^{J_0}$.

Let's first consider $\rho > 0$.
Using the scaling of $c_i$, we have that
$\sum_{J\in {\cal J}_i} (c_i^J / \tilde{p}_i^{J})^{\frac{\rho}{1-\rho}}\le
|{\cal J}_i| / (\tilde{p}_i^{J_0})^{\frac{\rho}{1-\rho}}$ and
$c_{\min}\le\frac{1}{|{\cal J}_i|}$.
Therefore, the demand that $i$ has for $J_1$
is at least $\frac{b_i\cdot c_{\min}^2}{2^{1/(1-\rho)}\tilde{p}_i^{J_0}}$.
This is a lower bound on $x(p)_{ij}$ for every $j\in J_1$.
Notice that if for all $j\in J_1$, $x(p)_{ij} > 1$, then
$\tilde{p}_i^{J_1}$ must increase in the next time step.
Thus, we conclude that if
$\tilde{p}_i^{J_0} < 2^{1/(1-\rho)}\cdot b_i\cdot c_{\min}^2$
then $\tilde{p}_i^{J_1}$ increases in the next time step.

A similar analysis applies to $\rho < 0$. In this case we
bound the demand that agent $i$ has for $J_1$ by using the
fact that for every $J\in {\cal J}_i$, $\tilde{p}_i^{J}\le 1$.
Therefore,
$\sum_{J\in {\cal J}_i} (\tilde{p}_i^{J} / c_i^J)^{\frac{-\rho}{1-\rho}}\le
|{\cal J}_i|\cdot c_{\min}\le 1$. Using the fact that
$(c_i^{J_1})^{\rho/(1-\rho)}\le 1$ and
$\tilde{p}_i^{J_0}\le \tilde{p}_i^{J_1}\le 2\tilde{p}_i^{J_0}$,
we get that the demand is at least $\frac{b_i}{2(\tilde{p}_i^{J_0})^{1/(1-\rho)}}$.
So if $\tilde{p}_i^{J_0} < 2^{\rho - 1}\cdot b_i^{1-\rho}$ then
$\tilde{p}_i^{J_1}$ increases in the next time step.

Now the rest of the proof follows by induction on the
number of time steps, assuming that the inequality
holds initially. Notice that in one iteration the prices
never drop by more than a factor of $2$ (because
$\veps\le\frac 1 2$). So let $q$ denote the new prices,
and let $J_1$ denote the object minimizing $\tilde{q}_i^{J}$
(over $J\in {\cal J}_i$). If $\tilde{p}_i^{J_1} > 2\tilde{p}_i^{J_0}$,
then the induction hypothesis holds trivially. Otherwise,
we showed that there exists $\gamma$ such that
if $\tilde{p}_i^{J_0} < \gamma$ then $\tilde{q}_i^{J_1} > \tilde{p}_i^{J_1}$.
If indeed $\tilde{p}_i^{J_0} < \gamma$, then the induction hypothesis
holds trivially. Otherwise, $\tilde{p}_i^{J_1}\ge\tilde{p}_i^{J_0}\ge\gamma$,
so $\tilde{q}_i^{J_1}\ge\frac{\gamma}{2}$,
and this completes the proof.
\end{proof}

It is easy to verify that the Hessian $\nabla^2\phi$ of the dual
objective function is given by
\begin{eqnarray*}
(\nabla^2\phi(p))_{jl} &=& \frac{1}{1-\rho}\cdot
\sum_{i=1}^n b_i \cdot
\left(
        \frac { \sum_{J\in{\cal J}_i} a_{ij}^J a_{il}^J
                           ( (c_i^J)^{\rho}/(\tilde{p}_i^J)^{2-\rho} )^{\frac{1}{1-\rho}}
                }
                {   \sum_{J\in{\cal J}_i} (c_i^J/
            \tilde{p}_i^J)^{\frac{\rho}{(1-\rho)}}
                }\right. \\
&-& \left.\rho\cdot \frac  {
\left(\sum_{J\in{\cal J}_i} a_{ij}^J ((c_i^J)^{\rho} /\tilde{p}_i^J)^{\frac{1}{1-\rho}} \right)\cdot
\left(\sum_{J\in{\cal J}_i} a_{il}^J ((c_i^J)^{\rho} /\tilde{p}_i^J)^{\frac{1}{1-\rho}} \right)
         }
         {  \left( \sum_{J\in{\cal J}_i} (c_i^J
             /\tilde{p}_i^J)^{\frac{\rho}{(1-\rho)}}\right)^2
         }
\right).
\end{eqnarray*}
This allows us to prove the following bounds. Let $P$ denote the set
of price vectors that satisfy the constraints of Proposition~\ref{pr: tatonnement p}
and Claim~\ref{cl: price bounds}.
% (the proofs are in the appendix).
\begin{claim}\label{cl: Hessian operator norm}
There are constants $L_{\min} = L_{\min}(b,c,\rho) > 0$
and $L_{\max} = L_{\max}(b,c,\rho) > 0$ such that
for every $x\in\RR^m$ and for every $p\in P$,
$$
L_{\min}\cdot\|\tilde{x}\|_2^2 \le
\langle x \mid \nabla^2\phi(p) \mid x \rangle\le 
L_{\max}\cdot \|\tilde{x}\|_2^2.
$$
\end{claim}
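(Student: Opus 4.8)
The plan is to contract the explicit formula for $\nabla^2\phi(p)$ with the vector $x$, observe that the result is (up to constants depending only on $\rho$) a weighted sum of the squares $(\tilde x_i^J)^2$, and then show that every weight that appears is pinned between two positive constants by the uniform bounds on prices that hold on $P$.

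First I would abbreviate $w_i^J=\left(c_i^J/\tilde p_i^J\right)^{\rho/(1-\rho)}$ and $s_i=\sum_{J\in\mathcal{J}_i}w_i^J$, so that $w_i^J/s_i$ is a probability distribution on $\mathcal{J}_i$. Each term of $(\nabla^2\phi(p))_{jl}$ is, in the $(j,l)$ indices, a combination of the rank-one matrices $a_{ij}^Ja_{il}^J$ and $\bigl(\sum_J a_{ij}^J r_i^J\bigr)\bigl(\sum_J a_{il}^J r_i^J\bigr)$, so contracting with $x$ on both sides collapses every $\sum_j a_{ij}^Jx_j$ into $\tilde x_i^J$. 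Using $\left((c_i^J)^\rho/(\tilde p_i^J)^{2-\rho}\right)^{1/(1-\rho)}=w_i^J/(\tilde p_i^J)^2$ and $\left((c_i^J)^\rho/\tilde p_i^J\right)^{1/(1-\rho)}=w_i^J/\tilde p_i^J$ one gets, with $Y_i$ denoting the random variable $\tilde x_i^J/\tilde p_i^J$ under the distribution $w_i^J/s_i$ (so that $\EE[Y_i^2]=\tfrac{1}{s_i}\sum_{J}\tfrac{w_i^J}{(\tilde p_i^J)^2}(\tilde x_i^J)^2$ and $(\EE Y_i)^2=\tfrac{1}{s_i^2}\bigl(\sum_J\tfrac{w_i^J}{\tilde p_i^J}\tilde x_i^J\bigr)^2$),
$$
\langle x\mid\nabla^2\phi(p)\mid x\rangle=\frac{1}{1-\rho}\sum_{i=1}^n b_i\bigl(\EE[Y_i^2]-\rho(\EE Y_i)^2\bigr).
$$
Since $0\le(\EE Y_i)^2\le\EE[Y_i^2]$, the inner expression lies between $\min\{1,1-\rho\}\cdot\EE[Y_i^2]$ and $\max\{1,1-\rho\}\cdot\EE[Y_i^2]$ (the cases $\rho\in(0,1)$ and $\rho<0$ are identical, only swapping which of the two is the lower bound). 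Hence, up to constants depending only on $\rho$, $\langle x\mid\nabla^2\phi(p)\mid x\rangle$ equals $\sum_i\sum_{J\in\mathcal{J}_i}\theta_i^J(\tilde x_i^J)^2$ with $\theta_i^J=b_iw_i^J/\bigl(s_i(\tilde p_i^J)^2\bigr)$; and because $\|\tilde x\|_2^2=\sum_i\sum_{J\in\mathcal{J}_i}(\tilde x_i^J)^2$ by definition, it remains only to bound each $\theta_i^J$ between positive constants depending on $b,c,\rho$ (and the fixed combinatorial structure of the market) uniformly over $p\in P$.

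For that I would use two-sided control of $\tilde p_i^J$ on $P$. On $P$ we have $\sum_jp_j=1$ (Proposition~\ref{pr: tatonnement p}), hence $\max_jp_j\le1$, and with the normalization $\|a\|_1=1$ this gives $\tilde p_i^J=\sum_{j\in J}a_{ij}^Jp_j\le\sum_{j\in J}a_{ij}^J\le1$; on the other side Claim~\ref{cl: price bounds} gives $\tilde p_i^J\ge\gamma_0$ for a constant $\gamma_0=\gamma_0(b,c,\rho)>0$. Combined with $(c_i^J)^{\rho/(1-\rho)}\in[c_{\min},1]$ (the scaling of $c$), this confines $w_i^J=(c_i^J)^{\rho/(1-\rho)}\cdot(\tilde p_i^J)^{-\rho/(1-\rho)}$ to a compact subinterval $[w_{\min},w_{\max}]$ of $\RR_{++}$ whose endpoints depend only on $b,c,\rho$, whence $s_i\in[w_{\min},\,|\mathcal{J}_i|\,w_{\max}]$ and $w_i^J/s_i\in[w_{\min}/(|\mathcal{J}_i|w_{\max}),1]$. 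Together with $b_i\in[b_{\min},1]$ and $(\tilde p_i^J)^2\in[\gamma_0^2,1]$, this bounds $\theta_i^J$ above and below by positive constants of the required form, which establishes the claim with $L_{\min}=\tfrac{\min\{1,1-\rho\}}{1-\rho}\cdot\inf\theta_i^J$ and $L_{\max}=\tfrac{\max\{1,1-\rho\}}{1-\rho}\cdot\sup\theta_i^J$.

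The main obstacle is not conceptual but organizational: collapsing the Hessian into the clean ``second moment minus $\rho$ times squared mean'' form, keeping the two sign regimes of $\rho$ straight in the sandwich, and — the one genuinely substantive input — having $\tilde p_i^J$ bounded away from $0$, which is precisely the invariant established (with some care) in Claim~\ref{cl: price bounds}; the matching upper bound $\tilde p_i^J\le1$ is an immediate consequence of the budget normalization. Everything else is routine algebra.
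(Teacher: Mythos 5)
Your proposal is correct and follows essentially the same route as the paper: contract the explicit Hessian with $x$, control the cross terms against the diagonal second-moment term by a factor between $\min\{1,1-\rho\}$ and $\max\{1,1-\rho\}$ (your Jensen step $(\EE Y_i)^2\le\EE[Y_i^2]$ is exactly the paper's symmetrization/sum-of-squares bound, yielding the same quantity $L=\sum_i b_i\EE[Y_i^2]$ sandwiched between the quadratic form and $\tfrac{1}{1-\rho}$ times it), and then bound the resulting weights using $\tilde p_i^J\le 1$, the lower bound from Claim~\ref{cl: price bounds}, and the normalizations of $a$, $b$, $c$. The probabilistic packaging is a bit cleaner than the paper's $Z_{J,J'}$ manipulation, but the substance and the required inputs are identical, so no further comparison is needed.
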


\begin{proof}
We begin with the following equation:
\begin{eqnarray*}
\langle x \mid \nabla^2\phi(p) \mid x \rangle &=& \frac{1}{1-\rho}\cdot
\sum_{i=1}^n \frac{b_i}{\left( \sum_{J\in{\cal J}_i} (c_i^J
             /\tilde{p}_i^J)^{\frac{\rho}{(1-\rho)}}\right)^2} \cdot \\
&&\left(  \sum_{J\in{\cal J}_i} \sum_{J'\in{\cal J}_i}
                            (c_i^J c_i^{J'})^{\frac{\rho}{1-\rho}}
    \left(\frac{(\tilde{x}_i^J)^2}{(\tilde{p}_i^J)^{\frac{2-\rho}{1-\rho}}
        (\tilde{p}_i^{J'})^{\frac{\rho}{1-\rho}}}
-  \rho\cdot
\frac{\tilde{x}_i^J\tilde{x}_i^{J'}}{(\tilde{p}_i^J)^{\frac{1}{1-\rho}}
(\tilde{p}_i^{J'})^{\frac{1}{1-\rho}}}    \right)
\right).
\end{eqnarray*}

Fix $i$, denote $X_J = \tilde{x}_i^J$ and
$q_J = \frac{1}{\tilde{p}_i^J}$, then consider the term
$$
Z_{J,J'} = q_J^{\frac{2-\rho}{1-\rho}} q_{J'}^{\frac{\rho}{1-\rho}} X_J^2
-  \rho\cdot q_J^{\frac{1}{1-\rho}} q_{J'}^{\frac{1}{1-\rho}} X_J X_{J'}.
$$
Put $A_{J,J'} = q_J^{\frac{1-\rho/2}{1-\rho}} q_{J'}^{\frac{\rho/2}{1-\rho}}$.
Because we sum over all $J,J'\in {\cal J}_i$, we can replace
$Z_{J,J'}$ by
$$
\frac{1}{2}\cdot\left(A_{J,J'}^2 X_J^2 + A_{J',J}^2 X_{J'}^2\right) -
\rho A_{J,J'} A_{J',J} X_J X_{J'} =
$$
$$
= \frac{\rho}{2}\cdot\left(A_{J,J'} X_J + A_{J',J} X_{J'}\right)^2 +
\frac{1 - \rho}{2}\cdot\left(A_{J,J'}^2 X_J^2 + A_{J',J}^2 X_{J'}^2\right).
$$

Let
$$
L = \sum_{i=1}^n b_i\cdot
\frac{\sum_{J\in{\cal J}_i} \sum_{J'\in{\cal J}_i}
         \left((c_i^J c_i^{J'})^{\frac{\rho}{1-\rho}} / (\tilde{p}_i^J)^{\frac{2-\rho}{1-\rho}}
        (\tilde{p}_i^{J'})^{\frac{\rho}{1-\rho}}\right)\cdot
         (\tilde{x}_i^J)^2}{\sum_{J\in{\cal J}_i} \sum_{J'\in{\cal J}_i}
          (c_i^J c_i^{J'})^{\frac{\rho}{(1-\rho)}} / (\tilde{p}_i^J \tilde{p}_i^{J'})^{\frac{\rho}{(1-\rho)}}}.
$$
We have that for $\rho > 0$,
$$
L \le \langle x \mid \nabla^2\phi(p) \mid x \rangle\le\frac{1}{1-\rho}\cdot L,
$$
and for $\rho < 0$,
$$
\frac{1}{1-\rho}\cdot L \le  \langle x \mid \nabla^2\phi(p) \mid x \rangle\le L.
$$
We proceed to bound $L$.
Using Claim~\ref{cl: price bounds}, the trivial upper bound 
$\|\tilde{p}\|_\infty \le 1$
(which follows from the bounds $\|p\| _1 = 1$ and $\|a\|_1 = 1$),
the notation $b_{\min},c_{\min}$, and the fact that for all $i$,
$1\le |{\cal J}_i|\le c_{\min}^{\frac{-\rho}{1-\rho}}$, we get the
following lower and upper bounds on $L$.
If $\rho > 0$ we get that
$$
2^{\frac{\rho(\rho-2)}{(1-\rho)^2}}\cdot
b_{\min}^{\frac{1}{1-\rho}}\cdot c_{\min}^{\frac{2}{1-\rho}}\cdot \|\tilde{x}\|_2^2\le
L\le 2^{\frac{2-\rho}{(1-\rho)^2}}\cdot b_{\min}^{\frac{-2}{1-\rho}}\cdot
c_{\min}^{\frac{-(4+\rho)}{1-\rho}}\cdot \|\tilde{x}\|_2^2.
$$
Similarly, if $\rho < 0$ we get that
$$
2^{\frac{\rho(2-\rho)}{1-\rho}}\cdot b_{\min}^{1-\rho}\cdot \|\tilde{x}\|_2^2\le
L\le 2^{\frac{(3\rho-2)(\rho-2)}{1-\rho}}\cdot b_{\min}^{3\rho-2}\cdot
c_{\min}^{\frac{\rho}{1-\rho}}\cdot \|\tilde{x}\|_2^2.
$$
This completes the proof.
\end{proof}

\begin{corollary}\label{cor: Hessian operator norm}
There is a constant $\lambda_{\max} = \lambda_{\max}(a,b,c,\rho)$
such that for every $x\in\RR^m$ and for every $p\in P$,
$\|\nabla^2\phi(p)\mid x\rangle\|_2^2\le \lambda_{\max}^2 \|\tilde{x}\|_2^2$.
\end{corollary}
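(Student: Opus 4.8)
The plan is to obtain the corollary as an immediate consequence of Claim~\ref{cl: Hessian operator norm} and Claim~\ref{cl: tilde norm}, together with the fact that $H := \nabla^2\phi(p)$ is symmetric and positive semidefinite (which holds because $\phi$ is convex, being the sum of $\sum_j p_j$ and the convex conjugates $g_i^*$). Fix $p\in P$ and $x\in\RR^m$.

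First I would record a crude spectral bound on $H$: by the upper inequality of Claim~\ref{cl: Hessian operator norm}, $\langle y\mid H\mid y\rangle\le L_{\max}\,\|\tilde y\|_2^2$ for every $y$, and by Claim~\ref{cl: tilde norm} the right-hand side is at most $L_{\max}A\,\|y\|_2^2$; since $H$ is symmetric this means every eigenvalue of $H$ lies in $[0, L_{\max}A]$, so $\|H^{1/2}\|_{\mathrm{op}}\le\sqrt{L_{\max}A}$. Next I would write $\|Hx\|_2=\|H^{1/2}(H^{1/2}x)\|_2\le\|H^{1/2}\|_{\mathrm{op}}\cdot\|H^{1/2}x\|_2$ and bound the second factor using Claim~\ref{cl: Hessian operator norm} once more: $\|H^{1/2}x\|_2^2=\langle x\mid H\mid x\rangle\le L_{\max}\,\|\tilde x\|_2^2$. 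Combining, $\|Hx\|_2\le\sqrt{L_{\max}A}\cdot\sqrt{L_{\max}}\,\|\tilde x\|_2=L_{\max}\sqrt{A}\,\|\tilde x\|_2$. Squaring and setting $\lambda_{\max}=L_{\max}(b,c,\rho)\cdot\sqrt{A}$ — a constant depending only on $a,b,c,\rho$ — yields $\|\nabla^2\phi(p)\mid x\rangle\|_2^2\le\lambda_{\max}^2\,\|\tilde x\|_2^2$, as required. Equivalently, since $H\succeq 0$ one may argue by Cauchy--Schwarz for the semi-inner product $(u,v)\mapsto\langle u\mid H\mid v\rangle$: $\|Hx\|_2=\sup_{\|y\|_2\le 1}\langle y\mid H\mid x\rangle\le\sup_{\|y\|_2\le 1}\sqrt{\langle y\mid H\mid y\rangle}\cdot\sqrt{\langle x\mid H\mid x\rangle}$, which gives the same estimate.

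I do not expect any genuine obstacle here; the statement is essentially a two-line consequence of the two preceding claims. The only points that warrant a word of justification are the positive semidefiniteness of $\nabla^2\phi(p)$ (which is what makes $H^{1/2}$ available and the Cauchy--Schwarz step legitimate) and the observation that only the upper inequality $\|\tilde y\|_2^2\le A\|y\|_2^2$ of Claim~\ref{cl: tilde norm} is used — the lower inequality involving $a_{\min}$ is not needed, precisely because the target bound is already phrased in terms of $\|\tilde x\|_2$ rather than $\|x\|_2$.
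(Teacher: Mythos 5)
Your proof is correct and follows essentially the same route as the paper: both bound $\|\nabla^2\phi(p)\mid x\rangle\|_2^2$ by the operator norm of the (positive semidefinite) Hessian times $\langle x\mid\nabla^2\phi(p)\mid x\rangle$, then apply Claim~\ref{cl: Hessian operator norm} together with Claim~\ref{cl: tilde norm} to get $\lambda_{\max}^2 = A\cdot L_{\max}^2$. Your explicit justification of positive semidefiniteness (which the paper leaves implicit in its $\max_{y\ne\vec 0}$ step) is a welcome clarification, but the argument is the same.
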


\begin{proof}
First notice that
$\|\nabla^2\phi(p)\mid x\rangle\|_2^2\le\max_{y\ne \vec{0}}
\frac{\langle y\mid \nabla^2\phi(p)\mid y\rangle}{\|y\|_2^2}\cdot
\langle x\mid \nabla^2\phi(p)\mid x\rangle$. By
Claim~\ref{cl: Hessian operator norm},
$\langle y\mid \nabla^2\phi(p)\mid y\rangle\le L_{\max}\cdot \|\tilde{y}\|_2^2$.
By Claim~\ref{cl: tilde norm},
$\|\tilde{y}\|_2^2\le A\cdot\|y\|_2^2$. Using Claim~\ref{cl: Hessian operator norm}
again, $\langle x\mid \nabla^2\phi(p)\mid x\rangle\le L_{\max}\cdot \|\tilde{x}\|_2^2$.
Summing up the inequalities,
$\|\nabla^2\phi(p)\mid x\rangle\|_2^2\le A\cdot L_{\max}^2\cdot \|\tilde{x}\|_2^2$.
\end{proof}

We are now ready to establish the monotonicity of the dual objective
function. 
%The proof is in the appendix.
\begin{claim}\label{cl: monotonicity}
For sufficiently small $\veps$,
the sequence $\phi(p^t)$, $t=0,1,2,\dots$, is monotonically
non-increasing.
\end{claim}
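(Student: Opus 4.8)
The plan is to mimic the gradient-descent estimate in the proof of Theorem~\ref{thm: strongly convex}, with the two-sided bounds of Claims~\ref{cl: tilde norm} and~\ref{cl: Hessian operator norm} playing the role that strong convexity plays there. As a preliminary I would record that the region $P$ --- the intersection of the affine slice $B$, the nonnegative orthant, and the linear constraints of Claim~\ref{cl: price bounds} --- is convex, that every iterate $p^t$ lies in $P$ (Proposition~\ref{pr: tatonnement p} together with Claim~\ref{cl: price bounds}), and that on $P$ every $\tilde p_i^J$ is bounded away from $0$, so there is an open neighborhood of the whole trajectory on which the explicit Hessian formula is valid and $\phi$ is $C^\infty$.

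Fix $t\ge 1$ and expand $\phi$ to second order around $p^{t-1}$ with a Lagrange remainder at an interpolation point $q$ on the segment $[p^{t-1},p^t]$, which lies in $P$ by convexity:
\[
\phi(p^t)=\phi(p^{t-1})+\langle\nabla\phi(p^{t-1}), p^t-p^{t-1}\rangle
+\tfrac12\,\langle p^t-p^{t-1}\mid\nabla^2\phi(q)\mid p^t-p^{t-1}\rangle .
\]
Lemma~\ref{lm: gradient} gives $\nabla\phi(p^{t-1})=-z(p^{t-1})$, and the update rule says that the $j$-th coordinate of $x:=p^t-p^{t-1}$ equals $\veps\,p^{t-1}_j z_j(p^{t-1})$; hence the linear term equals exactly $-\veps\sum_j p^{t-1}_j z_j(p^{t-1})^2$, the negative quantity that the remainder must not overwhelm.

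For the quadratic remainder I would apply Claim~\ref{cl: Hessian operator norm} to get $\langle x\mid\nabla^2\phi(q)\mid x\rangle\le L_{\max}\,\|\tilde x\|_2^2$, and then Claim~\ref{cl: tilde norm} to get $\|\tilde x\|_2^2\le A\,\|x\|_2^2=A\veps^2\sum_j (p^{t-1}_j)^2 z_j(p^{t-1})^2$. The one point that needs care --- and the reason no per-good lower bound on prices is required --- is that $p^{t-1}\in B$ forces $\|p^{t-1}\|_1=1$, hence $p^{t-1}_j\le 1$ and therefore $(p^{t-1}_j)^2\le p^{t-1}_j$; thus the remainder is at most $\tfrac12 A L_{\max}\veps^2\sum_j p^{t-1}_j z_j(p^{t-1})^2$. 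Combining the two estimates, $\phi(p^t)-\phi(p^{t-1})\le -\veps\bigl(1-\tfrac12 A L_{\max}\veps\bigr)\sum_j p^{t-1}_j z_j(p^{t-1})^2$, which is $\le 0$ as soon as $\veps\le 2/(A L_{\max})$ (a threshold compatible with the standing assumption $\veps<\tfrac12$), so any $\veps$ below this bound proves the claim. The only real obstacle is the bookkeeping needed to keep $q$ inside the region where Claim~\ref{cl: Hessian operator norm} is available --- that is, the convexity of $P$ and the positivity of the $\tilde p_i^J$ along the segment --- after which everything reduces to the arithmetic above.
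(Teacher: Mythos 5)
Your proof is correct and reaches the same threshold scale $\veps=\Theta(1/(A L_{\max}))$ as the paper, but by a somewhat different decomposition. You expand $\phi(p^t)$ around the previous iterate (the standard descent-lemma direction, exactly as in the proof of Theorem~\ref{thm: strongly convex}) and bound the Lagrange remainder directly by $\tfrac12 A L_{\max}\veps^2\sum_j p^{t-1}_j z_j(p^{t-1})^2$, via Claim~\ref{cl: Hessian operator norm}, Claim~\ref{cl: tilde norm}, and the observation $(p^{t-1}_j)^2\le p^{t-1}_j$ (valid because $p^{t-1}\in B$). The paper instead expands $\phi(p^t)$ around the \emph{next} iterate, discards the quadratic term by convexity of $\phi$ on $P$, and is left with the cross term $\langle q^t\mid\nabla\phi(p^{t+1})-\nabla\phi(p^t)\rangle$, i.e.\ the change in demand between consecutive steps, which it controls by a line integral of the Hessian along the segment, Claim~\ref{cl: Hessian operator norm}, and a Cauchy--Schwartz step using $\tilde{p}_i^J\le 1$. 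In substance both arguments bound the same object---the Hessian quadratic form at a point of the segment between consecutive price vectors, evaluated at the update vector $q^t$---against the dissipation term $\veps\sum_j p^t_j z_j(p^t)^2$; your route is the more direct of the two, avoids the line-integral detour, and loses nothing (your condition $\veps\le 2/(A L_{\max})$ matches the paper's $\veps\le 1/(A L_{\max})$ up to a factor of $2$). Your care about the interpolation point staying in the convex region $P$ cut out by Proposition~\ref{pr: tatonnement p} and Claim~\ref{cl: price bounds}, so that Claim~\ref{cl: Hessian operator norm} is applicable there, is precisely the point the paper also relies on (implicitly, when it asserts $p\in P$ for its own interpolation point); note that the per-good price lower bounds you say are not needed do still enter both arguments indirectly, through the dependence of $L_{\max}$ on the bounds of Claim~\ref{cl: price bounds}.
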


\begin{proof}
Write $p^{t+1} = p^t - q^t$, where for every $j=1,2,\dots,m$,
$q^t_j = -\veps p^t_j z_j(p^t)$. By Lemma~\ref{lm: gradient},
$q^t_j = \veps p^t_j (\nabla\phi(p^t))_j$. Consider the second
order Taylor expansion of $\phi(p^t)$ with respect to $\phi(p^{t+1})$.
$$
\phi(p^t) = \phi(p^{t+1}) + \langle q^t \mid \nabla\phi(p^{t+1})\rangle +
\frac 1 2 \langle q^t \mid \nabla^2\phi(p) \mid q^t \rangle,
$$
where $p = \gamma p^t + (1-\gamma) p^{t+1}$ for some $\gamma\in [0,1]$.
As $\phi$
is a convex function on $P$ and $p\in P$, the quadratic term in the Taylor
expansion is non-negative. Thus, our proof is complete if we show that
the linear term is also non-negative.

Write $\nabla\phi(p^{t+1}) = \nabla\phi(p^t) + (\nabla\phi(p^{t+1}) - \nabla\phi(p^t))$.
Thus,
\begin{eqnarray*}
\langle q^t \mid \nabla\phi(p^{t+1})\rangle & = &
   \langle q^t \mid \nabla\phi(p^t)\rangle +
   \langle q^t \mid \nabla\phi(p^{t+1} - \nabla\phi(p^t))\rangle \\
& = & \veps\cdot\sum_{j=1}^m p^t_j (\nabla\phi(p^t))_j^2 -
    \veps\cdot\sum_{j=1}^m p^t_j (\nabla\phi(p^t))_j \sum_{i=1}^n (x_{ij}^{t+2} - x_{ij}^{t+1}).
\end{eqnarray*}
Let $f(p)\in\RR_+^m$ denote the vector of total demands for the goods
induced by $x(p)$. In particular, $f(p^t) = \sum_{i=1}^n x_i^{t+1}$.
In order to complete the proof we show that for sufficiently small $\veps$,
$$
\sum_{j=1}^m p^t_j (\nabla\phi(p^t))_j (f_j(p^{t+1}) - f_j(p^t))\le
\sum_{j=1}^m p^t_j (\nabla\phi(p^t))_j^2.
$$

Let $p(\gamma) = (1 - \gamma) p^t + \gamma p^{t+1}$, for $\gamma\in [0,1]$.
By the fundamental theorem of line integrals,
$$
f_j(p^{t+1}) - f_j(p^t)
= \int_0^1 \langle \nabla f_j(p(\gamma))\mid p^{t+1}-p^t \rangle d\gamma
= \int_0^1 \langle -\nabla f_j(p(\gamma))\mid q^t \rangle d\gamma.
$$
Using Lemma~\ref{lm: gradient},
$-(\nabla f_j(p(\gamma)))_{j'} = (\nabla^2\phi(p(\gamma)))_{j,j'}$.
We get that
\begin{eqnarray*}
               \sum_{j=1}^m p^t_j (\nabla\phi(p^t))_j (f_j(p^{t+1}) - f_j(p^t)) 
& = & \sum_{j=1}^m p^t_j (\nabla\phi(p^t))_j 
      \int_0^1 \langle -\nabla f_j(p(\gamma))\mid q^t \rangle d\gamma \\
& = & \veps\cdot\sum_{j=1}^m p^t_j (\nabla\phi(p^t))_j \int_0^1 \sum_{j'=1}^m
         (\nabla^2\phi(p(\gamma)))_{j,j'}\cdot p^t_{j'}(\nabla\phi(p^t))_{j'} d\gamma \\
& = & \veps\cdot\int_0^1  \sum_{j=1}^m\sum_{j'=1}^m
    p^t_j (\nabla\phi(p^t))_j p^t_{j'}(\nabla\phi(p^t))_{j'} (\nabla^2\phi(p(\gamma)))_{j,j'} d\gamma. 
\end{eqnarray*}
This implies that there exists $\gamma\in [0,1]$ such that for $p = p(\gamma)$,
$$
\sum_{j=1}^m p^t_j (\nabla\phi(p^t))_j (f_j(p^{t+1}) - f_j(p^t)) \le
\veps\cdot\sum_{j=1}^m\sum_{j'=1}^m
    p^t_j (\nabla\phi(p^t))_j p^t_{j'}(\nabla\phi(p^t))_{j'} (\nabla^2\phi(p))_{j,j'}
$$
Define $x$ by putting $x_j = p^t_j(\nabla\phi(p^t))_j$, for all $j$.
By Claim~\ref{cl: Hessian operator norm}, there is
$L_{\max} = L_{\max}(b,c,\rho) > 0$ such that
for every $x$,
$\langle x\mid \nabla^2\phi(p) \mid x\rangle \le
L_{\max}\cdot \|\tilde{x}\|_2^2$.
Recall that $A = \max_j \sum_{i,J} a_{ij}^J$.
We get
\begin{eqnarray*}
\sum_{j=1}^m p^t_j (\nabla\phi(p^t))_j (f_j(p^{t+1}) - f_j(p^t)) & \le &
         \veps\cdot\langle x\mid \nabla^2\phi(p) \mid x\rangle \\
& \le & \veps\cdot L_{\max}\cdot\|\tilde{x}\|_2^2 \\
& = & \veps\cdot L_{\max}\cdot\sum_{i,J}
      \left(\sum_{j=1}^m a_{ij}^J p^t_j(\nabla\phi(p^t))_j\right)^2 \\
& \le & \veps\cdot L_{\max}\cdot\sum_{i,J}
      \sum_{j=1}^m a_{ij}^J p^t_j(\nabla\phi(p^t))_j^2 \\
& = & \veps\cdot L_{\max}\cdot\sum_{j=1}^m
        p^t_j(\nabla\phi(p^t))_j^2\cdot \sum_{i,J} a_{ij}^J \\
& \le & \veps\cdot L_{\max}\cdot A\cdot \sum_{j=1}^m
        p^t_j(\nabla\phi(p^t))_j^2,
\end{eqnarray*}
where the penultimate inequality uses Cauchy-Schwartz.
Choosing $\veps\le\frac{1}{L_{\max}\cdot A}$ completes the proof.
\end{proof}

\begin{claim}\label{cl: uniformity}
A Fisher market with nested CES-Leontief utilities is
$\beta$-uniform, for $\beta$ that satisfies
$\beta(\alpha) = \Theta(\alpha^2)$.
\end{claim}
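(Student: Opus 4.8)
The plan is to combine the two-sided bound on the Hessian quadratic form from Claim~\ref{cl: Hessian operator norm} with the two-sided bound $a_{\min}^2\|x\|_2^2\le\|\tilde{x}\|_2^2\le A\|x\|_2^2$ from Claim~\ref{cl: tilde norm}. Chaining these, for every $x\in\RR^m$ and every $p\in P$ we get
\[
L_{\min}\, a_{\min}^2\, \|x\|_2^2 \;\le\; \langle x\mid\nabla^2\phi(p)\mid x\rangle \;\le\; L_{\max}\, A\, \|x\|_2^2,
\]
so on $P$ the eigenvalues of $\nabla^2\phi(p)$ lie in a fixed interval $[\lambda_{\min},\lambda_{\max}]$ with $\lambda_{\min}=L_{\min}a_{\min}^2>0$ and $\lambda_{\max}=L_{\max}A$; in particular $\phi$ is strongly convex and has Lipschitz gradient on $P$. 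This is exactly condition ($ii$) of Theorem~\ref{thm: strongly convex}, and we have already verified in Claim~\ref{cl: price bounds} and Proposition~\ref{pr: tatonnement p} that the tâtonnement sequence stays in $P$ and that $P$ has the price bounds required by condition ($i$) (with $p_{\min}$ the lower bound from Claim~\ref{cl: price bounds} and $p_{\max}\le 1$).

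Next I would invoke Lemma~\ref{lm: price convergence}, whose hypotheses are precisely the strong convexity and Lipschitz-gradient conditions just established. That lemma says that whenever $\phi(p)\le\phi(p^*)+\min\{1,1/\lambda_{\max}^2\}\cdot\tfrac12\lambda_{\min}\delta^2$ one has $|z_j(p)-z_j(p^*)|\le\delta$ for all $j$. Rewriting this in the language of Definition~\ref{def: uniform}: set
\[
\beta(\alpha) \;=\; \min\!\left\{1,\tfrac{1}{\lambda_{\max}^2}\right\}\cdot\tfrac12\,\lambda_{\min}\,\alpha^2 ,
\]
so that $\phi(p)\le\phi(p^*)+\beta(\alpha)$ implies $|z_j(p)-z_j(p^*)|\le\alpha$ for every $j$ and every $p\in P$. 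Thus the market is $\beta$-uniform in $P$ with this quadratic $\beta$, i.e.\ $\beta(\alpha)=\Theta(\alpha^2)$, as claimed. (Strictly, Lemma~\ref{lm: price convergence} is stated for $\delta>0$; the cases $\alpha=0$ and large $\alpha$ are trivial since $z$ is continuous and, on the bounded region $P$, bounded.)

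The only genuine subtlety — and the step I expect to be the main obstacle to keep honest — is making sure the constants in Claim~\ref{cl: Hessian operator norm} are uniform over the region $P$ defined just before Claim~\ref{cl: Hessian operator norm}, namely the intersection of the constraints of Proposition~\ref{pr: tatonnement p} and Claim~\ref{cl: price bounds}. Concretely one must check that $P$ is closed, convex, contains the equilibrium vector $p^*$ and the whole sequence $p^0,p^1,\dots$, and that the lower bounds $\tilde p_i^J\ge\gamma_i$ of Claim~\ref{cl: price bounds} together with $\|\tilde p\|_\infty\le 1$ are exactly what Claim~\ref{cl: Hessian operator norm} uses to produce $L_{\min},L_{\max}>0$; all of this is already in place in the excerpt, so the argument is essentially a bookkeeping assembly of Claims~\ref{cl: tilde norm}, \ref{cl: price bounds}, \ref{cl: Hessian operator norm} and Lemma~\ref{lm: price convergence}. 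No new estimates are needed.
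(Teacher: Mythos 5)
Your assembly has a genuine gap at the very step you flag as "bookkeeping": the transfer from the seminorm $\|\tilde{x}\|_2$ to the Euclidean norm. The chaining $\lambda_{\min}=L_{\min}a_{\min}^2>0$ leans on the lower bound of Claim~\ref{cl: tilde norm}, but that lower bound fails for sign-mixed vectors: take a single object $J=\{1,2\}$ with $a_{i1}^J=a_{i2}^J$ and $v=(1,-1)$; then $\tilde{v}_i^J=0$ while $\|v\|_2>0$ (the paper never actually uses that direction of the claim). This is not a repairable constant-chasing issue but a structural one: for nested CES-Leontief utilities $\phi$ depends on $p$ only through the linear forms $\tilde{p}_i^J$ (plus the linear term $\sum_j p_j$), so whenever the map $v\mapsto\tilde{v}$ has a nontrivial kernel the Hessian $\nabla^2\phi(p)$ is singular on $P$ and $\phi$ is \emph{not} strongly convex there --- in the two-good example above every point of the segment $p_1+p_2=1$ is an equilibrium price vector. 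This is exactly why the paper states that Leontief-type utilities do not satisfy the hypotheses of Theorem~\ref{thm: strongly convex}, and why Section~\ref{sec: preliminary} builds the $\beta$-uniformity/multiplicative-weights machinery in the first place; invoking Lemma~\ref{lm: price convergence}, which is stated "under the same assumptions as Theorem~\ref{thm: strongly convex}" (and which also assumes $\nabla\phi(p^*)=\vec{0}$, false at boundary optima with $p^*_j=0$), is therefore not available here.

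The paper's own proof avoids the Euclidean norm altogether and keeps both sides of the argument in the seminorm $\|\widetilde{\,\cdot\,}\|_2$. It Taylor-expands $\phi$ at $p^*$, disposes of the first-order term by the sign argument ($(\nabla\phi(p^*))_j=0$ unless $p^*_j=0$, in which case both $(\nabla\phi(p^*))_j\ge 0$ and $p_j-p^*_j\ge 0$), and lower-bounds the quadratic term by $\frac12 L_{\min}\|\widetilde{p-p^*}\|_2^2$ via Claim~\ref{cl: Hessian operator norm}; this gives $\|\widetilde{p-p^*}\|_2^2\le 2\kappa\delta^2/L_{\min}$. Then, for the excess demands, it uses the gradient theorem and Cauchy--Schwartz together with Corollary~\ref{cor: Hessian operator norm}, which bounds $\|\nabla^2\phi(q)\mid p-p^*\rangle\|_2$ by $\lambda_{\max}\|\widetilde{p-p^*}\|_2$ --- again in the seminorm --- yielding $|z_j(p)-z_j(p^*)|\le\lambda_{\max}\|\widetilde{p-p^*}\|_2$ and hence $\beta(\alpha)=\kappa\alpha^2$ with $\kappa=L_{\min}/(2\lambda_{\max}^2)$. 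The key idea your proposal misses is that uniformity only requires controlling $z(p)-z(p^*)$, and both the descent in $\phi$ and the change in demand are governed by $\|\widetilde{p-p^*}\|_2$, so no lower bound relating $\|\tilde{x}\|_2$ to $\|x\|_2$ (i.e., no strong convexity) is ever needed.
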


\begin{proof}
Let $p^*\in P$ be a dual optimal solution. We show that
there exists
a constant $\kappa > 0$ such that for every $\delta > 0$ 
and for every $p\in P$, 
if $\phi(p)\le \phi(p^*) + \kappa\cdot\delta^2$,
then $|z(p) - z(p^*)|_\infty\le \delta$.

Consider the second order Taylor expansion of $\phi(p)$
with respect to $\phi(p^*)$:
$$
\phi(p) = \phi(p^*) + \langle p - p^* \mid \nabla\phi(p^*) \rangle +
  \frac 1 2 \langle p - p^* \mid \nabla^2\phi(p') \mid p - p^* \rangle.
$$
Notice that $(\nabla\phi(p^*))_j = 0$ unless $p^*_j = 0$, in which
case both $(\nabla\phi(p^*))_j\ge 0$ and $p_j - p^*_j\ge 0$, so
$\langle p - p^* \mid \nabla\phi(p^*) \rangle \ge 0$.
We get that for a constant $L_{\min} > 0$,
$$
\kappa \delta^2 \ge \phi(p) - \phi(p^*) \ge
\frac 1 2 \langle p - p^* \mid \nabla^2\phi(p') \mid p - p^* \rangle \ge
\frac 1 2\cdot L_{\min}\cdot \|\widetilde{p - p^*}\|_2^2,
$$
where the last inequality follows from Claim~\ref{cl: Hessian operator norm}.
Thus, $\|\widetilde{p - p^*}\|_2^2\le\frac{2\kappa \delta^2}{L_{\min}}$.
On the other hand, $|z_j(p) - z_j(p^*)| = |f_j(p) - f_j(p^*)|\le \|f(p) - f(p^*)\|_2$.
Consider $q(\gamma) = \gamma p + (1-\gamma) p^*$. By the fundamental
theorem of line integrals,
$f_j(p) - f_j(p^*) = \int_0^1 \langle \nabla(f_j(q(\gamma)))\mid p - p^*\rangle d\gamma$.
Therefore,
$$
\|f(p) - f(p^*)\|_2^2 = \sum_j \left(
    \int_0^1 \langle \nabla(f_j(q(\gamma)))\mid p - p^*\rangle d\gamma\right)^2
\le \sum_j \int_0^1\left(\langle \nabla(f_j(q(\gamma)))\mid p - p^*\rangle\right)^2 d\gamma.
$$
(The inequality is a Cauchy-Schwartz argument---consider the random variable
$\langle \nabla(f_j(q(\gamma)))\mid p - p^*\rangle$ on $\gamma\in [0,1]$ endowed
with the uniform probability measure.)
Therefore, there exists $\gamma\in [0,1]$ such that for $q = q(\gamma)$ we
have
$$
\|f(p) - f(p^*)\|_2^2\le \sum_j \left(\langle \nabla(f_j(q))\mid p - p^*\rangle\right)^2
= \|\nabla^2\phi(q) \mid p - p^*\rangle\|_2^2
\le \lambda_{\max}^2\cdot \|\widetilde{p - p^*}\|_2^2\le
\frac{2\kappa \delta^2\lambda_{\max}^2}{L_{\min}},
$$
where the last inequality follows from Corollary~\ref{cor: Hessian operator norm}
and $\lambda_{\max}$ is the constant stipulated by that corollary.
To complete the proof, choose $\kappa = \frac{L_{\min}}{2\lambda_{\max}^2}$.
\end{proof}

We are now ready for the proof of convergence of
t\^{a}tonnement in Fisher markets with nested CES-Leontief
utilities.
\begin{theorem}\label{thm: CES-Leontief}
There are constants $\kappa_1 = \kappa_1(a,b,c,\rho)$
and $\kappa_2 = \kappa_2(a,b,c,\rho)$ such that the
following holds. If $\veps\le \frac{1}{\kappa_1}$, then for 
every $\delta > 0$ and for every
$T\ge \frac{\kappa_2\ln(1/p^0_{\min})}{\veps^2\delta^3}$,
the price-demand pair
$(p^T,x^{T+1})$ is a $\delta$-approximate equilibrium
in the sense of Definition~\ref{def: approx-equilib 1}.
\end{theorem}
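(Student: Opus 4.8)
The plan is to obtain Theorem~\ref{thm: CES-Leontief} by verifying the two hypotheses of Lemma~\ref{lm: main} and then invoking that lemma: Condition~C1 (monotonicity of $\phi(p^0),\phi(p^1),\dots$) is exactly Claim~\ref{cl: monotonicity}, and Condition~C2 (the windowed average of each $z_j$ is at most $\beta(\delta/3)$) will come from Lemma~\ref{lm: better mwum} fed with the $\beta$-uniformity supplied by Claim~\ref{cl: uniformity}. First I would pin down the region: let $P$ be the set of price vectors satisfying the constraints of Proposition~\ref{pr: tatonnement p} and Claim~\ref{cl: price bounds} (as defined just before Claim~\ref{cl: Hessian operator norm}). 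Assuming the standard setup $p^0\in\RR_{++}^m\cap P$ (achievable as in the robustness discussion, or taken as a hypothesis), Proposition~\ref{pr: tatonnement p} and Claim~\ref{cl: price bounds} guarantee $p^t\in P$ for all $t$. I would then record that the excess demands are uniformly bounded on $P$: using the explicit per-object demand $\frac{b_i}{\tilde p_i^J}\cdot\frac{(c_i^J/\tilde p_i^J)^{\rho/(1-\rho)}}{\sum_{J'\in{\cal J}_i}(c_i^{J'}/\tilde p_i^{J'})^{\rho/(1-\rho)}}$ together with the lower bound on $\tilde p_i^J$ from Claim~\ref{cl: price bounds} and the trivial bound $\|\tilde p\|_\infty\le 1$, the quantity $W=\max\{1,\max_{p\in P}\max_j z_j(p)\}$ is a finite constant depending only on $a,b,c,\rho$; likewise $w=\max_{j,t}|z_j(p^t)|\le W$ since $z_j\ge -1$ always.

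Next I would fix the constant $\kappa=\kappa(a,b,c,\rho)>0$ from Claim~\ref{cl: uniformity}, so the market is $\beta$-uniform with $\beta(\alpha)=\Theta(\alpha^2)$ and in particular $\beta(\alpha)\ge\kappa\alpha^2$ for all $\alpha$. Define $\kappa_1=\kappa_1(a,b,c,\rho)$ large enough that $\veps\le 1/\kappa_1$ forces every smallness requirement used below: $\veps<\tfrac12$ (the process), $\veps\le 1/(L_{\max}A)$ with $L_{\max}$ from Claim~\ref{cl: Hessian operator norm} and $A=\max_j\sum_{i,J}a_{ij}^J$ (so Claim~\ref{cl: monotonicity} applies and gives Condition~C1), and $\veps\le 1/(2W)$ together with $\veps\le\kappa/64$ (so Lemma~\ref{lm: better mwum} applies). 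With these in force, Lemma~\ref{lm: better mwum} applied with $\delta/3$ in place of $\delta$ yields: for every $T'$ with $T'+1\ge\frac{27\,W\ln(1/p^0_{\min})}{8\veps^2\delta^3}$, one has $\frac{1}{T'+1}\sum_{t=0}^{T'}z_j(p^t)\le\beta(\delta/3)$ for all $j$.

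It then remains to absorb constants so that Condition~C2 holds in the required window. The window in C2 consists of the indices $T'=T-\lceil 3\ln(1/\delta)/\delta\veps\rceil,\dots,T$, whose width $\lceil 3\ln(1/\delta)/\delta\veps\rceil$ is of lower order in $1/\veps$ and $1/\delta$ than $1/(\veps^2\delta^3)$; hence choosing $\kappa_2=\kappa_2(a,b,c,\rho)$ large enough (absorbing the factor $27W/8$ and a slack term covering the window width) ensures that for $T\ge\kappa_2\ln(1/p^0_{\min})/(\veps^2\delta^3)$ even the smallest index in the window exceeds the threshold above, so the displayed inequality of C2 holds for every $T'$ in the window and every $j$. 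Lemma~\ref{lm: main} then gives that $(p^T,x^{T+1})$ is a $\delta$-approximate equilibrium in the sense of Definition~\ref{def: approx-equilib 1}. The substantive analytic content has already been discharged in Claims~\ref{cl: price bounds}, \ref{cl: Hessian operator norm}, \ref{cl: monotonicity}, \ref{cl: uniformity} and Lemma~\ref{lm: better mwum}; the only delicate point in the present argument is precisely this last bookkeeping — confirming $W<\infty$ and checking that the separate smallness conditions and the window threshold collapse into a single pair $\kappa_1,\kappa_2$ depending only on $a,b,c,\rho$.
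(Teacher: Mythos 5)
Your proposal is correct and follows essentially the same route as the paper's own (much terser) proof: Claim~\ref{cl: monotonicity} and Claim~\ref{cl: uniformity} supply the hypotheses of Lemma~\ref{lm: better mwum}, whose conclusion together with the two claims establishes conditions C1 and C2 of Lemma~\ref{lm: main}, which yields the theorem. The extra bookkeeping you carry out (membership of the trajectory in $P$, finiteness of $W$, the $\delta/3$ substitution, and absorbing the window width into $\kappa_2$) is exactly the detail the paper leaves implicit, and it is handled correctly.
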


\begin{proof}
By Claim~\ref{cl: uniformity} and Claim~\ref{cl: monotonicity},
the conditions stated in Lemma~\ref{lm: better mwum} are
satisfied, so the conclusion of the lemma holds. The two
claims and the lemma together establish the conditions
stated in Lemma~\ref{lm: main}, which in turn shows the
convergence claim and rate stated in the theorem.
\end{proof}

%%%%%%%%%%%%%%%%%%%%%%%%%%%%%%%%%%%%%%%%%%
%%%%%%%%%    Additional Results
%%%%%%%%%%%%%%%%%%%%%%%%%%%%%%%%%%%%%%%%%%
\section{Additional Results}\label{sec: additional}

A market with resource allocation utilities is similar to
a market with nested CES-Leontief utilities, except
that $\rho$ is set to $1$ in the case of resource allocation
utilities. In other words, the utility functions are given by
$u_i(x) = \sum_{J\in{\cal J}_i} c_i^J \min_{j\in J_i}
\left\{\frac{x_{ij}^J}{a_{ij}^J}\right\}$.
Resource allocation markets  generalize both
Leontief utilities and linear utilities. The
t\^{a}tonnement process is not known to converge
in the case of linear utilities (and in fact is unlikely
to converge in that case), so we need to apply the
process to distorted utilities (see~\cite{CF08} for
the case of linear utilities).
Notice that the reactions of the buyers are
assumed to be optimal with respect to the distorted utilities
and not the original utilities. (In the proportional response
dynamics that converge to equilibrium in the case of linear
utilities~\cite{BDX11} the agents also do not respond
optimally to the prices.)

We will replace the utility function of each agent
by a nested CES-Leontif utility. This is detailed in
the proof of the following theorem that analyzes
the distorted utilities process.
\begin{theorem}\label{thm: resource allocation}
Let $k = \max_i |{\cal J}_i|$.
For every $\delta > 0$ there are constants $\kappa_0 = \kappa_0(c)$,
$\kappa_1 = \kappa_1(a,b)$,
and $\kappa_2 = \kappa_2(a,b,c)$ such that the following holds.
For $\veps\le \frac{1}{\kappa_0^{\log^2 k/\delta^2}\kappa_1^{\log k/\delta}}$,
and $T\ge \frac{\kappa_2\ln(1/p^0_{\min})}{\veps^2\delta^3}$,
the price-demand pair
$(p^T,x^{T+1})$ is a $\delta$-approximate equilibrium
in the sense of Definition~\ref{def: approx-equilib 2}.
\end{theorem}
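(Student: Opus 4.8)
The idea is to run the t\^atonnement process of Section~\ref{sec: tatonnement} on a \emph{distorted} market in which each agent's resource allocation utility is replaced by a nested CES-Leontief utility with an exponent $\rho=\rho(k,\delta)\in(0,1)$ very close to $1$, apply Theorem~\ref{thm: CES-Leontief} there, and transfer the conclusion back. Concretely, for agent $i$ put $\hat u_i(x)=\bigl(\sum_{J\in{\cal J}_i}(c_i^J\min_{j\in J}\{x_{ij}^J/a_{ij}^J\})^\rho\bigr)^{1/\rho}$, keeping the objects ${\cal J}_i$ and the coefficients $a_{ij}^J$, and (for the chosen $\rho$) rescale the $c_i^J$ to the normalization $\sum_{J\in{\cal J}_i}(c_i^J)^{\rho/(1-\rho)}=1$ of Section~\ref{sec: nested}; this does not change the agents' reactions. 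By hypothesis the agents respond optimally to the distorted utilities, so the process is exactly the nested CES-Leontief t\^atonnement of Section~\ref{sec: nested}, and Theorem~\ref{thm: CES-Leontief} (with Claims~\ref{cl: price bounds}, \ref{cl: Hessian operator norm}, \ref{cl: monotonicity}, \ref{cl: uniformity} and Lemmas~\ref{lm: better mwum}, \ref{lm: main}) applies verbatim, once $p^0$ is arranged to satisfy the bounds of Claim~\ref{cl: price bounds} (which holds after one step, as in Section~\ref{sec: nested}).

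\emph{Choosing $\rho$.} For $y\in\RR_+^{{\cal J}_i}$ with $|{\cal J}_i|\le k$ the power-mean inequality gives $\|y\|_1\le\|y\|_\rho\le k^{(1-\rho)/\rho}\|y\|_1$; applied to $y_J=c_i^J\min_{j\in J}\{x_{ij}^J/a_{ij}^J\}$ this yields $u_i(x)\le\hat u_i(x)\le k^{(1-\rho)/\rho}u_i(x)$ for every allocation. Hence, writing $\hat x_i(p)$ and $x_i(p)$ for the optimal baskets of $\hat u_i$ and $u_i$ over the budget set at prices $p$,
\[
u_i(\hat x_i(p))\ \ge\ k^{-(1-\rho)/\rho}\hat u_i(\hat x_i(p))\ \ge\ k^{-(1-\rho)/\rho}\hat u_i(x_i(p))\ \ge\ k^{-(1-\rho)/\rho}u_i(x_i(p)).
\]
We pick $\rho$ so that $k^{(1-\rho)/\rho}\le 1+\delta/2$, i.e. $\rho=\tfrac{2\ln k}{2\ln k+\delta}$, so that $\tfrac{1}{1-\rho}=1+\tfrac{2\ln k}{\delta}=\Theta(\log k/\delta)$; then $\hat x_i(p)$ is a $(1-\delta/2)$-approximately optimal basket for the \emph{original} $u_i$ at any prices.

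\emph{Transferring the equilibrium.} Apply Theorem~\ref{thm: CES-Leontief} to the distorted market with accuracy $\delta'=\delta/8$: for $\veps\le 1/\kappa_1$ and $T\ge\kappa_2\ln(1/p^0_{\min})/(\veps^2\delta'^3)$ the pair $(p^T,x^{T+1})$ with $x^{T+1}=\hat x(p^T)$ satisfies Definition~\ref{def: approx-equilib 1} for the distorted market: $z_j(p^T)\le\delta'$ for all $j$, and $z_j(p^T)<-\delta'$ implies $p^T_j\le\delta'$, where $z_j(p^T)=\sum_i x^{T+1}_{ij}-1$. Report the pair $(p^T,\bar x^{T+1})$ with $\bar x^{T+1}=\tfrac{1}{1+\delta'}x^{T+1}$. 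Then $\sum_i\bar x^{T+1}_{ij}=\tfrac{1+z_j(p^T)}{1+\delta'}\le 1$, giving P2 of Definition~\ref{def: approx-equilib 2}. Since $u_i$ is positively homogeneous of degree one, $u_i(\bar x^{T+1}_i)=\tfrac{1}{1+\delta'}u_i(\hat x_i(p^T))\ge\tfrac{1}{(1+\delta/8)(1+\delta/2)}u_i(x_i(p^T))\ge(1-\delta)u_i(x_i(p^T))$ for $\delta$ small, which is P1. Finally, the excess demand of $\bar x^{T+1}$ at $j$ is $\tfrac{z_j(p^T)-\delta'}{1+\delta'}$, and if this is $<-\delta$ then $z_j(p^T)<-\delta'$, so $p^T_j\le\delta'<\delta$, which is P3. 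Thus $(p^T,\bar x^{T+1})$ is a $\delta$-approximate equilibrium of the original market in the sense of Definition~\ref{def: approx-equilib 2}.

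\emph{The constants and the main obstacle.} What remains is to read off how $\kappa_1,\kappa_2$ of Theorem~\ref{thm: CES-Leontief} depend on $\rho$ as $\rho\uparrow 1$. In Claim~\ref{cl: Hessian operator norm} (the $\rho>0$ branch) the bounds $L_{\min},L_{\max}$ carry factors $2^{\Theta(1/(1-\rho)^2)}$ together with $b_{\min}^{\Theta(1/(1-\rho))}$ and $c_{\min}^{\Theta(1/(1-\rho))}$, where $c_{\min}=\min_{i,J}(c_i^J)^{\rho/(1-\rho)}$ is the \emph{$\rho$-rescaled} coefficient — so, in terms of the original data, $c_{\min}^{1/(1-\rho)}$ is an original constant raised to $\Theta(1/(1-\rho)^2)$. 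Feeding these through Corollary~\ref{cor: Hessian operator norm} ($\lambda_{\max}=\sqrt A\,L_{\max}$), Claim~\ref{cl: uniformity} ($\kappa=L_{\min}/2\lambda_{\max}^2$), and the bounds $\veps\le 1/(L_{\max}A)$, $\veps\le\kappa/64$, $T=\Theta(W\ln(1/p^0_{\min})/\veps^2\delta'^3)$ from Claim~\ref{cl: monotonicity} and Lemmas~\ref{lm: better mwum}, \ref{lm: main}, and using $1/(1-\rho)=\Theta(\log k/\delta)$ (hence $1/(1-\rho)^2=\Theta(\log^2 k/\delta^2)$) and $\delta'=\delta/8$, one obtains exactly the stated form $\veps\le 1/(\kappa_0^{\log^2k/\delta^2}\kappa_1^{\log k/\delta})$ and $T\ge\kappa_2\ln(1/p^0_{\min})/(\veps^2\delta^3)$, with the base-$2$ and $c$-dependence landing in the $\log^2k/\delta^2$ exponent of $\kappa_0(c)$, the $a,b$-dependence in the $\log k/\delta$ exponent of $\kappa_1(a,b)$, and $\kappa_2(a,b,c)$ absorbing the rest. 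The transfer in the previous paragraphs is routine; the delicate point is precisely this last bookkeeping — tracking how fast the Section~\ref{sec: nested} constants degrade and, in particular, that the $\rho$-dependent rescaling of the $c_i^J$ is what moves the $c$-dependence into the $\log^2 k/\delta^2$ exponent.
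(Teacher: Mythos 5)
Your proposal follows essentially the same route as the paper's proof: distort each resource-allocation utility into a nested CES-Leontief utility with $\rho$ chosen so that $k^{(1-\rho)/\rho}\le 1+O(\delta)$ (i.e.\ $1-\rho=\Theta(\delta/\log k)$), invoke Theorem~\ref{thm: CES-Leontief} for the distorted market, scale the resulting allocation down by $1/(1+\delta')$ to meet the supply constraints, and use the power-mean comparison $u_i\le\hat u_i\le k^{(1-\rho)/\rho}u_i$ together with $1$-homogeneity to obtain P1--P3 of Definition~\ref{def: approx-equilib 2}. The only differences are immaterial choices of constants ($\rho$, $\delta'=\delta/8$ versus $\delta/2$) and your more explicit bookkeeping of how the Section~\ref{sec: nested} constants blow up as $\rho\uparrow 1$, which the paper leaves implicit in the stated form of $\kappa_0,\kappa_1,\kappa_2$.
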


\begin{proof}
We replace the utility functions by their distorted versions
$\tilde{u}_i(x) = \left(\sum_{J\in{\cal J}_i} \left(c_i^J \min_{j\in J_i}
\left\{\frac{x_{ij}^J}{a_{ij}^J}\right\}\right)^\rho\right)^{1/\rho}$,
for $\rho = 1 - \frac{\delta}{4\ln k}$. We apply
Theorem~\ref{thm: CES-Leontief} to get prices $p$ and
allocations $x$ that are a $\frac{\delta}{2}$-approximate
equilibrium in the sense of Definition~\ref{def: approx-equilib 1}
for the utilities $\tilde{u}_i$. Notice that by properties P2 and
P3 of Definition~\ref{def: approx-equilib 1}, the allocations
$\tilde{x} = \frac{1}{1+\delta/2} x$ satisfy properties P2 and P3
of Definition~\ref{def: approx-equilib 2}. Moreover, as the
distorted utility functions are $1$-homogeneous (this is
also true of resource allocation utilities),
$\tilde{u}_i(\tilde{x}) = \frac{1}{1+\delta/2} \tilde{u}_i(x)$.
To complete the proof, denote by $x^*$ the optimal
allocations with respect to the prices $p$ for the original
resource allocation utilities. We have that for every $i=1,2,\dots,n$,
$u_i(\tilde{x})\ge k^{1 - 1/\rho}\cdot \tilde{u}_i(\tilde{x}) \ge
(1 - \delta/2)\cdot \tilde{u}_i(\tilde{x}) = \frac{1-\delta/2}{1+\delta/2}\cdot
\tilde{u}_i(x)\ge \frac{1-\delta/2}{1+\delta/2}\cdot\tilde{u}_i(x^*)\ge
\frac{1-\delta/2}{1+\delta/2}\cdot u_i(x^*)\ge (1- \delta)\cdot u_i(x^*)$.
This establishes property P1 of Definition~\ref{def: approx-equilib 2}.
\end{proof}

\bibliographystyle{plain}
%\begin{thebibliography}{10}
%
%
%
%\end{thebibliography}
\bibliography{references}

\newpage
\appendix

\section*{Appendix: Proofs}

\begin{proofof}{Lemma~\ref{lm: gradient}}
By the definition of $g_i^*$,
$$
\phi(p) = \max_{x\in\RR_{++}^{n\times m}} \left\{\sum_{i=1}^n b_i \ln u_i(x_i) +
    \sum_{j=1}^m p_j \left(1 - \sum_{i=1}^n x_{ij}\right)\right\}.
$$
Let $\phi_x(q) = \sum_{i=1}^n b_i \ln u_i(x_i) +
\sum_{j=1}^m q_j \left(1 - \sum_{i=1}^n x_{ij}\right)$. This
is a linear function of $q$. Notice that
$\phi(q) = \max_x \phi_x(q)$. Fix $x$ to be a maximizing
assignment for $q = p$. By a well-known fact,
$\nabla\phi(p) = \nabla\phi_x(p)$. But $\phi_x(p)$
is a linear function of $p$ and its gradient is given
by $(\nabla\phi_x(p))_j = 1 - \sum_{i=1}^n x_{ij}$.

Now, to complete the proof, we show that the maximizing
assignment $x$ for $q=p$ is $x = x(p)$.
Notice that
$$
\arg\max_{x\in\RR_{++}^{n\times m}} \left\{\sum_{i=1}^n b_i \ln u_i(x_i) +
    \sum_{j=1}^m p_j \left(1 - \sum_{i=1}^n x_{ij}\right)\right\} =
$$
$$
\arg\max_{x\in\RR_{++}^{n\times m}} \left\{\sum_{i=1}^n b_i \ln u_i(x_i) +
    \sum_{j=1}^m p_j \left(\sum_{i=1}^n x_{ij}(p) - \sum_{i=1}^n x_{ij}\right)\right\},
$$
because the expressions on both sides of the equation differ only by an additive constant
that does not depend on $x$. Finally, notice that the solution to the right-hand
side is an equilibrium demand for the same agents, but where the supply of
each good $j$ is equal to $\sum_{i=1}^n x_{ij}(p)$. The equilibrium demand
in this market is precisely $x(p)$.
\end{proofof}

\bigskip
\begin{proofof}{Lemma~\ref{lm: mwum}}
The proof uses the standard potential function argument
analyzing the multiplicative weights update method. We
begin with the following inequality that holds for every
$t=0,1,2,\dots$.
\begin{eqnarray*}
\sum_{j=1}^m p^{t+1}_j & = &\sum_{j=1}^m p^t_j \left(1 + \veps z_j(p^t)\right) \\
& = & \sum_{j=1}^m p^t_j +
\veps\cdot\left(\sum_{j=1}^m p^t_j\right)\cdot\sum_{j=1}^m \frac{p^t_j}{\sum_{j'=1}^m p^t_{j'}}z_j(p^t) \\
& = & \left(\sum_{j=1}^m p^t_j\right)\cdot
    \left(1 + \veps\sum_{j=1}^m \frac{p^t_j}{\sum_{j'=1}^m p^t_{j'}}z_j(p^t)\right) \\
& \le & \left(\sum_{j=1}^m p^t_j\right)\cdot e^{\veps\sum_{j=1}^m p^t_j z_j(p^t) / \sum_{j'=1}^m p^t_{j'}}.
\end{eqnarray*}
So, on the one hand,
$$
\sum_{j=1}^m p^T_j \le \left(\sum_{j=1}^m p^0_j\right)\cdot
    e^{\veps\sum_{t=0}^{T-1}\sum_{j=1}^m p^t_j z_j(p^t) / \sum_{j'=1}^m p^t_{j'}}.
$$
On the other hand, for every $k\in\{1,2,\dots,m\}$,
$$
\sum_{j=1}^m p^T_j\ge  p^T_k = p^0_k\cdot\prod_{t=0}^{T-1}\left(1 + \veps z_k(p^t)\right).
$$
Taking the logarithms of the lower and upper bounds for $\sum_{j=1}^m p^T_j$,
we get that
$$
\sum_{t=0}^{T-1} \ln\left(1 + \veps z_k(p^t)\right)\le
\ln\left(\frac{\sum_{j=0}^m p^0_j}{p^0_k}\right) +
\veps\cdot\sum_{t=0}^{T-1}\sum_{j=1}^m \frac{p^t_j}{\sum_{j'=1}^m p^t_{j'}} z_j(p^t).
$$
The second term on the right-hand side
equals $0$ (see the proof of Lemma~\ref{lm: mwum response}
and recall that $z_j(p^t) = 1 - \sum_{i=1}^n x(p^t)_{ij}$).
Using the fact that $\ln(1+\xi)\ge \xi - \xi^2$ for every
$\xi\in \left[-\frac 1 2,+\frac 1 2\right]$, we get that
$$
\sum_{t=0}^{T-1} \veps z_k(p^t) - \sum_{t=0}^{T-1} \veps^2 (z_k(p^t))^2\le
  \ln\left(\frac{\sum_{j=1}^m p^0_j}{p^0_k}\right).
$$
Averaging over $t$, we get that
$$
\frac{1}{T}\cdot\sum_{t=0}^{T-1} z_k(p^t)\le
\frac{1}{T}\cdot\frac{\ln(1/p^0_k)}{\veps} +
\frac{1}{T}\cdot\sum_{t=0}^{T-1} \veps\cdot (z_k(p^t))^2\le
\frac{\ln(1/p^0_k)}{\veps T} + \veps v.
$$
(Recall that we scale $b$ so that $\sum_{j=1}^m p^0_j = \sum_{i=1}^n
b_i = 1$.)
\end{proofof}

\end{document}